\newcommand\bC{\mathbb C}
\newcommand\bQ{\mathbb Q}
\newcommand\bZ{\mathbb Z}
\newcommand\Ca{$C^*$-al\-ge\-bra}
\newcommand{\rk}{\operatorname{rk}}
\newcommand{\Tors}{\operatorname{Tors}}
\newcommand{\simple}[3]{
\xy
(0,0)*{#1}; 
(0,0)*\xycircle(3.3,3.3){-}="a"; 
(20,0)*{#2}; 
(20,0)*\xycircle(3.3,3.3){-}="b"; 
"a"; "b" **\dir{-}?(.6)*\dir{>}+(-1,5)*{#3};
\endxy
}
\newcommand{\tripleright}[4]{
\xy
(0,10)*{#1}; 
(0,10)*\xycircle(3.3,3.3){-}="a"; 
(0,-10)*{#2}; 
(0,-10)*\xycircle(3.3,3.3){-}="b"; 
(20,0)*{#3}; 
(20,0)*\xycircle(3.3,3.3){-}="c"; 
(6,0)*{}="d";
 "a";"d" **\dir{-}?(.6)*\dir{>};
 "d"; "c"**\dir{-}?(.6)*\dir{>};
 "b"; "d" **\dir{-}?(.6)*\dir{>};
(13,8)*{#4}; 
\endxy
}
\newcommand{\tripleleft}[4]{
\xy
(0,0)*{#1}; 
(0,0)*\xycircle(3.3,3.3){-}="a"; 
(20,10)*{#2}; 
(20,10)*\xycircle(3.3,3.3){-}="b"; 
(20,-10)*{#3}; 
(20,-10)*\xycircle(3.3,3.3){-}="c"; 
(14,0)*{}="d";
"a"; "d" **\dir{-}?(.6)*\dir{>};
"d"; "b" **\dir{-}?(.6)*\dir{>};
"d"; "c" **\dir{-}?(.6)*\dir{>};
(7,8)*{#4}; 
\endxy
}
\newcommand{\four}[5]{
\xy
(0,10)*{#1}; 
(0,10)*\xycircle(3.3,3.3){-}="a"; 
(0,-10)*{#2}; 
(0,-10)*\xycircle(3.3,3.3){-}="b"; 
(6,0)*{}="d";
 "a"; "d" **\dir{-}?(.6)*\dir{>};
  "b";"d" **\dir{-}?(.6)*\dir{>};
  (20,0)*{}="c";
  (26,10)*{#3}; 
(26,10)*\xycircle(3.3,3.3){-}="x"; 
(26,-10)*{#4}; 
(26,-10)*\xycircle(3.3,3.3){-}="y"; 
"c" ; "x" **\dir{-}?(.6)*\dir{>};
"c"; "y" **\dir{-}?(.6)*\dir{>};
"d"; "c" **\dir{-}; 
(13,16)*{#5};
\endxy
}
\newcommand{\bottom}[8]{
\xy
(0,0)*{#1}; 
(0,0)*\xycircle(3.3,3.3){-}="a"; 
(20,10)*{#2}; 
(20,10)*\xycircle(3.3,3.3){-}="b"; 
(20,-10)*{#3}; 
(20,-10)*\xycircle(3.3,3.3){-}="c"; 
(14,0)*{}="d";
"a"; "d" **\dir{-}?(.6)*\dir{>};
"d"; "b" **\dir{-}?(.6)*\dir{>};
"d"; "c" **\dir{-}?(.6)*\dir{>};
(7,8)*{#4}; 
(35,10)*{#5}; 
(35,10)*\xycircle(3.3,3.3){-}="x"; 
(35,-10)*{#6}; 
(35,-10)*\xycircle(3.3,3.3){-}="y"; 
(55,0)*{#7}; 
(55,0)*\xycircle(3.3,3.3){-}="z"; 
(41,0)*{}="t";
 "x";"t" **\dir{-}?(.6)*\dir{>};
 "t"; "z"**\dir{-}?(.6)*\dir{>};
 "y"; "t" **\dir{-}?(.6)*\dir{>};
(48,8)*{#8}; 
(22.7,-10)*{}="cc";
(32.5,-10)*{}="yy"; 
"cc" ; "yy" **\dir{~};
\endxy
}
\newcommand{\both}[8]{
\xy
(0,0)*{#1}; 
(0,0)*\xycircle(3.3,3.3){-}="a"; 
(20,10)*{#2}; 
(20,10)*\xycircle(3.3,3.3){-}="b"; 
(20,-10)*{#3}; 
(20,-10)*\xycircle(3.3,3.3){-}="c"; 
(14,0)*{}="d";
"a"; "d" **\dir{-}?(.6)*\dir{>};
"d"; "b" **\dir{-}?(.6)*\dir{>};
"d"; "c" **\dir{-}?(.6)*\dir{>};
(7,8)*{#4}; 
(35,10)*{#5}; 
(35,10)*\xycircle(3.3,3.3){-}="x"; 
(35,-10)*{#6}; 
(35,-10)*\xycircle(3.3,3.3){-}="y"; 
(55,0)*{#7}; 
(55,0)*\xycircle(3.3,3.3){-}="z"; 
(41,0)*{}="t";
 "x";"t" **\dir{-}?(.6)*\dir{>};
 "t"; "z"**\dir{-}?(.6)*\dir{>};
 "y"; "t" **\dir{-}?(.6)*\dir{>};
(48,8)*{#8}; 
(22.7,10)*{}="bb";
(32.5,10)*{}="dd"; 
(22.7,-10)*{}="cc";
(32.5,-10)*{}="yy"; 
"cc" ; "yy" **\dir{~};
"bb" ; "dd" **\dir{~};
\endxy
}
\DeclareMathOperator{\K}{\sf K}
\DeclareMathOperator{\KK}{\sf KK}
\DeclareMathOperator{\Hom}{Hom}
\newcommand{\HE}{{\sf HL}}
\DeclareMathOperator{\HP}{\sf HP}
\DeclareMathOperator{\End}{End}
\DeclareMathOperator{\Aut}{Aut}
\DeclareMathOperator{\Ext}{Ext}
\DeclareMathOperator{\spin}{spin}
\DeclareMathOperator{\ch}{ch}
\DeclareMathOperator{\Ind}{index}
\newcommand{\rar}{\rightarrow}
\newcommand{\bbr}{\mathbb{R}}
\newcommand{\bbR}{\mathbb{R}}
\newcommand{\bbc}{\mathbb{C}}
\newcommand{\bbC}{\mathbb{C}}
\newcommand{\cC}{\mathcal{C}}
\newcommand{\CT}{{\sf{CT}}}
\newcommand{\mbf}[1]{{\boldsymbol {#1} }}
\newcommand{\eq}{\begin{equation}}
\newcommand{\eqend}{\end{equation}}
\newbox\ncintdbox \newbox\ncinttbox
\def\Dirac{{D\!\!\!\!/\,}} 
\def\dirac{{\partial\!\!\!/\,}} 
\newcommand{\twga}{C_r^*(\Gamma_g,\sigma_\theta)}
\newcommand{\complex}{{\mathbb C}} 
\newcommand{\zed}{{\mathbb Z}} 
\newcommand{\bbZ}{{\mathbb Z}}
\newcommand{\bbT}{{\mathbb T}}
\newcommand{\real}{{\mathbb R}} 
\newcommand{\RR}{{\mathbb R}} 
\newcommand{\rat}{{\mathbb Q}} 
\newcommand{\id}{{1\!\!1}} 
\def\alg{{\mathcal A}}
\def\balg{{\mathcal B}}
\def\calg{{\mathcal C}}
\def\dalg{{\mathcal D}}
\def\ealg{{\mathcal E}}
\newcommand\jalg{{\mathcal J}}
\def\hil{{\mathcal H}}
\def\bun{{\mathcal E}}
\def\cK{{\mathcal K}}
\def\K{{\sf K}}
\def\H{{\sf H}}
\def\P{{\sf P}}
\def\B{{\mathfrak B}}
\def\W{{\sf W}}
\def\HP{{\H\P}}
\def\op{{\rm o}}
\def\Id{{\rm id}}
\def\pt{{\rm pt}}
\def\Cl{{\sf Cliff}}
\def\ch{{\sf ch}}
\def\Ch{{\sf Ch}}
\def\Todd{{\sf Todd}}
\newcommand{\bbz}{{\mathbb Z}}
\def\even{{\rm even}}
\newcommand{\tr}[1]{\:{\rm tr}\,#1}
\def\be{\begin{equation}}
\def\ee{\end{equation}}
\def\bea{\begin{eqnarray}}
\def\eea{\end{eqnarray}}
\def\bd{\begin{displaymath}}
\def\ed{\end{displaymath}}
\def\dd{{\rm d}}
\def\ii{{\,{\rm i}\,}}
\newdimen\normalarrayskip              
\newdimen\minarrayskip                 
\newif\ifold             \oldtrue            
\def\arraymode{\ifold\relax\else\displaystyle\fi} 
\def\@arrayskip{\ifold\baselineskip\z@\lineskip\z@
     \else
     \baselineskip\minarrayskip\lineskip2\minarrayskip\fi}
\def\@arrayclassz{\ifcase \@lastchclass \@acolampacol \or
\@ampacol \or \or \or \@addamp \or
   \@acolampacol \or \@firstampfalse \@acol \fi
\edef\@preamble{\@preamble
  \ifcase \@chnum
     \hfil$\relax\arraymode\@sharp$\hfil
     \or $\relax\arraymode\@sharp$\hfil
     \or \hfil$\relax\arraymode\@sharp$\fi}}
\def\@array[#1]#2{\setbox\@arstrutbox=\hbox{\vrule
     height\arraystretch \ht\strutbox
     depth\arraystretch \dp\strutbox
     width\z@}\@mkpream{#2}\edef\@preamble{\halign \noexpand\@halignto
\bgroup \tabskip\z@ \@arstrut \@preamble \tabskip\z@ \cr}%
\let\@startpbox\@@startpbox \let\@endpbox\@@endpbox
  \if #1t\vtop \else \if#1b\vbox \else \vcenter \fi\fi
  \bgroup \let\par\relax
  \let\@sharp##\let\protect\relax
  \@arrayskip\@preamble}
\newcommand{\beq}{\begin{eqnarray}}
\newcommand{\eeq}{\end{eqnarray}}
\newcommand{\G}{\Gamma}
\newcommand{\cL}{{\mathcal L}}
\newcommand{\cV}{{\mathcal V}}
\def\appendix#1{\addtocounter{section}{1}\setcounter{equation}{0}
\renewcommand{\thesection}{\Alph{section}}
\section*{Appendix \thesection. #1}
\protect\indent \parbox[t]{11.715cm}
}
\newcommand{\bbq}{\mathbb{Q}}
\def\ii{{\,{\rm i}\,}}
\newtheorem{theorem}{Theorem}[section]
\newtheorem{lemma}[theorem]{Lemma}
\newtheorem{cor}[theorem]{Corollary}
\newtheorem{proposition}[theorem]{Proposition}
\theoremstyle{definition}
\newtheorem{definition}[theorem]{Definition}
\theoremstyle{remark}
\newtheorem{example}[theorem]{Example}
\newtheorem{remark}[theorem]{Remark}
\numberwithin{equation}{section}
\begin{document}


\title[Noncommutative correspondences,
 duality and D-branes in bivariant $\K$-theory]
{Noncommutative correspondences,\\
 duality and D-branes in bivariant $\K$-theory}

\author{Jacek Brodzki}

\address{School of Mathematics, University of Southampton,
Southampton SO17 1BJ, UK}

\email{j.brodzki@soton.ac.uk}

\author{Varghese Mathai}

\address{Department of Pure Mathematics, University of Adelaide,
Adelaide 5005, Australia}

\email{mathai.varghese@adelaide.edu.au}

\author{Jonathan Rosenberg}

\address{Department of Mathematics, University of Maryland,
College Park, MD 20742, USA}

\email{jmr@math.umd.edu}

\author{Richard J. Szabo}

\address{Department of Mathematics and Maxwell Institute for
  Mathematical Sciences, Heriot-Watt University, Riccarton, Edinburgh
  EH14 4AS, UK}

\email{R.J.Szabo@ma.hw.ac.uk}

\begin{abstract}
We describe a categorical framework for the classification of D-branes
on noncommutative spaces using techniques from bivariant $\K$-theory
of $C^*$-algebras. We present a new description of bivariant
$\K$-theory in terms of noncommutative correspondences which is nicely
adapted to the study of T-duality in open string theory. We
systematically use the diagram calculus for bivariant $\K$-theory as
detailed in our previous paper~\cite{BMRS}. We explicitly work out our
theory for a number of examples of noncommutative manifolds.
\end{abstract}
 
\subjclass[2000]{Primary 81T30. Secondary 19K35, 19D55, 46L80.}

\keywords{D-Branes, String Duality, Noncommutative Geometry}

\maketitle

\section*{Introduction}

Understanding the mathematical structures underpinning D-branes and
Ramond-Ramond fields in superstring theory is important for
systematically describing some of their novel properties. One of the
best known examples of this is the classification of D-brane charges
and RR-fields using techniques of $\K$-theory~\cite{MM,
  W,FW,Horava,OS99,MW}. The $\K$-theory
framework naturally accounts for certain properties of D-branes and
RR-fields that would not be realized if these objects were classified
by ordinary cohomology or homology alone. For example, it explains the
appearance of stable but non-BPS branes carrying torsion charges, and
correctly incorporates both the self-duality and quantization
conditions on RR-fields. It has also led to a variety of new
predictions concerning the spectrum of superstring theory, such as the
instability of D-branes wrapping non-contractible cycles in certain
instances due to the fact that their cohomology classes do not
``lift'' to $\K$-theory~\cite{Diaconescu:2000wy}, and the obstruction
to simultaneous measurement of electric and magnetic RR fluxes when
torsion fluxes are included~\cite{Freed:2006ya}. Moreover, certain
properties of the string theory path integral, such as worldsheet
anomalies and certain subtle phase factor contributions from the
RR-fields, are most naturally formulated within the context of
$\K$-theory~\cite{Diaconescu:2000wy,FW}.

A natural but very complicated problem is to determine the set of
possible D-branes in a given closed string background. One of the
difficulties is that many of the consistent conformally invariant
boundary conditions for open strings have no geometrical
description. Furthermore, in the presence of certain non-trivial
background supergravity form fields the worldvolume field theories of
D-branes are best described in the language of noncommutative
geometry. Previous work (\cite{BEM,MR}, among many other references)
also showed that the formulation of T-duality in the presence of
NS--NS flux requires the use of noncommutative
manifolds. The simplest examples of noncommutative spacetimes result from
applying T-duality to spacetimes compactified 
along directions which have non-trivial support of the NS--NS field.

In a previous paper~\cite{BMRS}, we took preliminary steps towards
extending the $\K$-theory classification of D-branes and RR-fields in
the commutative case to
spacetimes that are noncommutative manifolds. In particular, we
obtained an explicit formula for the charges of D-branes in
noncommutative spacetimes. The purpose of the present paper is to
elaborate on these constructions somewhat and to flesh out many
explicit examples, particularly some which arise naturally in string
theory. The presentation here is hoped to be in a form that is more
palatable to physicists. We describe the constructions of~\cite{BMRS}
by exploiting the diagram calculus developed there, which provides
both a computationally useful tool as well as a heuristic guide to some of
the more technical aspects of the formalism. The detailed proofs of
many of our previous results are deferred to~\cite{BMRS}. Instead, we
focus our attention on presenting many new examples and relating them
to standard constructions in the string theory literature.

There are various themes underlying this work, which we explain in
detail in the following sections. In many applications it is useful to
characterise the dynamics of D-branes in a categorical fashion. In
this setting, D-branes are regarded, in a certain sense, as objects in
some category. Such a point of view has been fruitful in topological
string theory and applications to homological mirror
symmetry~\cite{Aspinwall:2004jr}, wherein B-model D-branes are
regarded as objects in a derived category of coherent sheaves while
A-model D-branes are objects in a Fukawa category. The collection of
open string boundary conditions may also be regarded as a category in
the context of two-dimensional open/closed topological field
theory~\cite{Moore:2006dw}, and thereby used to explain the
structure of boundary conformal field theory and its connections to
$\K$-theory. In this paper we regard D-branes as objects in a certain
category of separable $C^*$-algebras. This is the category underlying
Kasparov's bivariant $\K$-theory (also known as $\KK$-theory). This
characterisation is closely related to the open string algebras for
different boundary conditions that arise in open string field theory,
which for boundary conditions of maximal support are Morita equivalent
via open string bimodules.

The bivariant version of $\K$-theory is useful for a variety of
reasons. Firstly, it incorporates both the $\K$-theory and
$\K$-homology descriptions of D-branes in a unified
setting. But bivariant $\K$-theory is richer than both $\K$-theory
and $\K$-homology considered individually, as it carries an
intersection product. Secondly, this product structure
provides the correct mathematical framework in which
to formulate notions of duality between generic separable
$C^*$-algebras, such as Poincar\'e duality (see
for example \cite{Moscovici}). This can be used to
explain the equivalence of the $\K$-theory and
$\K$-homology descriptions of D-brane charges.  It was also
exploited in~\cite{BMRS} to provide a categorical description of open
string T-duality using an involutive type of $\KK$-equivalence, which
includes the more familiar notion of Morita duality~\cite{SW} as a
special case in some simpler situations. This characterisation has the
virtue of extending the usual geometric notions behind T-duality to
examples involving ``non-geometric'' backgrounds. At least in some
examples, Grange and Sch\"afer-Nameki~\cite{Grange:2006es}
have proposed to view such noncommutative spacetimes as
a globally defined, open string version of Hull's T-fold 
proposal~\cite{Hull:2004in}. Thirdly, bivariant theories are
the appropriate venue
for defining topological invariants of noncommutative spaces, an
example being the Todd class defined in~\cite{BMRS}. In the following
we will calculate the Todd class for some explicit examples of
noncommutative manifolds. Finally, it is the setting that is used to
define a generic notion of $\K$-orientation, and hence to generalise
the Freed-Witten anomaly cancellation condition~\cite{FW} which
selects the consistent sets of D-branes from our category. These
formulations all culminate in a noncommutative version of the D-brane
charge vector. In this paper we will work out explicit examples (both
commutative and noncommutative) of all of these quantities.

A central result in this paper is a new description of bivariant
$\K$-theory in terms of equivalence classes of {\em noncommutative
  correspondences}. This description is nicely adapted to the study of
dualities in open string theory such as T-duality, mirror symmetry and
smooth analogs of the Fourier-Mukai transform. The commutative version
of the correspondence picture nicely unifies the geometric
construction of states of D-branes in terms of topological $\K$-cycles
and in terms of topological $\K$-theory classes in spacetime. In the generic
noncommutative setting it therefore nicely captures the point of view
of D-branes as objects in the $\KK$-theory category with the morphisms
between objects, which are the elements of the bivariant $\K$-theory
groups, providing generalizations of T-duality transformations. T-dual
D-branes are themselves realized in terms of invertible elements of
$\KK$-theory which define order two T-duality actions up to Morita
equivalence. As we discuss, via the universal coefficient theorem,
bivariant $\K$-theory provides a refinement of the $\K$-theoretic
notion of T-duality.

Let us now summarise the contents and structure of this
paper. In~\S\ref{sect:KKdiagrams} we review the definition of
bivariant $\K$-theory and bivariant cyclic theory for separable
$C^*$-algebras. We also review the diagram calculus developed in an
earlier paper~\cite{BMRS}, which is useful for manipulating the
products in these groups. \S\ref{sect:ncPD} is concerned with
understanding Poincar\'e duality in $\K$-theory and cyclic theory
using the diagram calculus. In particular, we provide the construction
of separable Poincar\'e duality algebras, yielding a large class of
noncommutative examples. \S\ref{NCRRTheorem} reviews how the diagram
calculus is used to formulate the Grothendieck-Riemann-Roch theorem
that was proved in~\cite{BMRS} for a class of separable
$C^*$-algebras. Using this theorem, in~\S\ref{sect:dbrane} we analyse
D-branes and their charges in noncommutative spaces, explaining how to
generalize the commutative case in the context of the bivariant
theories. Several noncommutative examples are included, where the
D-brane charge formula is explicitly
calculated. In~\S\ref{CorrTduality}, we give a definition of
$\KK$-theory in terms of certain equivalence classes of noncommutative
correspondences, and relate it to T-duality in Type~II superstring
theory for noncommutative spacetime manifolds.

\subsection*{Acknowledgments}

We thank R.~Nest and R.~Reis for helpful discussions. V.M.\ was
supported by the Australian Research Council. J.R.\ was partially
supported by US NSF grant DMS-0504212. R.J.S.\ was supported in part
by the EU-RTN Network Grant MRTN-CT-2004-005104. Some of this work was
begun while J.B., V.M.\ and J.R.\ were visiting the Erwin
Schr\"odinger International Institute for Mathematical Physics as part
of the program on Gerbes, Groupoids and Quantum Field Theory.

\section{$\KK$-theory and the diagram calculus}\label{sect:KKdiagrams}

In this section we will explain the basic aspects of Kasparov's
$\KK$-theory using the novel diagram calculus developed in~\cite{BMRS}
which greatly simplifies detailed calculations in this theory. These
considerations apply in any bivariant theory with
similar algebraic properties, such as Puschnigg's local bivariant
cyclic cohomology which will be encountered later on in this section.

\subsection{Axioms}
\label{sec:KKaxioms}

For any pair of separable $C^*$-algebras $\alg$ and $\balg$, Kasparov
defines two abelian groups $\KK_0(\alg,\balg)$ and $\KK_1(\alg,
\balg)$. They are constructed with the help of Kasparov's $\alg$-$\balg$
bimodules which generalize Fredholm modules. When $\alg= \bbc$, the group
\emph{$\KK_\bullet(\bbc, \alg) = \K_\bullet(\alg)$} is the $\K$-theory 
of $\alg$, while \emph{$\KK_\bullet(\alg, \bbc) = \K^\bullet(\alg)$} is
the $\K$-homology of $\alg$. There is a canonical functor from the
category of separable $C^*$-algebras with $*$-homomorphisms to an
additive category $\underline{\KK}\,$, whose objects are separable
$C^*$-algebras and the morphisms between any two objects $\alg,\balg$
are given by ${\rm
  Mor}_{\underline{\KK}}(\alg,\balg)=\KK(\alg,\balg)$~\cite{Higson}. In 
particular, any algebra homomorphism $\phi: \alg \rar \balg$
determines an element $[\phi]_{\KK}\in \KK(\alg,\balg)$, represented
by the bimodule $(\balg,\phi,0)$. Elements of
$\KK(\alg, \balg)$ may in this way be regarded as generalized
morphisms between the two algebras. The category $\underline{\KK}$ is
not an abelian category, but it admits the structure of a triangulated
category~\cite{Nest,CMR}.

The $\KK$-theory groups may be axiomatically characterized by the
following three properties~\cite{Higson}:
\begin{enumerate}
\item \emph{Homotopy}: The bifunctor $\KK(-,-)$ is homotopy invariant
  in both variables;
\item \emph{Stability}:
  $\KK(\alg\otimes\mathcal{K},\balg)\cong\KK(\alg,\balg)\cong
  \KK(\alg,\balg\otimes\mathcal{K})$, where
  $\mathcal{K}$ is the algebra of compact operators on a separable
  Hilbert space, and the isomorphisms are induced by the stabilisation
  maps $\alg\to\alg\otimes\mathcal{K}$,
  $\balg\to\balg\otimes\mathcal{K}$ given by $a\mapsto a\otimes e$,
  $b\mapsto b\otimes e$ with $e$ a projection of rank one; and 
\item \emph{Split exactness}: If
$$
0~\longrightarrow~\mathcal{J}~\longrightarrow~\dalg~\stackrel{\scriptstyle
\longleftarrow}{\scriptstyle\longrightarrow}~\dalg/\mathcal{J}~
\longrightarrow~0
$$
is a split exact sequence of separable $C^*$-algebras and
$*$-homomorphisms, then there are split exact sequences of abelian
groups given by
\bea
0&\longrightarrow&\KK(\alg,\mathcal{J})~\longrightarrow~\KK(\alg,
\dalg)~\stackrel{\scriptstyle
\longleftarrow}{\scriptstyle\longrightarrow}~\KK(\alg,\dalg/
\mathcal{J})~\longrightarrow~0 \ , \nonumber\\[4pt]
0&\longrightarrow&\KK(\dalg/\mathcal{J},\balg)~\stackrel{\scriptstyle
\longleftarrow}{\scriptstyle\longrightarrow}~\KK(\dalg,\balg)~
\longrightarrow~\KK(\mathcal{J},\balg)~\longrightarrow~0 \ .
\nonumber
\eea
\end{enumerate}
These properties characterize $\underline{\KK}$ as a \emph{universal}
category which may be constructed by purely algebraic means. Hence it
is possible to construct the Kasparov groups as the unique formal
bifunctor on the category of separable $C^*$-algebras satisfying
the three axioms above. This point of view is extremely useful
for obtaining alternative presentations of the $\KK$-theory groups,
such as the ones we consider in~\S\ref{CorrTduality}.

\subsection{Diagrams}

The realization above of elements of $\KK(\alg,\balg)$ in terms of
generalized morphisms motivates the use of diagrams to represent classes in
$\KK$-theory which are constructed as follows. The first argument in
$\KK(-,-)$ provides the input and the second argument provides the
output of a diagram. Each tensor factor in the first argument will
produce one input node, and each tensor factor in the second argument
will produce one output node. It is sometimes convenient to add
arrowheads that point from the inputs to the outputs.

Starting with the simplest situation, an element $\alpha \in \KK(\alg,
\balg)$ is represented by the diagram 
\[
\simple{\alg}{\balg}{\alpha}
\]
where the left node is regarded as the input node, while the node on
the right is the output node. The distinction between the input and
output nodes is very important in explicit computations. When
$\alg=\balg = \bbc\,$, the corresponding group is $\KK(\bbc,
\bbc)\cong\zed$.

We will also need to treat more complicated situations. For example,
an element $$\beta ~\in~ \KK(\alg\otimes \balg, \calg)$$ is
represented as
\[
\tripleright{\alg}{\balg}{\calg}{\beta}
\]
while a class $\gamma\in \KK(\alg\otimes\balg, \calg\otimes\dalg)$ is
depicted by
\[
\four{\alg}{\balg}{\calg}{\dalg}{\gamma}
\]
The basic rule in treating these more complicated diagrams is that
permutation of the input or output terminals may involve at most the
switch of a sign, depending on the orientation of the Bott element.

\subsection{Composition product\label{CompProd}}

A key feature of Kasparov's $\KK$-theory is the existence of the
composition (or intersection) product
\[ 
\otimes_\balg \,:\,
\KK_i(\alg, \balg) \times \KK_j(\balg, \calg) ~\longrightarrow~ 
\KK_{i+j}(\alg, \calg) \ ,
\]
which may be illustrated with the diagram
\[
\xy
(0,0)*{\alg}; 
(0,0)*\xycircle(3.3,3.3){-}="a"; 
(20,0)*{\balg}; 
(20,0)*\xycircle(3.3,3.3){-}="b"; 
"a"; "b" **\dir{-}?(.6)*\dir{>}+(-1,5)*{\alpha};
(35,0)*{\balg}; 
(35,0)*\xycircle(3.3,3.3){-}="x"; 
(55,0)*{\calg}; 
(55,0)*\xycircle(3.3,3.3){-}="y"; 
"x"; "y" **\dir{-}?(.6)*\dir{>}+(-1,5)*{\beta};
"b"; "x"**\dir{~};
\endxy
\quad = \quad 
\xy
(0,0)*{\alg}; 
(0,0)*\xycircle(3.3,3.3){-}="a"; 
(20,0)*{\calg}; 
(20,0)*\xycircle(3.3,3.3){-}="b"; 
"a"; "b" **\dir{-}?(.6)*\dir{>}+(-1,5)*{\alpha\otimes_\balg\beta};
\endxy
\]
Here $\alpha \in \KK_i(\alg, \balg)$, $\beta\in \KK_j(\balg, \calg)$
and $\alpha\otimes _\balg\beta\in \KK_{i+j}(\alg, \calg)$. The wavy
line joining the output node of $\alpha$ to the input node of
$\beta$ serves to stress that these two nodes, which have the same
label $\balg$, are annihilated in the process of taking the composition
product. In particular, if $\phi: \alg\rar \balg$ and $\psi: \balg\rar
\calg$ are morphisms of $C^*$-algebras, then this product is
compatible with composition of morphisms, giving
$$
[\phi]_{\KK}\otimes_\balg[\psi]_{\KK} = [\psi\circ \phi]_{\KK} \ .
$$

Kasparov's composition product is bilinear and associative, and it
defines the composition law in the additive category
$\underline{\KK}$~\cite{Higson}. It also makes
$\KK_0(\alg, \alg)$ into a unital ring whose unit $1_\alg$ is the
element of $\KK_0(\alg, \alg)$ determined by the identity morphism
$\Id_\alg: \alg \rar\alg$ of the algebra $\alg$. In particular, it now
makes sense to talk about invertible elements of $\KK_0(\alg,
\alg)$. More generally, we say that $\alpha \in \KK_i(\alg,\balg)$ is
\emph{invertible} if and only if there exists an element $\beta \in
\KK_{-i}(\balg,\alg)$ such that $\alpha\otimes _\balg\beta = 1_\alg$
and $\beta\otimes_\alg\alpha=1_\balg$. The element $\beta$ is then
called the \emph{inverse} of $\alpha$ and written $\beta=\alpha^{-1}$.

Elements of $\KK$-theory determine, by means of the composition
product, homomorphisms in $\K$-theory and $\K$-homology in the
following way. Let $ \alpha \in \KK_d(\alg, \balg)$. Then $$x\otimes
_\alg\alpha  ~\in~ \KK_{j+d}(\bbc,\balg) = \K_{j+d}(\balg)$$ for any $x\in
\KK_j(\bbc, \alg) = \K_j(\alg)$. Thus taking the product with $\alpha$
on the right gives a map $(-)\otimes _\alg \alpha: \KK_j(\bbc, \alg)
\rightarrow \KK_{j+d} (\bbc, \balg)$, which in diagrammatic form is
\[
\xy
(0,0)*{\bbc}; 
(0,0)*\xycircle(3.3,3.3){-}="a"; 
(20,0)*{\alg}; 
(20,0)*\xycircle(3.3,3.3){-}="b"; 
"a"; "b" **\dir{-}?(.6)*\dir{>}+(-1,5)*{x};
(35,0)*{\alg}; 
(35,0)*\xycircle(3.3,3.3){-}="x"; 
(55,0)*{\balg}; 
(55,0)*\xycircle(3.3,3.3){-}="y"; 
"x"; "y" **\dir{-}?(.6)*\dir{>}+(-1,5)*{\alpha};
"b"; "x"**\dir{~};
\endxy
\quad = \quad 
\xy
(0,0)*{\bbc}; 
(0,0)*\xycircle(3.3,3.3){-}="a"; 
(20,0)*{\balg}; 
(20,0)*\xycircle(3.3,3.3){-}="b"; 
"a"; "b" **\dir{-}?(.6)*\dir{>}+(-1,5)*{x\otimes_\alg\alpha};
\endxy
\]
Similarly, $\alpha \otimes_\balg y \in \KK_{j+d}(\alg, \bbc)=\K^{j+d}(\alg)$
for all $y\in \KK_j(\balg, \bbc)=\K^j(\balg)$, and in this way
multiplying by the element  $\alpha$ on the left gives a map
$\alpha\otimes_\balg(-): \KK_j(\balg,\bbc) \rightarrow \KK_{j+d}(\alg,
\bbc)$, which in diagrams is
\[
\xy
(0,0)*{\alg}; 
(0,0)*\xycircle(3.3,3.3){-}="a"; 
(20,0)*{\balg}; 
(20,0)*\xycircle(3.3,3.3){-}="b"; 
"a"; "b" **\dir{-}?(.6)*\dir{>}+(-1,5)*{\alpha};
(35,0)*{\balg}; 
(35,0)*\xycircle(3.3,3.3){-}="x"; 
(55,0)*{\bbc}; 
(55,0)*\xycircle(3.3,3.3){-}="y"; 
"x"; "y" **\dir{-}?(.6)*\dir{>}+(-1,5)*{y};
"b"; "x"**\dir{~};
\endxy
\quad = \quad 
\xy
(0,0)*{\alg}; 
(0,0)*\xycircle(3.3,3.3){-}="a"; 
(20,0)*{\bbc}; 
(20,0)*\xycircle(3.3,3.3){-}="b"; 
"a"; "b" **\dir{-}?(.6)*\dir{>}+(-1,5)*{\alpha\otimes_\balg y};
\endxy
\]
An invertible element $\alpha \in \KK(\alg, \balg)$
determines an isomorphism between the $\K$-theory of $\alg$ and the
$\K$-theory of $\balg$, as well as an isomorphism between the
$\K$-homology of $\alg$ and the $\K$-homology of $\balg$. In this case
we say that $\alg$ and $\balg$ are (\emph{strongly})
\emph{$\KK$-equivalent}.

\subsection{Exterior product}

A very important computational tool in $\KK$-theory is the bilinear
exterior product
\[
\otimes \,:\,
\KK_i(\alg_1, \balg_1) \times \KK_j(\alg_2, \balg_2) ~
\longrightarrow~ \KK_{i+j}(\alg_1\otimes \alg_2, \balg_1\otimes 
\balg_2)
\]
which is defined in terms of the composition product by 
\[
x_1\otimes x_2 = (x_1 \otimes 1_{\alg_2} )\otimes_{\balg_1\otimes
  \alg_2} (x_2\otimes 1_{\balg_1}) \ .
\]
The exterior product with the class of the identity morphism, called
\emph{dilation}, is defined as follows. If $x \in \KK_j(\alg, \balg)$
is represented by a Kasparov $\alg$-$\balg$ bimodule
$(\mathcal{E}, F)$, then $x\otimes 1_\calg$ is the element of
$\KK_j(\alg\otimes \calg, \balg\otimes \calg)$ represented by the
$(\alg\otimes \calg)$-$( \balg\otimes \calg)$ bimodule
$(\mathcal{E}\otimes \calg, F\otimes \Id_\calg)$. In
the simple case where $x_1 = [\phi]_{\KK}$ is the class of an algebra
morphism $\phi: \alg_1 \rightarrow \balg_1$, then $x_1\otimes
1_{\alg_2} = [\phi\otimes\Id_{\alg_2}]_{\KK}$ is the class of the morphism
$\phi\otimes\Id_{\alg_2}: \alg_1\otimes \alg_2 \rightarrow
\balg_1\otimes \alg_2$. Diagrammatically, dilation is illustrated as
\[
\simple{\alg_1}{\balg_1}{x_1}\qquad
\longmapsto\qquad
\four{\alg_1}{\alg_2}{\balg_1}{\alg_2}{x_1\otimes 1_{\alg_2}}
\]
When taking the exterior product of elements $x_1 \in \KK(\alg_1,
\balg_1)$ and $x_2\in \KK(\alg_2, \balg_2)$ we must use dilation to
establish a common output-input node to which we can apply the
Kasparov composition product. This is described in the diagram
\begin{equation}\label{dilation}
\begin{split}
\simple{\alg_1}{\balg_1}{x_1} & \quad \otimes \quad
\simple{\alg_2}{\balg_2}{x_2}\\ \\
& = \quad
\xy
(0,10)*{\alg_1}; 
(0,10)*\xycircle(3.3,3.3){-}="a"; 
(0,-10)*{\alg_2}; 
(0,-10)*\xycircle(3.3,3.3){-}="b"; 
(6,0)*{}="d";
 "a"; "d" **\dir{-}?(.6)*\dir{>};
  "b";"d" **\dir{-}?(.6)*\dir{>};
  (20,0)*{}="c";
  (26,10)*{\balg_1}; 
(26,10)*\xycircle(3.3,3.3){-}="x"; 
(26,-10)*{\alg_2}; 
(26,-10)*\xycircle(3.3,3.3){-}="y"; 
"c" ; "x" **\dir{-}?(.6)*\dir{>};
"c"; "y" **\dir{-}?(.6)*\dir{>};
"d"; "c" **\dir{-}; 
(13,16)*{x_1\otimes 1_{\alg_2}};
(40,10)*{\balg_1}; 
(40,10)*\xycircle(3.3,3.3){-}="a1"; 
(40,-10)*{\alg_2}; 
(40,-10)*\xycircle(3.3,3.3){-}="b1"; 
(46,0)*{}="d1";
 "a1"; "d1" **\dir{-}?(.6)*\dir{>};
  "b1";"d1" **\dir{-}?(.6)*\dir{>};
  (60,0)*{}="c1";
  (66,10)*{\balg_1}; 
(66,10)*\xycircle(3.3,3.3){-}="x1"; 
(66,-10)*{\balg_2}; 
(66,-10)*\xycircle(3.3,3.3){-}="y1"; 
"c1" ; "x1" **\dir{-}?(.6)*\dir{>};
"c1"; "y1" **\dir{-}?(.6)*\dir{>};
"d1"; "c1" **\dir{-}; 
(53,16)*{1_{\balg_1}\otimes x_2};
(29,10)*{}="f"; 
(37,10)*{}="g"; 
"f"; "g" **\dir{~}; 
(29,-10)*{}="f1"; 
(37,-10)*{}="g1"; 
"f1" ; "g1" **\dir{~}; 
\endxy
\\
 \\
&= \quad \four{\alg_1}{\alg_2}{\balg_1}{\balg_2}{(x_1\otimes 1_{\alg_2})
\otimes_{\balg_1\otimes\alg_2}
(1_{\balg_1}\otimes x_2)}
\end{split}
\end{equation} 

The diagram calculus allows one to keep track of the exterior product
in more complicated situations. For example, if $\alg_i,\balg_i$,
$i=1,2$ and $\dalg$ are all separable $C^*$-algebras, then the
exterior product $x_1\otimes _\dalg x_2$ of two elements $x_1\in
\KK(\alg_1, \balg_1\otimes \dalg)$ and $x_2\in \KK(\dalg\otimes
\alg_2, \balg_2)$ can be illustrated as
\begin{equation}
\xy
(0,0)*{\alg_1}; 
(0,0)*\xycircle(3.3,3.3){-}="a"; 
(20,10)*{\balg_1}; 
(20,10)*\xycircle(3.3,3.3){-}="b"; 
(20,-10)*{\dalg}; 
(20,-10)*\xycircle(3.3,3.3){-}="c"; 
(14,0)*{}="d";
"a"; "d" **\dir{-}?(.6)*\dir{>};
"d"; "b" **\dir{-}?(.6)*\dir{>};
"d"; "c" **\dir{-}?(.6)*\dir{>};
(7,8)*{x_1}; 
(35,10)*{\alg_2}; 
(35,10)*\xycircle(3.3,3.3){-}="x"; 
(35,-10)*{\dalg}; 
(35,-10)*\xycircle(3.3,3.3){-}="y"; 
(55,0)*{\balg_2}; 
(55,0)*\xycircle(3.3,3.3){-}="z"; 
(41,0)*{}="t";
 "x";"t" **\dir{-}?(.6)*\dir{>};
 "t"; "z"**\dir{-}?(.6)*\dir{>};
 "y"; "t" **\dir{-}?(.6)*\dir{>};
(48,8)*{x_2}; 
"c" ; "y" **\dir{~};
\endxy
\label{extprodDdiag}\end{equation}
While this product is defined by means of the dilation method as
before, the diagrammatic calculus simplifies the process. To obtain
the diagram describing the result, we first remove the common nodes
joined by the wavy line, and then assemble input and output nodes
together. In the diagram~(\ref{extprodDdiag}), the input nodes are
labelled $\alg_1$ and $\alg_2$ while the output nodes are $\balg_1$
and $\balg_2$. Thus the result looks like this:
\[
\four{\alg_1}{\alg_2}{\balg_1}{\balg_2}{x_1\otimes _\dalg x_2} ~\in~ 
\KK(\alg_1\otimes \alg_2, 
\balg_1\otimes \balg_2) \ .
\]
When $\dalg = \bbc$ this calculation is the same as that in
eq.~(\ref{dilation}). 

\subsection{Associativity}

The exterior product is associative, and this highly non-trivial
property may be stated as follows. We assume that all algebras
involved below are separable.

\begin{theorem}\label{associative}
Let $x\in \KK(\alg_1, \balg_1\otimes \dalg_1)$, $y\in
\KK(\dalg_1\otimes \alg_2, \balg_2\otimes\dalg_2)$ and
$z\in\KK(\dalg_2\otimes \alg_3, \balg_3)$. Then
$$
(x\otimes_{\dalg_1} y)\otimes_{\dalg_2}z =
x\otimes_{\dalg_1}(y\otimes_{\dalg_2} z) \ .
$$
\end{theorem}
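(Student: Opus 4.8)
The plan is to reduce the stated associativity of the exterior product over the auxiliary algebras $\dalg_1$ and $\dalg_2$ to the associativity of the Kasparov composition product, which is taken as known. The strategy is to unwind both sides using the definition of the exterior product via dilation, rewrite everything as an iterated composition product along a single chain of algebras, and then apply the classical associativity of $\otimes_\balg$ (Theorem in Kasparov's original work, cited here via \cite{Higson}).

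First I would set up notation for the dilations. Recall that for $x \in \KK(\alg_1, \balg_1\otimes\dalg_1)$, the product $x\otimes_{\dalg_1}y$ is computed by dilating $x$ on the right by $\alg_2$ and $y$ on the left by $\balg_1$ so that both live over a common algebra $\balg_1\otimes\dalg_1\otimes\alg_2$, and then taking $(x\otimes 1_{\alg_2})\otimes_{\balg_1\otimes\dalg_1\otimes\alg_2}(1_{\balg_1}\otimes y)$; this is exactly the generalization of eq.~(\ref{dilation}) to the case at hand, as explained around eq.~(\ref{extprodDdiag}). So I would write
\[
(x\otimes_{\dalg_1}y)\otimes_{\dalg_2}z
= \Bigl[(x\otimes 1)\otimes(1\otimes y)\Bigr]\otimes 1_{\alg_3}\;\otimes\;1\otimes z,
\]
keeping careful track of which tensor factors each identity class is attached to (suppressing the subscripts on $\otimes_\balg$ for readability in the sketch, but they would be spelled out in full). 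The key bookkeeping fact is that dilation is functorial and compatible with the composition product: $(u\otimes_\balg v)\otimes 1_\calg = (u\otimes 1_\calg)\otimes_{\balg\otimes\calg}(v\otimes 1_\calg)$, and exterior multiplication by an identity class commutes past composition. Using these, both $(x\otimes_{\dalg_1}y)\otimes_{\dalg_2}z$ and $x\otimes_{\dalg_1}(y\otimes_{\dalg_2}z)$ can be rewritten as the \emph{same} triple composition product of the three dilated classes
\[
(x\otimes 1_{\alg_2\otimes\alg_3}),\qquad (1_{\balg_1}\otimes y\otimes 1_{\alg_3}),\qquad (1_{\balg_1\otimes\balg_2}\otimes z)
\]
taken along the chain
\[
\alg_1\otimes\alg_2\otimes\alg_3 \to \balg_1\otimes\dalg_1\otimes\alg_2\otimes\alg_3 \to \balg_1\otimes\balg_2\otimes\dalg_2\otimes\alg_3 \to \balg_1\otimes\balg_2\otimes\balg_3.
\]
Once both sides are brought to this common normal form, the equality is immediate from associativity of the Kasparov composition product.

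The main obstacle I expect is purely organizational rather than conceptual: making the dilation identities precise and verifying that the two different ways of parenthesizing really do produce the \emph{identical} sequence of dilated classes along the \emph{identical} chain, so that no residual discrepancy (e.g.\ a flip of tensor factors, which by the remark in \S\ref{sect:KKdiagrams} could introduce a sign through the Bott element) is hidden. Here the diagram calculus does the heavy lifting: drawing the diagram for the left-hand side, removing the two wavy lines (one for $\dalg_1$, one for $\dalg_2$) and reassembling inputs $\alg_1,\alg_2,\alg_3$ against outputs $\balg_1,\balg_2,\balg_3$ yields a diagram; doing the same for the right-hand side yields a diagram with the wavy lines removed in the other order; and since removing disjoint wavy lines is manifestly order-independent, the two diagrams coincide on the nose, with no permutation of terminals and hence no sign. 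I would therefore present the proof first in diagrammatic form — exhibiting the single common diagram — and then indicate the translation into the composition-product identities above for readers who prefer the algebraic formulation.
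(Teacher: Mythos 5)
Your argument is correct, but it is not the route the paper takes: the paper states Theorem~\ref{associative} without proof, contenting itself with the diagrammatic heuristic that the concatenations may be performed in any order and deferring rigorous verification to the appendix of~\cite{BMRS}. What you do instead is give a genuine reduction: writing $A=x\otimes 1_{\alg_2\otimes\alg_3}$, $B=1_{\balg_1}\otimes y\otimes 1_{\alg_3}$ and $C=1_{\balg_1\otimes\balg_2}\otimes z$, the definition of $\otimes_{\dalg_1}$ and $\otimes_{\dalg_2}$ by dilation, together with the compatibility $(u\otimes_{\balg}v)\otimes 1_{\calg}=(u\otimes 1_{\calg})\otimes_{\balg\otimes\calg}(v\otimes 1_{\calg})$ and its left-handed analogue, turns the left side into $(A\otimes B)\otimes C$ and the right side into $A\otimes(B\otimes C)$ along the same chain $\alg_1\otimes\alg_2\otimes\alg_3\to\balg_1\otimes\dalg_1\otimes\alg_2\otimes\alg_3\to\balg_1\otimes\balg_2\otimes\dalg_2\otimes\alg_3\to\balg_1\otimes\balg_2\otimes\balg_3$, and basic associativity of the composition product finishes the proof; moreover, since no input or output terminals get permuted in this normal form, no Bott-element sign can appear, exactly as you say. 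Two caveats are worth flagging. First, the compatibility of dilation with the composition product that carries your bookkeeping is itself a nontrivial theorem of Kasparov (proved via the connection formalism, see e.g.\ \cite[\S18.9]{Black}), so your argument is a reduction of the generalized associativity to two standard but substantive inputs rather than a self-contained proof --- you should cite that fact on the same footing as the basic associativity instead of treating it as mere bookkeeping. Second, your approach buys something the paper's heuristic does not: it makes precise exactly which functoriality properties of the product are being used and why no sign arises, whereas the diagram calculus alone only indicates that the answer should be order-independent "up to signs''; conversely, the paper's deferral to~\cite{BMRS} covers the more delicate commutation identities (such as $z\otimes_{\balg}(y\otimes_{\alg}x)=\pm\,y\otimes_{\alg}(z\otimes_{\balg}x)$) where a flip of tensor factors, and hence a possible sign, genuinely does enter and your normal-form argument would need an extra commutativity input.
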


The nice aspect of the diagrammatic formalism is that one can show
that all associativity formulae in Kasparov theory correspond to the
fact that one can perform the concatenations that form part of
computing the products in any order, except perhaps for signs (which
disappear in $\KK_0$). To illustrate this point, consider the
following example. Let us take three elements $z\in \KK(\ealg,
\balg)$, $y\in\KK(\dalg, \alg)$ and $x\in \KK(\alg\otimes\balg,
\calg)$. We would like to compute the product
$z\otimes_\balg(y\otimes_\alg x)$. First we compute $y\otimes_\alg x$
to get
\begin{displaymath}
y\otimes_\alg x  ~= \quad
\xy
(0,-10)*{\dalg}; 
(0,-10)*\xycircle(3.3,3.3){-}="a"; 
(20,-10)*{\alg}; 
(20,-10)*\xycircle(3.3,3.3){-}="b"; 
"a"; "b" **\dir{-}?(.6)*\dir{>}+(-1,5)*{y};
(35,10)*{\balg}; 
(35,10)*\xycircle(3.3,3.3){-}="x"; 
(35,-10)*{\alg}; 
(35,-10)*\xycircle(3.3,3.3){-}="y"; 
(55,0)*{\calg}; 
(55,0)*\xycircle(3.3,3.3){-}="z"; 
(41,0)*{}="t";
 "x";"t" **\dir{-}?(.6)*\dir{>};
 "t"; "z"**\dir{-}?(.6)*\dir{>};
 "y"; "t" **\dir{-}?(.6)*\dir{>};
(48,8)*{x}; 
"b" ; "y"** \dir{~}
\endxy  \quad = \quad \tripleright{\balg}{\dalg}{\calg}{y\otimes_\alg
  x}
\end{displaymath} 
so that
\begin{displaymath}
z\otimes_\balg(y\otimes_\alg x)   ~= \quad
\xy
(0,10)*{\ealg}; 
(0,10)*\xycircle(3.3,3.3){-}="a"; 
(20,10)*{\balg}; 
(20,10)*\xycircle(3.3,3.3){-}="b"; 
"a"; "b" **\dir{-}?(.6)*\dir{>}+(-1,5)*{z};
(35,10)*{\balg}; 
(35,10)*\xycircle(3.3,3.3){-}="x"; 
(35,-10)*{\dalg}; 
(35,-10)*\xycircle(3.3,3.3){-}="y"; 
(55,0)*{\calg}; 
(55,0)*\xycircle(3.3,3.3){-}="z"; 
(41,0)*{}="t";
 "x";"t" **\dir{-}?(.6)*\dir{>};
 "t"; "z"**\dir{-}?(.6)*\dir{>};
 "y"; "t" **\dir{-}?(.6)*\dir{>};
(48,8)*{y\otimes_\alg x}; 
"b" ; "x"** \dir{~}
\endxy \quad = \quad
\tripleright{\ealg}{\dalg}{\calg}{\qquad z\otimes_\balg(y\otimes_\alg x)}
\end{displaymath}
However, since the concatenations involved in this calculation can be
performed in any order, the same computation may be carried out as
\[
\xy
(0,-10)*{\dalg}; 
(0,-10)*\xycircle(3.3,3.3){-}="a"; 
(20,-10)*{\alg}; 
(20,-10)*\xycircle(3.3,3.3){-}="b"; 
"a"; "b" **\dir{-}?(.6)*\dir{>}+(-1,5)*{y};
(15,10)*{\ealg};
(15,10)*\xycircle(3.3,3.3){-}="xx"; 
(35,10)*{\otimes_\balg}; 
(35,10)*\xycircle(3.3,3.3){.}="x"; 
(35,-10)*{\alg}; 
(35,-10)*\xycircle(3.3,3.3){-}="y"; 
(55,0)*{\calg}; 
(55,0)*\xycircle(3.3,3.3){-}="z"; 
(41,0)*{}="t";
 "x";"t" **\dir{-}?(.6)*\dir{>};
 "t"; "z"**\dir{-}?(.6)*\dir{>};
 "y"; "t" **\dir{-}?(.6)*\dir{>};
(48,8)*{x}; 
"xx"; "x"**\dir{-}?(.6)*\dir{>}+(-1,5)*{z};
"b" ; "y"** \dir{~}
\endxy  \quad
 = \quad
\tripleright{\ealg}{\dalg}{\calg}{\qquad y\otimes_\alg(z\otimes_\balg
  x)} 
\]
where the dashed circle indicates a node that is removed in the
process of computing the Kasparov product with respect to the algebra
marked in the circle. The two results are the same, which leads to the
formula
\[
z\otimes_\balg(y\otimes_\alg x) = \pm \,y\otimes _\alg(z\otimes_\balg
x) \ .
\]
A rigorous proof of this formula can be found
in~\cite[Appendix~B]{BMRS}.

\subsection{Local bivariant cyclic cohomology\label{LBCC}}

To make the Chern character possible, one needs a target group for a
map defined on Kasparov's bivariant $\KK$-theory, preferably one that
is defined on a similar class of algebras and which
transports the key algebraic properties of the theory. Several theories
of such kind have been proposed, which work for large classes of
topological and bornological algebras. For our purposes, the most
convenient theory is Puschnigg's local bivariant cyclic
cohomology~\cite{Puschnigg}, which we shall denote $\HE$. This theory,
which can be defined for separable $C^*$-algebras, has the same formal
properties as Kasparov's $\KK$-theory. In particular, if all algebras
below are separable $C^*$-algebras then
\begin{enumerate}
\item There exists a bilinear, associative composition product 
$$
\otimes_\balg \,:\,
\HE_i(\alg,\balg)\times \HE_j(\balg,\calg) ~
\longrightarrow~ \HE_{i+j}(\alg,\calg) \ ;
$$
\item The bifunctor $\HE(-,-)$ is homotopy invariant, split exact and
  satisfies excision in each variable; and
\item
There exists an associative, bilinear exterior product
\[
\otimes \,:\,\HE_i(\alg_1,\balg_1)\times \HE_j(\alg_2,\balg_2) ~
\longrightarrow ~
\HE_{i+j}(\alg_1 \otimes \alg_2,\balg_1 \otimes \balg_2) \ .
\]
\end{enumerate}
Thus essentially everything in our description of $\KK$-theory above
applies also to the bivariant cyclic homology.

This theory is also defined for more general topological algebras,
such as smooth dense subalgebras of $C^*$-algebras. 
\begin{theorem} \cite[\S23.4]{CuntzSurvey}. 
\label{thm:approxprop}
Let $\alg$ be a Banach algebra with the metric approximation
property. Let $\alg^\infty$ be a smooth subalgebra of $\alg$. Then the
inclusion map $\alg^\infty\hookrightarrow \alg$ induces an invertible
element of $\HE_0(\alg^\infty,\alg)$, hence $\alg^\infty$ and $\alg$ are
$\HE$-equivalent.
\end{theorem}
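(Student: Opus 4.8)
The plan is to recognize this as an instance of Puschnigg's \emph{isoradial invariance} of local cyclic cohomology, and then to verify that its hypotheses hold for a smooth dense inclusion. Recall that $\HE$ is constructed functorially: to a Banach (or admissible Fr\'echet) algebra $\alg$ one attaches the analytic tensor algebra $\mathcal{T}\alg$, completes it to an ind-Banach algebra $\widehat{\mathcal{T}}\alg$, and defines $\HE_\bullet(\alg,\balg)$ as a bivariant cohomology of the associated local $X$-complexes, taken in the homotopy category of ind-complexes. This machinery is set up precisely so that the class $[\iota]\in\HE_0(\alg',\alg)$ of a homomorphism $\iota\colon\alg'\to\alg$ is invertible whenever $\iota$ is \emph{isoradial} and both $\alg'$ and $\alg$ lie in the admissible class for which this tensor-algebra completion is well behaved --- in particular, for algebras with the metric approximation property. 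So the first thing I would do is reduce the theorem to checking (a) admissibility of $\alg$ and $\alg^\infty$, and (b) isoradiality of the inclusion.

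For (a), $\alg$ has the metric approximation property by hypothesis; and a smooth subalgebra $\alg^\infty$ --- being a dense Fr\'echet subalgebra, continuously included and expressible as an inverse limit of Banach algebras --- inherits from $\alg$ enough approximation-theoretic control for $\HE(\alg^\infty,-)$ to be defined by the same construction. For (b) I would check that $\iota\colon\alg^\infty\hookrightarrow\alg$ has dense range (built into the notion of a smooth subalgebra) and preserves spectral radii. The latter is the key observation: a smooth subalgebra is stable under holomorphic functional calculus, hence spectrally invariant, so $\operatorname{sp}_{\alg^\infty}(a)=\operatorname{sp}_{\alg}(a)$ for every $a\in\alg^\infty$ (and over matrix algebras), whence the spectral radii agree and continue to agree along the defining ind-systems; this is exactly the isoradiality condition.

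Granting (a) and (b), I would invoke Puschnigg's theorem to conclude that the chain map induced by $\iota$ on the local $X$-complexes is a homotopy equivalence, producing an inverse class $[\iota]^{-1}\in\HE_0(\alg,\alg^\infty)$ with $[\iota]\otimes_{\alg}[\iota]^{-1}=1_{\alg^\infty}$ and $[\iota]^{-1}\otimes_{\alg^\infty}[\iota]=1_{\alg}$, i.e.\ an $\HE$-equivalence. The mechanism behind this step is that at the level of the ind-Banach algebras $\widehat{\mathcal{T}}\alg^\infty$ and $\widehat{\mathcal{T}}\alg$ the two systems of Banach completions are mutually cofinal: one interleaves them in a ``staircase'', using the metric approximation property to keep the connecting maps norm-bounded, so that the two ind-algebras become isomorphic and therefore compute the same local cyclic homology.

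The hard part will be precisely that interleaving argument --- the statement that local cyclic cohomology sees only the ``spectral germ'' of an algebra and is therefore insensitive to passing to a dense spectrally invariant subalgebra. This is where the metric approximation property is indispensable, since it is needed to promote finite-rank approximations of the identity to the analytic tensor algebra with norm control and thereby to show the relevant completions are ``small enough''. As this is exactly the content of \cite[\S23.4]{CuntzSurvey}, following Puschnigg~\cite{Puschnigg}, I would import that argument rather than reproduce it; the only genuinely new point to record in our setting is the verification above that a smooth dense inclusion into a Banach algebra with the metric approximation property is an isoradial morphism between admissible algebras.
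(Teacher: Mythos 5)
Your proposal is correct and takes essentially the same route as the paper: the paper gives no proof of this statement, but simply imports it from Cuntz's survey, \S 23.4 (following Puschnigg), which is precisely the isoradial-invariance theorem you invoke, with the metric approximation property supplying admissibility and the smooth (dense, spectrally invariant) inclusion supplying isoradiality. The only point to watch is that isoradiality is a condition on spectral radii of bounded subsets rather than of single elements, but this is built into the notion of smooth subalgebra used in the cited source, so your verification of the hypotheses is in line with the reference.
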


All nuclear $C^*$-algebras have the metric approximation
property~\cite{ChoiE}.  The example of the Fr\'echet algebra 
$C^\infty(X)$ of smooth functions on a compact manifold $X$ is of key
importance. This algebra is a smooth subalgebra of the $C^*$-algebra of
continuous functions $C(X)$. By Theorem~\ref{thm:approxprop} the
inclusion $C^\infty(X) \hookrightarrow C(X)$ is an invertible element
in $\HE(C^\infty(X),C(X))$. In particular, both the local cyclic homology and
cohomology of these two algebras are isomorphic. Puschnigg proves that
$\HE_\bullet(C^\infty(X))\cong\HP_\bullet(C^\infty(X))$, so that the local 
cyclic homology coincides with the standard periodic cyclic
homology. Combined with a fundamental result of Connes, this
establishes an isomorphism between Puschnigg's local cyclic homology
of $C(X)$ and the periodic de~Rham cohomology of $X$,
$\HE_\bullet(C(X))\cong\H^\bullet(\Omega^\#(X),\dd)$ (with the
standard grading on differential forms by form degree).

The local bivariant cyclic homology admits a Chern character with the
required properties.

\begin{theorem} \cite[\S23.5]{CuntzSurvey} Let $\alg$ and
  $\balg$ be separable $C^*$-algebras. Then  
there exists a natural bivariant Chern character 
$$
\ch \,:\, \KK_\bullet(\alg,\balg) ~
\longrightarrow~ \HE_\bullet(\alg,\balg)
$$
such that
\begin{enumerate}
\item $\ch$ is multiplicative, i.e., if
$\alpha\in \K\K_i(\alg,\balg)$ and $\beta\in \K\K_j(\balg,\calg)$ then
$$
\ch(\alpha\otimes_\balg\beta) = \ch (\alpha) \otimes_\balg \ch(\beta)
\ ;
$$
\item $\ch$ is compatible with the exterior product; and
\item $\ch([\phi]_{\KK})=[\phi]_\HE$ for any algebra homomorphism
  $\phi:\alg\to\balg$.
\end{enumerate}
\label{ChernHLthm}\end{theorem}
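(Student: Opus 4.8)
The plan is to obtain $\ch$ purely formally from the universal property of $\KK$-theory recalled in Section~\ref{sec:KKaxioms}: the canonical functor $j$ from separable $C^*$-algebras and $*$-homo\-mor\-phisms into $\underline{\KK}$ is initial among $\bbz/2$-graded functors that are homotopy invariant, $\cK$-stable and split exact, so any target theory sharing these structural features must receive a unique comparison functor out of $\underline{\KK}$.

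First I would organise local bivariant cyclic cohomology into an additive category $\underline{\HE}$ with the separable $C^*$-algebras as objects and $\bbz/2$-graded Hom-groups $\HE_\bullet(\alg,\balg)$, composition being the associative product $\otimes_\balg$ of property~(1) above and the identity of $\alg$ being $[\Id_\alg]_{\HE}$. There is an evident functor $F\colon\phi\mapsto[\phi]_{\HE}$ from separable $C^*$-algebras into $\underline{\HE}$. Using property~(2) above --- homotopy invariance, split exactness and excision of $\HE$ in each variable --- together with the $C^*$-stability of local cyclic cohomology proved by Puschnigg~\cite{Puschnigg}, I would check that $F$ is homotopy invariant, $\cK$-stable and split exact; here one only needs that an additive functor automatically preserves the split short exact sequences and the stabilisation isomorphisms that enter the axioms.

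Next, the universal property of $\underline{\KK}$~\cite{Higson} yields a unique $\bbz/2$-graded functor $\ch\colon\underline{\KK}\to\underline{\HE}$, the identity on objects, with $\ch\circ j=F$. By construction this gives $\ch([\phi]_{\KK})=[\phi]_{\HE}$, which is property~(3), and functoriality of $\ch$ gives $\ch(\alpha\otimes_\balg\beta)=\ch(\alpha)\otimes_\balg\ch(\beta)$ for $\alpha\in\KK_i(\alg,\balg)$ and $\beta\in\KK_j(\balg,\calg)$, since $\otimes_\balg$ is the composition law in both categories; this is property~(1), and naturality of $\ch$ in each variable is just the special case in which one factor is the class of a morphism. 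For property~(2) I would first show that $\ch$ intertwines the dilation operations $x\mapsto x\otimes 1_\calg$ and $x\mapsto 1_\calg\otimes x$: the two assignments $x\mapsto\ch(x\otimes 1_\calg)$ and $x\mapsto\ch(x)\otimes 1_\calg$ are functors $\underline{\KK}\to\underline{\HE}$ which agree on the classes of $*$-homo\-mor\-phisms (both send $[\phi]_{\KK}$ to $[\phi\otimes\Id_\calg]_{\HE}$), so the uniqueness clause of the universal property forces them to coincide. Combining this with multiplicativity and the dilation formula~(\ref{dilation}) defining the exterior product in each theory then gives $\ch(x_1\otimes x_2)=\ch(x_1)\otimes\ch(x_2)$.

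I expect the main obstacle to lie not in this formal argument but in the input it rests on: that Puschnigg's local bivariant cyclic cohomology genuinely possesses all of the structural properties invoked above --- existence and associativity of the composition and exterior products, homotopy invariance, excision, and above all $C^*$-stability. These are deep results which I would quote from~\cite{Puschnigg} and~\cite[\S23]{CuntzSurvey} rather than reprove. A secondary technical point to keep track of is the $\bbz/2$-grading: the entire construction must be carried out with graded Hom-groups, using the suspension isomorphisms $\KK_1(\alg,\balg)\cong\KK_0(\alg,\balg\otimes C_0(\RR))$ and their $\HE$-analogues, so that $\ch$ is defined in both parities and is multiplicative in the graded sense.
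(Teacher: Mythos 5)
Your construction via the universal property of $\underline{\KK}$ is correct and is essentially the same argument as in the source the paper cites: the paper itself gives no proof of this theorem, quoting Cuntz's survey, where the bivariant Chern character into Puschnigg's local theory is obtained exactly by organising $\HE$ into an additive category, checking that the associated functor on separable $C^*$-algebras is homotopy invariant, $\mathcal{K}$-stable and split exact (the deep inputs you rightly attribute to Puschnigg), and invoking the universal characterization of $\KK$, with compatibility with dilation/exterior products and the $\bbz/2$-grading handled by the uniqueness clause and suspension just as you describe. No genuine gaps beyond those quoted inputs.
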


The last property implies that the Chern character sends invertible
elements of $\K\K$-theory to invertible elements of bivariant local
cyclic cohomology. In particular, $\KK$-equivalence of algebras
implies $\HE$-equivalence, but not conversely. Moreover, if $\alg$ and
$\balg$ are in the class $\underline{\mathfrak N}$ of $C^*$-algebras
for which the universal coefficient theorem holds in $\KK$-theory,
\emph{i.e.}, those which are $\KK$-equivalent to commutative
$C^*$-algebras, then
\beq
\HE_\bullet(\alg,\balg)
~\cong~\Hom_\bbC\big(\K_\bullet(\alg)\otimes_{\bbz}\complex \,,\,
\K_\bullet(\balg)\otimes_\bbz\complex\big) \ .  
\label{HLHomiso}\eeq
If the $\K$-theory $\K_\bullet(\alg)$ is finitely generated, this is
also equal to $\KK_\bullet(\alg,\balg) \otimes_\bbz\complex$.

\section{Noncommutative Poincar\'e duality}\label{sect:ncPD}

Following Connes~\cite{ConnesBook}, in this section we will introduce
the notion of Poincar\'e duality suitable to generic separable
$C^*$-algebras.

\subsection{Fundamental classes}

Given an algebra $\alg$, let $\alg^\op$ denote the opposite algebra of
$\alg$, \emph{i.e.}, the algebra with the same underlying vector space
as $\alg$ but with the product reversed. The stable homotopy category
of {\Ca}s has an involution which is defined by the mapping
$$
\big(\alg\xrightarrow{f}\balg\big)~\longmapsto~\big(
\alg^\op\xrightarrow{f^\op}\balg^\op\big) \ ,
$$
and this involution passes to the category $\underline{\KK}$. Thus given
$x\in \KK_d(\alg,\balg)$, there is a corresponding element $x^\op \in
\KK_d(\alg^\op,\balg^\op)$ with $(x^\op)^\op = x$.

\begin{definition} 
We say that a separable $C^*$-algebra $\alg$ is a (\emph{strong})
  \emph{Poincar\'{e} duality} (\emph{PD}) \emph{algebra} if and only
  if there exists a \emph{fundamental class} for $\alg$ in
  $\K$-homology, \emph{i.e.,} an element $\Delta\in\KK_d(\alg\otimes\alg^\op,
\bbc)=\K^d(\alg\otimes\alg^\op)$ with an \emph{inverse}
class $\Delta^\vee \in \KK_{-d}(\bbc, \alg\otimes
\alg^\op)=\K_{-d}(\alg\otimes\alg^\op)$ such that 
\bea
\Delta^\vee \otimes _{\alg^\op}\Delta &=& 1_\alg \in \KK_0(\alg, \alg)
\ , \nonumber\\[4pt]
\Delta^\vee \otimes _{\alg}\Delta &=& (-1)^d\, 1_{\alg^\op} \in
\KK_0(\alg^\op, \alg^\op) \ .
\label{Poincareconds}\eea
\end{definition}

The use of the opposite algebra in this definition is to describe
$\alg$-bimodules as $(\alg\otimes\alg^\op)$-modules. The classes
$\Delta$ and $\Delta^\vee$ can be illustrated with the help of the
diagrams
\[
\tripleright{\alg}{\alg^\op}{\bbc}{\Delta} ~\in~ 
\KK(\alg\otimes \alg^\op, \bbc) \qquad \mbox{and} \qquad
\tripleleft{\bbc}{\alg}{\alg^\op}{\Delta^\vee}~\in ~
\KK(\bbc, \alg\otimes\alg^\op) \ .
\]
The Poincar\'{e} conditions (\ref{Poincareconds}) can then be
illustrated as follows. The first condition gives rise to the
diagram
\[
\xy
(0,0)*{\bbc}; 
(0,0)*\xycircle(3.3,3.3){-}="a"; 
(20,10)*{\alg}; 
(20,10)*\xycircle(3.3,3.3){-}="b"; 
(20,-10)*{\alg^\op}; 
(20,-10)*\xycircle(3.3,3.3){-}="c"; 
(14,0)*{}="d";
"a"; "d" **\dir{-}?(.6)*\dir{>};
"d"; "b" **\dir{-}?(.6)*\dir{>};
"d"; "c" **\dir{-}?(.6)*\dir{>};
(7,8)*{\Delta^\vee}; 
(35,10)*{\alg}; 
(35,10)*\xycircle(3.3,3.3){-}="x"; 
(35,-10)*{\alg^\op}; 
(35,-10)*\xycircle(3.3,3.3){-}="y"; 
(55,0)*{\bbc}; 
(55,0)*\xycircle(3.3,3.3){-}="z"; 
(41,0)*{}="t";
 "x";"t" **\dir{-}?(.6)*\dir{>};
 "t"; "z"**\dir{-}?(.6)*\dir{>};
 "y"; "t" **\dir{-}?(.6)*\dir{>};
(48,8)*{\Delta}; 
"c" ; "y" **\dir{~};
\endxy
\quad = \quad \four{\alg}{\bbc}{\alg}{\bbc}{\Delta^\vee
\otimes_{\alg^\op}\Delta} \quad
= \quad \simple{\alg}{\alg}{1_\alg}
\]
which represents the identity element $1_\alg \in \KK_0(\alg,
\alg)$. The second condition (in the case $d$ even)
is described by the diagram
\[
\xy
(0,0)*{\bbc}; 
(0,0)*\xycircle(3.3,3.3){-}="a"; 
(20,10)*{\alg}; 
(20,10)*\xycircle(3.3,3.3){-}="b"; 
(20,-10)*{\alg^\op}; 
(20,-10)*\xycircle(3.3,3.3){-}="c"; 
(14,0)*{}="d";
"a"; "d" **\dir{-}?(.6)*\dir{>};
"d"; "b" **\dir{-}?(.6)*\dir{>};
"d"; "c" **\dir{-}?(.6)*\dir{>};
(7,8)*{\Delta^\vee}; 
(35,10)*{\alg}; 
(35,10)*\xycircle(3.3,3.3){-}="x"; 
(35,-10)*{\alg^\op}; 
(35,-10)*\xycircle(3.3,3.3){-}="y"; 
(55,0)*{\bbc}; 
(55,0)*\xycircle(3.3,3.3){-}="z"; 
(41,0)*{}="t";
 "x";"t" **\dir{-}?(.6)*\dir{>};
 "t"; "z"**\dir{-}?(.6)*\dir{>};
 "y"; "t" **\dir{-}?(.6)*\dir{>};
(48,8)*{\Delta}; 
"b" ; "x" **\dir{~};
\endxy
\quad = \quad \four{\alg^\op}{\bbc}{\alg^\op}{\bbc}{\Delta^\vee
\otimes_{\alg}\Delta}  \quad
= \quad \simple{\alg^\op}{\alg^\op}{1_{\alg^\op}}
\]

\begin{definition}
\label{thm:symmetry}
A fundamental class $\Delta$ of a PD algebra $\alg$ is said to be
\emph{symmetric} if $\sigma(\Delta)^\op = \Delta $ in $ \K^d(\alg\otimes 
\alg^\op)$, where
\[
\sigma \,:\,
\alg\otimes\alg^\op ~\longrightarrow~ \alg^\op \otimes \alg
\]
is the flip involution $x\otimes y^\op \mapsto y^\op \otimes x$
and $\sigma$ also denotes the induced map on $\K$-homology.
In terms of the diagram calculus, $\Delta$ being symmetric
implies that
\[
\xy
(15,-10)*{\alg^\op}; 
(15,-10)*\xycircle(3.3,3.3){-}="a"; 
(35,-10)*{\alg^\op}; 
(35,-10)*\xycircle(3.3,3.3){-}="b"; 
"a"; "b" **\dir{-}?(.6)*\dir{>}+(-1,5)*{y^\op};
(15,10)*{\alg};
(15,10)*\xycircle(3.3,3.3){-}="xx"; 
(35,10)*{\alg}; 
(35,10)*\xycircle(3.3,3.3){-}="x"; 
(55,0)*{\bbc}; 
(55,0)*\xycircle(3.3,3.3){-}="z"; 
(41,0)*{}="t";
 "x";"t" **\dir{-}?(.6)*\dir{>};
 "t"; "z"**\dir{-}?(.6)*\dir{>};
 "b"; "t" **\dir{-}?(.6)*\dir{>};
(48,8)*{\Delta}; 
"xx"; "x"**\dir{-}?(.6)*\dir{>}+(-1,5)*{x};
\endxy  \quad
 = \quad
\xy
(15,-10)*{\alg^\op}; 
(15,-10)*\xycircle(3.3,3.3){-}="a"; 
(35,-10)*{\alg^\op}; 
(35,-10)*\xycircle(3.3,3.3){-}="b"; 
"a"; "b" **\dir{-}?(.6)*\dir{>}+(-1,5)*{x^\op};
(15,10)*{\alg};
(15,10)*\xycircle(3.3,3.3){-}="xx"; 
(35,10)*{\alg}; 
(35,10)*\xycircle(3.3,3.3){-}="x"; 
(55,0)*{\bbc}; 
(55,0)*\xycircle(3.3,3.3){-}="z"; 
(41,0)*{}="t";
 "x";"t" **\dir{-}?(.6)*\dir{>};
 "t"; "z"**\dir{-}?(.6)*\dir{>};
 "b"; "t" **\dir{-}?(.6)*\dir{>};
(48,8)*{\Delta}; 
"xx"; "x"**\dir{-}?(.6)*\dir{>}+(-1,5)*{y};
\endxy
\]
for all $x,y\in\KK(\alg, \alg)$.
\end{definition}

A more general form of Poincar\'{e} duality gives the notion of
Poincar\'{e} dual pairs of algebras.
\begin{definition}
A pair of separable $C^*$-algebras $\alg$ and $\balg$ is said to be  a
(\emph{strong}) \emph{Poincar\'{e} dual} (\emph{PD}) \emph{pair} if
and only if there exists a fundamental class $\Delta \in\KK_d(\alg\otimes\balg,
\bbc)=\K^d(\alg\otimes\balg)$ and an inverse class $\Delta^\vee \in
\KK_{-d}(\bbc, \alg\otimes \balg)=\K_{-d}(\alg\otimes\balg)$ such that 
\beq
\Delta^\vee \otimes _{\balg}\Delta &=& 1_\alg \in \KK_0(\alg, \alg) \
, \nonumber\\[4pt]
\Delta^\vee \otimes _{\alg}\Delta &=& (-1)^d\, 1_\balg \in
\KK_0(\balg, \balg) \  . \nonumber
\eea
\end{definition}
In analogy to Poincar\'{e} duality, we can illustrate the classes
$\Delta$ and $\Delta^\vee$ with the help of the diagrams
\[
\tripleright{\alg}{\balg}{\bbc}{\Delta} ~\in~ \KK(\alg\otimes\balg,
\bbc) \qquad \mbox{and} \qquad
\tripleleft{\bbc}{\alg}{\balg}{\Delta^\vee} ~\in~ 
\KK(\bbc, \alg\otimes\balg) \ .
\]
The product on the right with $\Delta$ and the product on the 
left with $\Delta^\vee$ give inverse isomorphisms
\[
\K_i(\alg) ~\xrightarrow{(-)\otimes_\alg \Delta}~ \K^{i+d}(\balg)
\qquad \mbox{and} \qquad
\K^i(\balg) ~\xrightarrow{\Delta^\vee\otimes_\balg(-)}~ 
\K_{i-d}(\alg) \ .
\]
As can be checked with the diagram calculus, one also gets Poincar\'e
duality with coefficients in any auxiliary algebras $\calg,\dalg$
given by
\[
\KK_i(\calg, \alg\otimes\dalg) ~\cong ~
\KK_{i-d}(\calg\otimes\balg, \dalg) \ .
\]
$\KK$-equivalent algebras share the same duality
properties~\cite{BMRS}.

\begin{remark}

The word \lq strong' in the definition of Poincar\'{e} duality above
is sometimes used because there are weaker notions of the duality. The
weaker conditions are described in detail in~\cite[\S2.7]{BMRS}.
\end{remark}

All of these notions carry over easily to the local
bivariant cyclic cohomology described in~\S\ref{LBCC}. Algebras
$\alg$ and $\balg$ are a (\emph{strong}) \emph{cyclic Poincar\'{e} dual}
  (\emph{C-PD}) \emph{pair} if there exists a class $$\Xi~\in~
  \HE_d(\alg\otimes\balg, \bbc)
= \HE^d(\alg\otimes\balg)$$  and a class $\Xi^\vee\in
\HE_d(\bbc,\alg\otimes\balg)=\HE_d(\alg\otimes\balg)$ such that 
\bea
 \Xi^\vee\otimes _\balg\Xi&=&
1_\alg\in \HE_0(\alg,\alg) \ , \nonumber\\[4pt]
\Xi^\vee\otimes _\alg\Xi&=&(-1)^d\,
1_\balg\in \HE_0(\balg,\balg) \ . \nonumber
\eea
The element $\Xi$ is called a \emph{fundamental cyclic cohomology
  class} for the pair $(\alg,\balg)$, and $\Xi^\vee$ is called its {\it
  inverse}. An algebra $\alg$ is a (\emph{strong}) \emph{cyclic Poincar\'{e}
  duality} (\emph{C-PD}) \emph{algebra} if $(\alg,\alg^\op)$ is a C-PD pair. By
multiplicativity of the Chern character, each PD pair can also be made
into a C-PD pair for $\HE$. However, in many instances one does
not wish to take the $\HE$ fundamental class $\Xi$ to be equal to
$\ch(\Delta)$. We will see this explicitly later on.

\subsection{Constructing PD algebras}

An application of the diagram calculus answers the question of how big
is the space of fundamental classes of an algebra.

\begin{proposition}
Let $(\alg,\balg)$ be a PD pair, and let $\Delta \in
\K^d(\alg\otimes\balg)$ be a fundamental class with inverse
$\Delta^\vee  \in \K_{-d}(\alg\otimes\balg)$. Let $\ell \in
\KK_0(\alg, \alg)$ be an invertible element. Then
$\ell\otimes_\alg\Delta \in \K^d(\alg\otimes\balg)$ is another
fundamental class, with inverse $\Delta^\vee \otimes_\alg\ell^{-1}
\in \K_{-d}(\alg\otimes\balg)$.
\end{proposition}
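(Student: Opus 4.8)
The plan is to verify directly the two Poincaré conditions for the candidate class $\ell\otimes_\alg\Delta$ with candidate inverse $\Delta^\vee\otimes_\alg\ell^{-1}$, using only bilinearity and associativity of the Kasparov product (Theorem~\ref{associative}) together with the unital ring structure on $\KK_0(\alg,\alg)$. The key bookkeeping observation is that $\ell$ and $\ell^{-1}$ act on the ``$\alg$-leg'' of $\Delta$ and $\Delta^\vee$, while the contractions over $\balg$ in the Poincaré conditions happen on the other leg, so the two operations commute and one can freely rearrange them.

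First I would compute $(\Delta^\vee\otimes_\alg\ell^{-1})\otimes_\balg(\ell\otimes_\alg\Delta)$. Diagrammatically one has a $\bbc$-node feeding $\Delta^\vee$ (with output legs $\alg$ and $\balg$), then $\ell^{-1}$ attached to the $\alg$-leg, then $\ell$, then $\Delta$, while the $\balg$-legs of $\Delta^\vee$ and $\Delta$ are joined by a wavy line. Since the concatenations can be performed in any order (up to a sign that vanishes in $\KK_0$), I first contract the two adjacent $\alg$-nodes carrying $\ell^{-1}$ and $\ell$, using $\ell^{-1}\otimes_\alg\ell = 1_\alg$, which simply erases them; what remains is exactly the diagram for $\Delta^\vee\otimes_\balg\Delta = 1_\alg$. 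Hence $(\Delta^\vee\otimes_\alg\ell^{-1})\otimes_\balg(\ell\otimes_\alg\Delta) = 1_\alg$. Rigorously, this is an associativity rearrangement: writing everything in terms of the exterior-product formulation and applying Theorem~\ref{associative} to regroup the factors, one reduces $(\Delta^\vee\otimes_\alg\ell^{-1})\otimes_\balg(\ell\otimes_\alg\Delta)$ to $\Delta^\vee\otimes_\balg\bigl(\ell^{-1}\otimes_\alg\ell\otimes_\alg\Delta\bigr) = \Delta^\vee\otimes_\balg\Delta = 1_\alg$.

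Next I would verify the second condition $(\Delta^\vee\otimes_\alg\ell^{-1})\otimes_\alg(\ell\otimes_\alg\Delta) = (-1)^d\,1_\balg$ in $\KK_0(\balg,\balg)$. Here the contraction is over $\alg$, so I cannot cancel $\ell^{-1}$ against $\ell$ first; instead, rearranging the concatenations moves $\ell^{-1}$ and $\ell$ onto \emph{opposite} sides of the $\alg$-contraction, where they recombine. Concretely, by associativity the left-hand side equals $\Delta^\vee\otimes_\alg\bigl(\ell^{-1}\otimes_\alg\ell\otimes_\alg\Delta\bigr)$ once one checks that pushing $\ell^{-1}$ from the $\Delta^\vee$-side through the $\alg$-contraction and $\ell$ from the $\Delta$-side through the same contraction lands both of them in series on the $\alg$-leg; their product is $1_\alg$, leaving $\Delta^\vee\otimes_\alg\Delta = (-1)^d\,1_\balg$. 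The only subtlety is keeping the $\KK$-degrees straight: $\ell$ and $\ell^{-1}$ lie in $\KK_0$, so they contribute nothing to the degree and the sign $(-1)^d$ is inherited unchanged from the original fundamental class.

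The main obstacle is purely formal rather than conceptual: making the ``rearrange the concatenations'' step precise requires expanding $\otimes_\alg$ and $\otimes_\balg$ via the dilation definition of the exterior product and then invoking associativity (Theorem~\ref{associative}) in the correct pattern, together with the identity $x\otimes 1 = x$ for dilations by the unit and the compatibility of dilation with composition. This is routine but notationally heavy; the diagram calculus is precisely what makes it transparent, since all it amounts to is that a node labelled $\ell^{-1}$ immediately followed by a node labelled $\ell$ (or the same pair separated only by a contraction they don't touch) can be deleted. I would therefore present the diagrammatic argument as the proof, noting that it is justified rigorously by the associativity results of~\cite[Appendix~B]{BMRS}.
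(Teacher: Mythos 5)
There is a genuine gap, and it sits exactly at the step the paper singles out as the delicate one. In the product $(\Delta^\vee\otimes_\alg\ell^{-1})\otimes_\balg(\ell\otimes_\alg\Delta)$ the only contraction is over $\balg$; the element $\ell^{-1}$ is attached to the $\alg$-output leg of $\Delta^\vee$ and $\ell$ to the $\alg$-input leg of $\Delta$, and these two legs are precisely the ones that remain \emph{free} after the $\balg$-contraction (they become the output and input of the resulting class in $\KK_0(\alg,\alg)$). So $\ell^{-1}$ and $\ell$ are not adjacent, and no associativity rearrangement can move $\ell^{-1}$ from the $\Delta^\vee$ side onto the $\Delta$ side: your claimed reduction to $\Delta^\vee\otimes_\balg\bigl(\ell^{-1}\otimes_\alg\ell\otimes_\alg\Delta\bigr)$ is not a legitimate identity. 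Indeed, if that move were valid it would show $\ell\otimes_\alg T\otimes_\alg\ell^{-1}=T$ for an arbitrary $T\in\KK_0(\alg,\alg)$, which is false. The correct computation, which is what the paper does, is that tracing the diagram gives the conjugation $\ell\otimes_\alg(\Delta^\vee\otimes_\balg\Delta)\otimes_\alg\ell^{-1}$, and only after invoking the PD relation $\Delta^\vee\otimes_\balg\Delta=1_\alg$ does one get $\ell\otimes_\alg\ell^{-1}=1_\alg$; the paper explicitly warns that one is ``not allowed to change the order of the elements $\Delta^\vee$ and $\Delta$,'' which is exactly the move you make.

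You have also interchanged the two cases. In the second verification, $(\Delta^\vee\otimes_\alg\ell^{-1})\otimes_\alg(\ell\otimes_\alg\Delta)$, the contracted $\alg$-leg is the chain $\Delta^\vee\to\ell^{-1}\to\ell\to\Delta$, so there $\ell^{-1}$ and $\ell$ \emph{are} in series and cancel immediately by associativity, giving $\Delta^\vee\otimes_\alg\Delta=(-1)^d\,1_\balg$ with no ``pushing through the contraction'' needed; the formula you write for this case is correct even though your narrative says the cancellation is unavailable. So your second verification stands (with the justification cleaned up), but the first one — the direction the paper calls the harder one — needs to be redone along the lines above rather than by cancelling $\ell$ against $\ell^{-1}$ first.
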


\begin{proof}
The harder direction to prove is that the product $(\Delta^\vee
\otimes_\alg\ell^{-1})\otimes_{\balg}(\ell\otimes_\alg\Delta)$
produces the identity element in $\KK(\alg,\alg)$. This product is
illustrated by the diagram
\[
\xy
(0,0)*{\bbc}; 
(0,0)*\xycircle(3.3,3.3){-}="a"; 
(20,10)*{\otimes_\alg}; 
(20,10)*\xycircle(3.3,3.3){.}="b"; 
(20,-10)*{\balg}; 
(20,-10)*\xycircle(3.3,3.3){-}="c"; 
(14,0)*{}="d";
"a"; "d" **\dir{-}?(.6)*\dir{>};
"d"; "b" **\dir{-}?(.6)*\dir{>};
"d"; "c" **\dir{-}?(.6)*\dir{>};
(7,8)*{\Delta^\vee}; 
(35,10)*{\alg}; 
(35,10)*\xycircle(3.3,3.3){-}="aa"; 
"b"; "aa" **\dir{-}?(.6)*\dir{>}+(-1,5)*{\ell^{-1}};
(55,10)*{\alg}; 
(55,10)*\xycircle(3.3,3.3){-}="xx"; 
(70,10)*{\otimes_\alg}; 
(70,10)*\xycircle(3.3,3.3){.}="x"; 
(70,-10)*{\balg}; 
(70,-10)*\xycircle(3.3,3.3){-}="y"; 
(90,0)*{\bbc}; 
(90,0)*\xycircle(3.3,3.3){-}="z"; 
(76,0)*{}="t";
 "x";"t" **\dir{-}?(.6)*\dir{>};
 "t"; "z"**\dir{-}?(.6)*\dir{>};
 "y"; "t" **\dir{-}?(.6)*\dir{>};
 "xx"; "x" **\dir{-}?(.6)*\dir{>}+(-1,5)*{\ell};
(83,8)*{\Delta}; 
"c" ; "y" **\dir{~};
\endxy
\]
If we start tracing through the diagram at the incoming node labelled
$\alg$, which comes from the element $\ell\in \KK(\alg, \alg)$, we
obtain the expression $\ell \otimes_\alg(\Delta^\vee \otimes_{\balg}
\Delta)\otimes_\alg \ell^{-1}$. Note here that the element $\ell$ is
attached to the product $\Delta^\vee\otimes_\balg\Delta$, as we are
not allowed to change the order of the elements $\Delta^\vee$ and
$\Delta$. The calculation is now completed by using associativity
of the Kasparov product (see \cite[Proposition~2.7]{BMRS} for
details).
\end{proof}

Similarly, one gets a converse.
\begin{proposition}
Let $(\alg,\balg)$ be a PD pair, and let
$\Delta_1, \Delta_2 \in \K^d(\alg\otimes\balg)$ be fundamental 
classes with inverses $\Delta_1^\vee, \Delta_2^\vee  \in
\K_{-d}(\alg\otimes\balg)$. Then
$\Delta_1^\vee\otimes_{\balg}\Delta^{\phantom{\vee}}_2$ is an
invertible element in $\KK_0(\alg, \alg)$, with inverse given by
$(-1)^d\,\Delta_2^\vee \otimes_{\balg} \Delta^{\phantom{\vee}}_1 \in
\KK_0(\alg, \alg)$.
\end{proposition}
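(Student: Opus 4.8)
The plan is to show that $\Delta_1^\vee\otimes_\balg\Delta_2$ and $(-1)^d\,\Delta_2^\vee\otimes_\balg\Delta_1$ are mutual inverses in the ring $\KK_0(\alg,\alg)$ by composing them and repeatedly applying associativity of the Kasparov product together with the two Poincar\'e identities $\Delta_i^\vee\otimes_\balg\Delta_i=1_\alg$ and $\Delta_i^\vee\otimes_\alg\Delta_i=(-1)^d\,1_\balg$. First I would set $\ell=\Delta_1^\vee\otimes_\balg\Delta_2\in\KK_0(\alg,\alg)$ and compute the composite
\[
\ell\otimes_\alg\bigl((-1)^d\,\Delta_2^\vee\otimes_\balg\Delta_1\bigr)
=(-1)^d\,\bigl(\Delta_1^\vee\otimes_\balg\Delta_2\bigr)\otimes_\alg
\bigl(\Delta_2^\vee\otimes_\balg\Delta_1\bigr).
\]
Drawing the diagram, one sees the middle piece $\Delta_2\otimes_\alg\Delta_2^\vee$ sandwiched between $\Delta_1^\vee$ on the left and $\Delta_1$ on the right, with the two $\balg$-nodes joined by wavy lines. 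The point is that the concatenations may be performed in any order (as in the associativity discussion preceding Theorem~\ref{associative}): contracting the inner $\alg$-pair first, $\Delta_2^\vee\otimes_\alg\Delta_2=(-1)^d\,1_\balg$ by the second Poincar\'e identity for $\Delta_2$, and the remaining factor of $(-1)^d$ cancels the explicit sign, leaving $\Delta_1^\vee\otimes_\balg 1_\balg\otimes_\balg\Delta_1=\Delta_1^\vee\otimes_\balg\Delta_1=1_\alg$ by the first Poincar\'e identity for $\Delta_1$.

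For the other composite I would compute, in exactly the same way,
\[
\bigl((-1)^d\,\Delta_2^\vee\otimes_\balg\Delta_1\bigr)\otimes_\alg
\bigl(\Delta_1^\vee\otimes_\balg\Delta_2\bigr)
=(-1)^d\,\Delta_2^\vee\otimes_\balg\bigl(\Delta_1\otimes_\alg\Delta_1^\vee\bigr)\otimes_\balg\Delta_2 ,
\]
and here contracting the inner $\alg$-pair first uses $\Delta_1^\vee\otimes_\alg\Delta_1=(-1)^d\,1_\balg$ (second Poincar\'e identity for $\Delta_1$), again absorbing the explicit sign, and then $\Delta_2^\vee\otimes_\balg\Delta_2=1_\alg$ for $\Delta_2$. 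Thus both composites equal $1_\alg\in\KK_0(\alg,\alg)$, which is exactly the claim.

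The main obstacle is purely bookkeeping: making sure the signs $(-1)^d$ are tracked correctly when $d$ is odd, and justifying that one is genuinely allowed to contract the inner ($\alg$-labelled) pair of nodes before the outer ($\balg$-labelled) pairs. The first point is handled by the two displayed Poincar\'e conditions in the definition of a PD pair, each of which carries its own factor of $(-1)^d$; the second is precisely the content of the associativity of the exterior/composition product (Theorem~\ref{associative} and the diagram-reordering principle stated after it), so no new input is needed. One should also note the minor subtlety, already flagged in the preceding proposition, that $\ell$ must be attached to the product $\Delta_i^\vee\otimes_\balg\Delta_i$ as a whole and cannot be commuted past $\Delta_i^\vee$ individually; keeping $\Delta_i^\vee$ and $\Delta_i$ in their given order throughout the manipulation takes care of this. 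A fully rigorous version of these associativity juggles is in~\cite[Appendix~B]{BMRS}, to which one can defer the technical details.
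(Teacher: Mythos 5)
Your argument is correct and is essentially the paper's intended proof: the paper gives no separate argument beyond ``Similarly, one gets a converse,'' i.e.\ the same diagram-calculus manipulation (associativity of the Kasparov product plus the two Poincar\'e identities, with technical details deferred to~\cite{BMRS}) used for the preceding proposition, which is exactly what you carry out. The only quibble is a harmless labelling slip: with the paper's input/output conventions, in $(\Delta_1^\vee\otimes_\balg\Delta_2)\otimes_\alg(\Delta_2^\vee\otimes_\balg\Delta_1)$ the inner $\alg$-contraction joins $\Delta_1^\vee$ to $\Delta_1$ (so the second Poincar\'e identity for $\Delta_1$ is used there, and $\Delta_2^\vee\otimes_\balg\Delta_2=1_\alg$ finishes), not the other way around --- but since you verify both composites and the two computations are symmetric under $1\leftrightarrow 2$, this does not affect the conclusion.
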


\begin{cor}
Let $(\alg,\balg)$ be a PD pair. Then the moduli space of
fundamental classes for $(\alg,\balg)$ is isomorphic to the group of
invertible elements in the ring $\KK_0(\alg, \alg)$.
\end{cor}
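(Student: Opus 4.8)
The plan is to realize the set $M$ of all fundamental classes $\Delta\in\K^d(\alg\otimes\balg)$ for $(\alg,\balg)$ as a principal homogeneous space under the group $\KK_0(\alg,\alg)^\times$ of invertible elements of the ring $\KK_0(\alg,\alg)$. Once that is established, fixing any basepoint $\Delta_0\in M$ together with a choice of inverse class $\Delta_0^\vee$ produces the asserted bijection $\KK_0(\alg,\alg)^\times\cong M$, which sends $\ell$ to $\ell\otimes_\alg\Delta_0$ and has inverse $\Delta\mapsto\Delta_0^\vee\otimes_\balg\Delta$. The two preceding propositions supply precisely the two halves of this: the first shows $\ell\otimes_\alg\Delta_0$ is again a fundamental class for every invertible $\ell$, so the first map does land in $M$; the second, applied with $\Delta_1=\Delta_0$ and $\Delta_2=\Delta$, shows $\Delta_0^\vee\otimes_\balg\Delta\in\KK_0(\alg,\alg)^\times$, so the second map lands in the group. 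I would stress that only the fixed data $(\Delta_0,\Delta_0^\vee)$ enter the definitions of the two maps, so no choice of inverse for a \emph{varying} fundamental class is needed; this is what makes both maps well defined on the bare set $M$.

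First I would record that $(\ell,\Delta)\mapsto\ell\otimes_\alg\Delta$ defines an action of $\KK_0(\alg,\alg)^\times$ on $M$, which is immediate from associativity of the Kasparov composition product, the fact that the ring multiplication on $\KK_0(\alg,\alg)$ is that same product, and the identity $1_\alg\otimes_\alg\Delta=\Delta$. The easy half of the bijection is then that the composite $\ell\mapsto\ell\otimes_\alg\Delta_0\mapsto\Delta_0^\vee\otimes_\balg(\ell\otimes_\alg\Delta_0)$ is the identity on $\KK_0(\alg,\alg)^\times$: applying the ``commuting'' form of associativity recorded above to slide the degree-zero element $\ell$ past the gluing over $\balg$ (so that no Koszul sign appears) and then the first Poincar\'e condition, one obtains
\[
\Delta_0^\vee\otimes_\balg(\ell\otimes_\alg\Delta_0)=\ell\otimes_\alg(\Delta_0^\vee\otimes_\balg\Delta_0)=\ell\otimes_\alg 1_\alg=\ell ,
\]
which in torsor language says that the stabiliser of $\Delta_0$ is trivial, i.e.\ the action is free.

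The substantive step, and the one I expect to be the main obstacle, is to verify that the composite in the other order is the identity on $M$ --- equivalently, that the action is transitive --- which amounts to showing $(\Delta_0^\vee\otimes_\balg\Delta)\otimes_\alg\Delta_0=\Delta$ for every $\Delta\in M$. Here one reassociates the triple product so that $\Delta_0^\vee$ is paired first with $\Delta_0$ over $\alg$; the second Poincar\'e condition $\Delta_0^\vee\otimes_\alg\Delta_0=(-1)^d\,1_\balg$ then collapses the product to $\pm(-1)^d\,1_\balg\otimes_\balg\Delta=\pm(-1)^d\,\Delta$. The reassociation interchanges the two degree-$d$ classes $\Delta$ and $\Delta_0$ and so contributes a Koszul sign $(-1)^{d^2}=(-1)^d$, which cancels the $(-1)^d$ from the second Poincar\'e condition and leaves exactly $\Delta$. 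Keeping track of these signs, and of the permutations of input and output terminals implicit in the concatenations, is the only delicate point; the rigorous justification of the reassociations can, as with the two propositions, be extracted from the associativity results of~\cite[Appendix~B]{BMRS}. With both composites identified, the two maps are mutually inverse bijections, $M$ is a $\KK_0(\alg,\alg)^\times$-torsor, and taking $\Delta_0$ as the identity element identifies the moduli space $M$ with the group $\KK_0(\alg,\alg)^\times$, as claimed.
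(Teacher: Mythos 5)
Your proposal is correct and matches the paper's (implicit) argument: the corollary is exactly the combination of the two preceding propositions, realized through the basepoint maps $\ell\mapsto\ell\otimes_\alg\Delta_0$ and $\Delta\mapsto\Delta_0^\vee\otimes_\balg\Delta$, with the reassociations justified by the same associativity results of~\cite[Appendix~B]{BMRS} that the paper invokes for the propositions themselves. Your explicit check that the two composites are the identity (including the observation that the Koszul sign from interchanging the two degree-$d$ classes cancels the $(-1)^d$ from the second Poincar\'e condition, and that in any case a sign would not affect invertibility) simply makes precise what the paper leaves implicit.
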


\begin{remark}
In the commutative case $\alg=C(X)$, the abelian group of units of
$\KK_0(C(X), C(X))$ is, by the universal coefficient theorem for
$\KK$-theory, an extension of the automorphism group $\Aut
(\K^\bullet(X))$ by $\Ext_\bZ (\K^\bullet(X),\K^{\bullet+1}(X))$.
\end{remark}

If $\alg$ is $\KK$-equivalent to the algebra $C_0(X)$ for some locally
compact space $X$, then $\alg^\op$ is $\KK$-equivalent to $C_0(X)^\op
= C_0(X)$ as well, and hence we have the following result.

\begin{theorem}
\label{thm:PDpairclass}
Let $\alg$ be a separable {\Ca} satisfying the universal coefficient
theorem for $\KK$-theory, and with finitely generated
$\K$-theory. Then $\alg$ is always part of a PD pair, and $\alg$ is in
addition a PD algebra if and only if either
$\rk(\K_0(\alg))=\rk(\K_1(\alg))$ \textup{(}in which case we can take
$d=1$\textup{)} or $\Tors(\K_0(\alg))\cong \Tors(\K_1(\alg))$
\textup{(}in which case we can take $d=0$\textup{)}.
\end{theorem}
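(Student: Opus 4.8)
The plan is to reduce everything to a short list of elementary building blocks by means of the universal coefficient theorem, and then to assemble fundamental classes using direct sums and exterior products. Since $\alg$ satisfies the UCT it is $\KK$-equivalent to a commutative {\Ca} $C_0(X)$, so (as recalled just above) $\alg^\op$ is $\KK$-equivalent to $C_0(X)^\op = C_0(X)$, and hence to $\alg$. As $\KK$-equivalent algebras share their duality properties, we may replace $\alg$ by any $\KK$-equivalent algebra; and since the $\K$-theory is finitely generated, the classification of finitely generated abelian groups lets us take it to be a finite direct sum of copies of $\bbc$ (with $\K_0=\bbz$, $\K_1=0$), of $C_0(\bbr)$ (with $\K_0=0$, $\K_1=\bbz$), of the mapping cone $I_n$ of the endomorphism $n\cdot\Id$ of $C_0(\bbr)$ (with $\K_0=\bbz/n$, $\K_1=0$), and of the mapping cone $I_n'$ of $n\cdot\Id$ on $\bbc$ (with $\K_0=0$, $\K_1=\bbz/n$; by the K\"unneth theorem $I_n'$ is $\KK$-equivalent to $I_n\otimes C_0(\bbr)$). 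Thus $\alg$ is $\KK$-equivalent to $\bbc^{\oplus r_0}\oplus C_0(\bbr)^{\oplus r_1}\oplus(\bigoplus_j I_{n_j})\oplus(\bigoplus_k I_{m_k}')$, where $r_0=\rk\K_0(\alg)$, $r_1=\rk\K_1(\alg)$, and the multisets $\{n_j\}$, $\{m_k\}$ encode $\Tors\K_0(\alg)$ and $\Tors\K_1(\alg)$.

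I would dispose of necessity first. By Bott periodicity the dimension of a fundamental class is well defined only modulo $2$, so we may assume $d\in\{0,1\}$. If $\alg$ is a PD algebra, the product with $\Delta$ gives isomorphisms $\K_i(\alg)\cong\K^{i+d}(\alg^\op)\cong\K^{i+d}(\alg)$. For finitely generated $\K$-theory the universal coefficient sequence $0\to\Ext_\bbz(\K_{j-1}(\alg),\bbz)\to\K^j(\alg)\to\Hom_\bbz(\K_j(\alg),\bbz)\to 0$ has free quotient, hence splits, so $\K^j(\alg)\cong\bbz^{\rk\K_j(\alg)}\oplus\Tors\K_{j+1}(\alg)$. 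Comparing ranks and torsion subgroups in $\K_i(\alg)\cong\K^{i+d}(\alg)$ then forces $\rk\K_0(\alg)=\rk\K_1(\alg)$ when $d$ is odd, and $\Tors\K_0(\alg)\cong\Tors\K_1(\alg)$ when $d$ is even.

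For sufficiency I would first record Poincar\'e duality structures on the building blocks. The algebra $\bbc$ is a PD algebra with $d=0$, with $\Delta=\Delta^\vee=1_\bbc$; $C_0(\bbr)$ is a PD algebra with $d=0$, with $\Delta$ the exterior square of the Dirac class $[\Dirac_\bbr]\in\K^1(C_0(\bbr))$ and $\Delta^\vee$ the exterior square of the Bott class $[\beta_\bbr]\in\K_1(C_0(\bbr))$, the Poincar\'e identities following from Bott periodicity after a choice of orientation; and $(\bbc,C_0(\bbr))$ is a PD pair with $d=1$, $\Delta$ being a generator of $\K^1(\bbc\otimes C_0(\bbr))=\K^1(C_0(\bbr))\cong\bbz$. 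Next, each $I_n$ is a PD algebra with $d=1$: using that $n\cdot\Id$ is its own opposite, one constructs classes in $\K^1(I_n\otimes I_n)\cong\bbz/n$ and $\K_{-1}(I_n\otimes I_n)\cong\bbz/n$ from the Bott element via the six-term exact sequence of the mapping cone, the Poincar\'e identities in $\KK_0(I_n,I_n)$ reducing to the nondegeneracy of a self-pairing of $\bbz/n$. The structural facts needed are that a direct sum of PD pairs of a common dimension $d$ is a PD pair of dimension $d$ (take the direct sum of the fundamental classes; the defining identities hold blockwise, since $\otimes_\balg$ is additive and the off-diagonal blocks vanish), and that an exterior product of PD pairs of dimensions $d$ and $d'$ is a PD pair of dimension $d+d'$ (take the exterior product of the classes and use associativity of the Kasparov product). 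Two consequences will be used: $I_n'$, being $\KK$-equivalent to $I_n\otimes C_0(\bbr)$, is a PD algebra with $d=1$; and $I_n\otimes I_n$, which has $\K_0=\K_1=\bbz/n$ by the K\"unneth theorem and hence is $\KK$-equivalent to $I_n\oplus I_n'$, is a PD algebra of dimension $2$, i.e.\ of even dimension.

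These assemble as follows. For the first assertion, fix $d=1$, pair each $\bbc$-summand of $\alg$ with a copy of $C_0(\bbr)$, each $C_0(\bbr)$-summand with a copy of $\bbc$, and each torsion summand with itself; then $\balg:=C_0(\bbr)^{\oplus r_0}\oplus\bbc^{\oplus r_1}\oplus(\bigoplus_j I_{n_j})\oplus(\bigoplus_k I_{m_k}')$ makes $(\alg,\balg)$ a PD pair of dimension $1$. If $\rk\K_0(\alg)=\rk\K_1(\alg)$ then $r_0=r_1$, so this $\balg$ has the same $\K$-groups as $\alg$ and is therefore $\KK$-equivalent to $\alg$, hence to $\alg^\op$; transporting the fundamental class exhibits $\alg$ as a PD algebra with $d=1$. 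If instead $\Tors\K_0(\alg)\cong\Tors\K_1(\alg)$, I would switch to $d=0$: choosing a common indexing of the torsion, $\alg$ is $\KK$-equivalent to $\bbc^{\oplus r_0}\oplus C_0(\bbr)^{\oplus r_1}\oplus\bigoplus_j(I_{n_j}\otimes I_{n_j})$, a direct sum of even-dimensional PD algebras and hence itself a PD algebra with $d=0$; since $\alg^\op$ is $\KK$-equivalent to $\alg$, the same holds for $\alg$. The one step that requires genuine work is the construction of the fundamental class of the torsion building block $I_n$: one must exhibit explicit elements of $\K^1(I_n\otimes I_n)$ and $\K_{-1}(I_n\otimes I_n)$ and verify the two Poincar\'e identities in $\KK_0(I_n,I_n)$, which reduces to a nondegeneracy computation with $\bbz/n$-coefficients through the connecting maps of the mapping cone; everything else is routine $\KK$-theoretic bookkeeping together with the formal stability of Poincar\'e duality under direct sums, exterior products, and $\KK$-equivalence.
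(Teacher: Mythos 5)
Your proposal is correct and follows essentially the same route as the paper's proof: reduce to the commutative/bootstrap case via the UCT, get necessity by comparing ranks and torsion subgroups through the universal coefficient sequence, and get sufficiency by building $\Delta$ and $\Delta^\vee$ one cyclic summand at a time --- your explicit Moore-algebra building blocks and the reduction of the torsion-block duality to a nondegenerate $\bbz/n$ linking pairing simply flesh out what the paper leaves as a sketch (the paper also records an alternative geometric construction via $C_0(T^*X)$ and the Dirac operator on the Clifford bundle, which you do not need). One cosmetic point: $n\cdot\Id$ is not a $*$-homomorphism, so $I_n$ should be taken as the mapping cone of, say, the diagonal embedding $C_0(\bbr)\to M_n(C_0(\bbr))$ (or a commutative Moore-space model), which induces multiplication by $n$ on $\K$-theory; this does not affect the argument.
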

\begin{proof} By hypothesis, $\alg$ is $\KK$-equi\-va\-lent to a
  commutative {\Ca}, hence we can assume $\alg$ abelian without loss
  of generality. By the universal coefficient theorem one has
\bea
\rk\big(\K^j(\alg)\big)&=&\rk\big(\K_j(\alg)\big) \ , \nonumber\\[4pt]
\Tors\big(\K_j(\alg)\big)&\cong& \Tors\big(\K^{j+1}(\alg)\big)
\nonumber
\eea
for $j=0,1$ mod~$2$. Thus the condition for $\alg$ to be a PD algebra
(\emph{i.e.}, we can take the other algebra of the PD~pair to be
$\alg^\op$) is necessary to have an isomorphism $\K_j(\alg)\to
\K^{j+d}(\alg)$. It remains to show that for $\alg$ and $\balg$
commutative, an isomorphism $\K_j(\alg)\to \K^{j+d}(\balg)$ can always
be implemented by a suitable fundamental class $\Delta$. By the
K\"unneth theorem and the universal coefficient theorem, we can build
$\Delta$ and $\Delta^\vee$ from knowledge of the $\K$-theory groups
$\K_\bullet(\alg)$, one cyclic summand at a time. Alternatively,
realize $\alg$ as $C_0(X)$ for some (possibly non-compact) manifold
$X$, take $\balg=C_0(T^*X)$, and construct $\Delta$ from the Dirac
operator on the Clifford algebra bundle of $T^*X$. When $X$ is
spin$^c$, the algebra $\balg$ is $\KK$-equivalent to $\alg=\alg^\op$.
\end{proof}

We can conclude from this last result that PD pairs are quite common.

\begin{lemma}\cite[\S7.1]{BMRS}
Let $\alg,\balg_1,\balg_2$ be separable $C^*$-alge\-bras such that
$(\alg,\balg_1)$ and $(\alg,\balg_2)$ are both PD pairs. Then
$\balg_1$ and $\balg_2$ are $\KK$-equivalent.
\label{PDKKLemma}\end{lemma}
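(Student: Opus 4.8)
The plan is to exploit the two duality isomorphisms coming from each PD pairing together with the uniqueness-up-to-$\KK$-equivalence that is already encoded in the composition product. Suppose $(\alg,\balg_1)$ is a PD pair with fundamental class $\Delta_1 \in \KK_{d_1}(\alg\otimes\balg_1,\bbc)$ and inverse $\Delta_1^\vee$, and $(\alg,\balg_2)$ is a PD pair with $\Delta_2 \in \KK_{d_2}(\alg\otimes\balg_2,\bbc)$ and inverse $\Delta_2^\vee$. The key observation is that each $\Delta_i$ gives, via the coefficient form of Poincar\'e duality noted in the excerpt, an isomorphism $\KK_j(\balg_i,-) \cong \KK_{j+d_i}(\bbc,\alg\otimes -)$ and dually $\KK_j(-,\balg_i)\cong \KK_{j-d_i}(-\otimes\alg,\bbc)$; concatenating the pairing for $\balg_1$ with the copairing for $\balg_2$ should produce a distinguished class in $\KK(\balg_1,\balg_2)$.

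Concretely, I would set
\[
\mu \;=\; \Delta_1^\vee \otimes_{\alg}\Delta_2 \;\in\; \KK_{d_2-d_1}(\balg_1,\balg_2),
\qquad
\nu \;=\; \Delta_2^\vee \otimes_{\alg}\Delta_1 \;\in\; \KK_{d_1-d_2}(\balg_2,\balg_1),
\]
where in each case the interior $\alg$-node of the copairing is annihilated against the $\alg$-node of the pairing, leaving a $\balg_1$ input and a $\balg_2$ output (resp.\ the reverse). In diagram terms one draws $\Delta_1^\vee$ as a node emitting $\alg$ and $\balg_1$, draws $\Delta_2$ as a node absorbing $\alg$ and $\balg_2$, joins the two $\alg$-nodes by a wavy line, and reads off the remaining external nodes. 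Then I would compute $\mu\otimes_{\balg_2}\nu$. Expanding and using associativity of the Kasparov product (Theorem~\ref{associative} and the reordering principle discussed after it), this becomes $\Delta_1^\vee \otimes_\alg \big(\Delta_2 \otimes_{\balg_2}\Delta_2^\vee\big)\otimes_\alg \Delta_1$; but $\Delta_2\otimes_{\balg_2}\Delta_2^\vee$ — i.e.\ $\Delta_2^\vee\otimes_{\balg_2}\Delta_2$ read with $\Delta_2^\vee$ first, which is the defining relation $\Delta_2^\vee\otimes_{\balg_2}\Delta_2 = 1_\alg$ for the PD pair $(\alg,\balg_2)$ — collapses to the identity on $\alg$ (up to a sign $(-1)^{\text{something}}$ that is harmless since we only need invertibility, and which lies in $\KK_0$ when the parities match). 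What survives is $\Delta_1^\vee \otimes_\alg 1_\alg \otimes_\alg \Delta_1 = \Delta_1^\vee\otimes_\alg\Delta_1$, which by the second Poincar\'e relation for $(\alg,\balg_1)$ equals $(-1)^{d_1}1_{\balg_1}$. The symmetric computation of $\nu\otimes_{\balg_1}\mu$ gives $(-1)^{d_2}1_{\balg_2}$. Rescaling $\nu$ by the appropriate sign then exhibits $\mu$ as an invertible element of $\KK(\balg_1,\balg_2)$, so $\balg_1$ and $\balg_2$ are $\KK$-equivalent.

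I expect the main obstacle to be purely bookkeeping rather than conceptual: tracking exactly which interior node is glued to which, making sure the order $\Delta_i^\vee$-then-$\Delta_i$ (which is not reversible) is respected at each collapse, and keeping the Koszul signs straight so that the leftover sign genuinely sits in $\KK_0$ (or can be absorbed into a renormalization of the inverse). The diagram calculus of \S\ref{sect:KKdiagrams} is designed precisely to make these concatenations order-independent up to sign, so I would carry out the whole argument diagrammatically and only at the end translate back into the algebraic identities, citing \cite[Appendix~B]{BMRS} for the rigorous justification of the reorderings, exactly as in the proofs of the two preceding propositions.
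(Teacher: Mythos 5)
Your construction is exactly the cited argument of \cite[\S7.1]{BMRS}: glue the two duality classes along the common $\alg$-node to get $\Delta_1^\vee\otimes_\alg\Delta_2$ and $\Delta_2^\vee\otimes_\alg\Delta_1$, and check via associativity and the two Poincar\'e relations that their composites are $(-1)^{d_1}1_{\balg_1}$ and $(-1)^{d_2}1_{\balg_2}$, which suffices for invertibility. The only slip is bookkeeping: with the paper's input/output conventions the remaining $\balg_1$-node of $\Delta_1^\vee$ is an \emph{output} and the remaining $\balg_2$-node of $\Delta_2$ is an \emph{input}, so $\Delta_1^\vee\otimes_\alg\Delta_2$ lies in $\KK_{d_2-d_1}(\balg_2,\balg_1)$ rather than $\KK(\balg_1,\balg_2)$ — but since your $\mu$ and $\nu$ just trade places, this does not affect the conclusion.
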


\section{Riemann-Roch theorem for noncommutative
  spaces}\label{NCRRTheorem}

In the commutative case, the Grothendieck-Riemann-Roch theorem is a
key ingredient in arriving at the Minasian-Moore formula for D-brane
charge in Type~II superstring theory. In this section we will describe
the noncommutative version of this theorem. Later on we will connect
this formalism with the construction of D-brane charges in
noncommutative spaces.

\subsection{Grothendieck-Riemann-Roch theorem in the smooth
  case\label{GRRclassical}}

We begin by recalling the classical case, in particular the properties of
$\K$-theory that are required to define the Gysin map and to state the
Grothendieck-Riemann-Roch theorem. Let $f: N\rightarrow M$ be a smooth
map of smooth manifolds. It is said to be
\emph{$\K$-oriented} if $TN\oplus f^*(TM)$ is a spin$^c$ vector bundle over
$N$. In this case, there is a Gysin or ``wrong way'' map
\beq
  f_!\, :\,  \K^\bullet (N) ~\longrightarrow~ \K^{\bullet +d} (M) \ ,
\label{GysinK}\eeq
where $d= \dim (M) - \dim (N)$, and we regard complex
$\K$-theory as being $\bbZ_2$-graded by even and odd degree.
Under the same assumptions, there are also Gysin maps
$$
f_* \,:\, \H^\bullet_c(N, \bQ) ~\longrightarrow~ \H^{\bullet +d}_c (M, \bQ)
$$
defined in the standard way using Poincar\'e duality in cohomology
with compact supports.

Then the Grothendieck-Riemann-Roch theorem, in the version given by
Atiyah and Hirzebruch~\cite{AH}, can be phrased as the commutativity
of the diagram
$$
\begin{CD}
\K^\bullet(N) @>f_!>> \K^{\bullet+d}(M)\\
        @V{\sf Todd}(N) \smile{\ch}VV
@VV{\Todd}(M)  \smile \ch V \\
\H^\bullet_c(N, \bQ)   @>>f_*> \H^{\bullet+d}_c(M, \bQ) \ ,
\end{CD}
$$
with the bottom row $\bbZ_2$-graded by even and odd degree,
giving
\beq
\ch\big(f_!(\xi)\big) \smile {\Todd}(M) = f_*\big(
\ch(\xi)\smile {\Todd}(N)\big)
\label{GRRformclass}\eeq
for all $\xi \in \K^\bullet(N)$. Here ${\Todd}(M)\in\H_c^\bullet(M,\bQ)$
denotes the Todd characteristic class of the tangent bundle of
$M$. There are many useful variants of this beautiful formula. In the
remainder of this section we will develop a noncommutative version
which encompasses both a cohomological and a homological Riemann-Roch
theorem, along with other variants in a unified framework.  

\subsection{Todd classes\label{Toddclass}}

We will begin by recalling
from~\cite{BMRS} the definition of the Todd class of
a PD algebra $\alg$ as an element in bivariant cyclic
cohomology. Let $\underline{\mathfrak{D}}$ be the class of all
separable $C^*$-algebras $\alg$ for which there exists another
separable $C^*$-algebra $\balg$ such that $(\alg, \balg)$ is a PD
pair. For any such algebra $\alg$, we use Lemma~\ref{PDKKLemma} to fix
a representative of the $\KK$-equivalence class of $\balg$ and denote
it by $\tilde{\alg}$. In general, there is no canonical choice for
$\tilde{\alg}$. If $\alg$ is a PD algebra, then the canonical choice
$\tilde{\alg} = \alg^\op$ will always be made.

\begin{definition}
\label{defn:Todd}
Let $\alg\in\underline{\mathfrak{D}}$, let $\Delta\in\K^d(\alg\otimes
\tilde{\alg})$ be a fundamental $\K$-homology class for the pair
$(\alg, \tilde{\alg})$, and let $\Xi\in\HE^d(\alg\otimes \tilde{\alg})$
be a fundamental cyclic cohomology class. Then the {\it Todd class} of
$\alg$ is defined to be the class in the ring $\HE_{0}(\alg,\alg)$
given by
\[
\begin{split}
\Todd(\alg)~ &:=~\Xi^\vee\otimes_{ \tilde{\alg}}\ch(\Delta) \\ \\
& = \quad
\bottom{\bbc}{\alg}{\tilde{\alg}}{\Xi^\vee}{\alg}
{\tilde{\alg}}{\bbc}{\ch(\Delta)}
\label{Toddgendef}
\end{split}
\]
\end{definition}

The Todd class is invertible, with inverse given by
\[
\Todd(\alg)^{-1}=(-1)^d~\ch\big(\Delta^\vee\,\big)
\otimes_{\tilde\alg}\Xi \ .
\]
The following example provides the motivation behind this definition.

\begin{example}
Let $\alg=C(X)$ with $X$ a compact complex manifold. Then $\alg$ is a
PD algebra with fundamental class $\Delta$ given by the Dolbeault
operator on $X\times X$. After passage from $\alg$ to the dense
subalgebra $C^\infty(X)$, we can identify $\HE$ with the usual
periodic cyclic homology $\HP$. Thus $\HE_0(\alg,\alg)$ can be
identified with $\End(\H^\bullet(X,\bQ))$ (see
eq.~(\ref{HLHomiso})). The natural choices of fundamental classes
$\Xi$ and $\Xi^\vee$ come from the usual Poincar\'e duality in
rational cohomology using the orientation cycle $[X]$. Then
$\Todd(\alg)$ is just cup product with the usual Todd cohomology class
$\Todd(X)\in\H^\bullet(X,\bQ)$.
\end{example}

\subsection{$\K$-oriented morphisms\label{Koriented}}

One of the most interesting applications of the present formalism is
a definition of Gysin or ``wrong way'' homomorphisms, following Connes
and Skandalis~\cite{CS}. More
generally, we can now study {$\K$-oriented morphisms}. If
$f\colon\alg\to\balg$ is a morphism of {\Ca}s in a suitable category,
a $\K$-orientation is a functorial way of assigning a corresponding
element $f! \in \KK_d(\balg, \alg)$. (Note that this element
points in the \emph{opposite} direction from $f$.)  The
Gysin homomorphism is then given by
$$
f_!:=(-)\otimes_\balg f!\,:\,\K_\bullet(\balg)~\longrightarrow~
\K_{\bullet+d}(\alg) \ .
$$

If $\alg$ and $\balg$ are both PD algebras, then any morphism
$f\colon\alg\to\balg$ is $\K$-oriented and the element $f!$ is
determined as 
\[
\begin{split}
f! ~ & =~ (-1)^{d_\alg}\,
\Delta_\alg^\vee \otimes_{\alg^\op} [f^\op]_{\KK} \otimes_{\balg^\op}
  \Delta_\balg \\ \\ & = (-1)^{d_\alg}\quad
\xy
(0,0)*{\bbc}; 
(0,0)*\xycircle(3.3,3.3){-}="a"; 
(20,10)*{\alg}; 
(20,10)*\xycircle(3.3,3.3){-}="b"; 
(20,-10)*{\alg^\op}; 
(20,-10)*\xycircle(3.3,3.3){-}="c"; 
(14,0)*{}="d";
"a"; "d" **\dir{-}?(.6)*\dir{>};
"d"; "b" **\dir{-}?(.6)*\dir{>};
"d"; "c" **\dir{-}?(.6)*\dir{>};
(7,8)*{\Delta^\vee_\alg}; 
(35,10)*{\balg}; 
(35,10)*\xycircle(3.3,3.3){-}="x"; 
(35,-10)*{\balg^\op}; 
(35,-10)*\xycircle(3.3,3.3){-}="y"; 
(55,0)*{\bbc}; 
(55,0)*\xycircle(3.3,3.3){-}="z"; 
(41,0)*{}="t";
 "x";"t" **\dir{-}?(.6)*\dir{>};
 "t"; "z"**\dir{-}?(.6)*\dir{>};
 "y"; "t" **\dir{-}?(.6)*\dir{>};
(48,8)*{\Delta_\balg}; 
(22.7,-10)*{}="cc";
(32.5,-10)*{}="yy"; 
"cc" ; "yy" **\dir{-}?(.6)*\dir{>}+(0,4)*{f^\op};
\endxy
\end{split}
\]
with $d=d_\alg-d_\balg$. To check functoriality of this construction,
observe that if $\alg$, $\balg$ and $\calg$ are PD algebras, and if
$f:\alg\to\balg$, $g:\balg\to\calg$ are morphisms of $C^*$-algebras,
then
\bea
&& \bigl( (-1)^{d_\alg}\, \Delta^\vee_\alg \otimes_{\alg^\op}
[f^\op]_{\KK}\otimes_{\balg^\op} \Delta_\balg \bigr)
~\otimes_\balg~\bigl( (-1)^{d_\balg}\,
\Delta^\vee_\balg \otimes_{\balg^\op} [g^\op]_{\KK} \otimes_{\calg^\op}
\Delta_\calg \bigr) \nonumber \\ && \qquad\qquad \qquad\qquad 
=~ (-1)^{d_\alg}\, \Delta^\vee_\alg \otimes_{\alg^\op} \big[(g\circ
f)^\op\big]_{\KK} \otimes_{\calg^\op} \Delta_\calg \nonumber
\eea
by associativity of the Kasparov product and the basic
relation 
\[
(-1)^{d_\balg}\,\Delta^\vee \otimes_\balg\Delta = 1_{\balg^\op} \ .
\]  
This can again be checked using the diagram calculus.

The following examples illustrate the connection with the classical
notion of $\K$-orientation given in~\S\ref{GRRclassical} above.

\begin{example}

Let $h: X \hookrightarrow Y$ be a $\K$-oriented smooth
embedding of smooth compact manifolds. Since $TX
\oplus TX$ has a canonical almost complex structure,
the normal bundle $$N_Y X = h^*(TY)/TX$$ is a spin$^c$ vector
bundle. In particular, the zero section embedding $\iota^X : X
\hookrightarrow N_Y X$ is $\K$-oriented. Let $S(N_YX)$ denote the
bundle of spinors associated to the spin$^c$ structure on $N_YX$. Let
$\mathcal{E}=(\mathcal{E}_x)_{x\in X}$, $\mathcal{E}_x=L^2(N_x,S_x)$
be the Hilbert bundle over $N=N_YX$ obtained from
the pullback $S=\pi_N{}^*S(N_YX)$, where $\pi_N:N_YX\to X$ is the bundle
projection. For $\xi\in C(X)$, the composition $\xi\circ\pi_N$ acts by
multiplication as an endomorphism of $\mathcal{E}$ and thus there is a
natural homomorphism $C(X)\to{\rm End}(\mathcal{E})$. When the
codimension $n = \dim(Y) -\dim(X)$ of $X$ in $Y$ is even, Clifford
multiplication by the orientation of $N_YX$ splits $S(N_YX)$ into a
Whitney sum of half-spin bundles $S(N_YX)^\pm$ which define a
$\bbZ_2$-grading $\mathcal{E}=\mathcal{E}^+\oplus\mathcal{E}^-$. Clifford
multiplication $c(v) :\pi_N{}^*S(N_YX)^+ \to \pi_N{}^*S(N_YX)^-$ by
the tautological section $v$ of the bundle
$$\pi_N{}^*N_YX~\longrightarrow~N_YX \ , $$ which assigns to a vector
in $N_YX$ the same vector in $\pi_N{}^*N_YX$, defines a morphism
$$F_{v(\nu)}~\in~\Hom\big(\mathcal{E}_{v(\nu)}^+\,,\,\mathcal{E}_{v(\nu)}^-\big)$$
for all $\nu\in N_YX$.
The corresponding Kasparov bimodule $(\mathcal{E},F)$ defines an
invertible element ${\iota^X}!\in\KK_n(C(X),C_0(N_YX))$ associated to
the classical Atiyah-Bott-Shapiro (ABS) representative of the Thom class of
the complex vector bundle $N_YX$. Upon choosing a
riemannian metric on $Y$, there is a diffeomorphism $\Phi$ from a
tubular neighbourhood $U$ of $h(X)$ onto a neighbourhood of the zero
section in the normal bundle $\iota^X(X)$, giving an invertible element
$[\Phi]_{\KK}\in\KK_0(C_0(N_Y X),C_0(U))$. For any open subset $j : U
\hookrightarrow Y$, the extension by zero defines an element
$j!\in\KK_0(C_0(U),C(Y))$. Then we may associate to the embedding $h$
the class in $\KK$-theory defined by
$$
h! = {\iota^X}!\otimes_{C_0(N_YX)}[\Phi]_{\KK}\otimes_{C_0(U)}j!~\in~
\KK_n\big(C(X)\,,\,C(Y)\big)
$$
which, by homotopy invariance of $\KK$-theory and functoriality of
Gysin classes, is independent of the choices made. In the applications
to string theory, this construction establishes that the charge of a
D-brane supported on $X$ takes values in the $\K$-theory of
spacetime~$Y$~\cite{OS99,W}.

\label{Gysinemb}\end{example}

\begin{example}

Let $\pi : Y \to Z$ be a $\K$-oriented, proper smooth submersion of
smooth compact manifolds. Since every smooth compact manifold $Y$ has
a smooth embedding $h:Y\hookrightarrow\RR^{2q}$ for $q$
sufficiently large, a parametrized version yields a smooth embedding
$\kappa=(\pi,h): Y \hookrightarrow Z \times \RR^{2q}$ which is
$\K$-oriented. The corresponding $\KK$-theory class is
$\kappa!\in\KK_a(C(Y),C_0(Z \times\RR^{2q}))$, where $a = \dim(Z)
-\dim(Y)+2q $. Let $\iota^Z: Z\hookrightarrow Z \times \RR^{2q}$
denote the zero section embedding, with invertible Thom class
${\iota^Z}!\in\KK_{2q}(C(Z),C_0(Z\times\RR^{2q}))$. Then the
$\KK$-theory class corresponding to the submersion $\pi$ is defined as
$$
\pi! = \kappa! \otimes_{C_0(Z\times
\RR^{2q})}\big({\iota^Z}!\big)^{-1}~\in~
\KK_b\big(C(Y)\,,\,C(Z)\big) 
$$
where $b = \dim(Z) - \dim(Y)$, and is again independent of the choices
made. When $Y=X\times Z$ is a product manifold, the corresponding
Gysin map $\pi_!$ gives the analytic index for $Z$-families of
elliptic operators on $X$.

\end{example}

\begin{example}

Let $f : N \to M$ be an arbitrary $\K$-oriented smooth
proper map, as in the setting of~\S\ref{GRRclassical} above. The
map $f$ can be canonically factored as $f=p^M\circ{\rm gr}(f)$,
firstly into the $\K$-oriented smooth graph embedding ${\rm gr}(f) : N
\hookrightarrow N\times M$ defined by ${\rm gr}(f)(x) = (x,
f(x))$. The construction of Example~\ref{Gysinemb} above then gives an
element ${\rm gr}(f)!\in\KK_m(C(N),C(N \times M))$ with $m =
\dim(M)$. Secondly, the projection $p^M : N\times M \to M$ is a
$\K$-oriented proper submersion, when restricted to the image of ${\rm
  gr}(f)$. The corresponding $\KK$-theory class is $${p^M}!~\in~\KK_b\big(C(M
\times N)\,,\,C(M)\big)$$ where $b = - \dim(N)$. The Gysin map of $f$ in
eq.~(\ref{GysinK}) is then defined via the $\KK$-theory element
$$
f! = {\rm gr}(f)!\otimes_{C(N\times M)}{p^M}! \ .
$$

\end{example}

\begin{example}
\label{ex:quasihomo}
Let $\jalg$, $\alg$ and $\balg$ be separable {\Ca}s, and suppose there
is a split short exact sequence
\begin{equation}
\label{eq:sses}
\xymatrix{
0~\ar[r] & ~\jalg~ \ar[r]^j & ~\alg~ \ar[r]_q & ~\balg ~
\ar@/_/[l]_s \ar[r] & ~0 \ . }
\end{equation}
Then the morphism $j$ is naturally $\K$-oriented.  Indeed,
by the split exactness property of $\KK$ from~\S\ref{sec:KKaxioms},
$j$ and $s$ induce an isomorphism
\[
\Psi=(j_* \oplus s_*)\,\colon\,  \KK(\alg,\jalg) \oplus \KK(\alg,\balg)
~\xrightarrow{\approx}~\KK(\alg,\alg) \ .
\]
We define $j!\in \KK(\alg,\jalg)$ to be the image of $1_\alg$ in
$\KK(\alg,\jalg)$ under $\Psi^{-1}$ followed by projection
onto the first summand in the direct sum, and observe that it has the
desired properties of a $\K$-orientation.  This is the same as the
Kasparov element called $\pi_s$ in \cite[\S17.8]{Black}.
This choice of $\K$-orientation depends on the splitting $s$,
which is the reason for the notation.
\end{example}

\subsection{The noncommutative Grothendieck-Riemann-Roch
  theorem\label{NCGRR}}

The Grothendieck-Rie\-mann-Roch formula compares 
the two bivariant cyclic cohomology classes ${\ch}(f!)$ and $f*$.
\begin{theorem}
\label{thm:GRR}
Suppose that $\alg$ and $\balg$ are PD algebras with given $\HE$
fundamental classes. Let $f*\in \HE(\balg,\alg)$ be the Gysin map
defined using the $\HE$ fundamental classes.
Then one has the Grothendieck-Rie\-mann-Roch formula 
\begin{equation}
{\ch}(f!) = (-1)^{d_\balg}~
\Todd(\balg) \otimes_{\balg} (f* \,) \otimes_{\alg}\,
\Todd(\alg)^{-1} \ . 
\label{eqn:GRR-strong}
\end{equation}
\end{theorem}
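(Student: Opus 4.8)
The plan is to unwind both sides of \eqref{eqn:GRR-strong} using the definitions of $f!$, $f*$ and $\Todd$, and then reduce everything to the multiplicativity of the Chern character (Theorem~\ref{ChernHLthm}) together with associativity of the Kasparov and $\HE$ products (Theorem~\ref{associative}). Recall that since $\alg$ and $\balg$ are PD algebras, the element $f!\in\KK_d(\balg,\alg)$ is given by
\[
f! = (-1)^{d_\alg}\,\Delta_\alg^\vee \otimes_{\alg^\op} [f^\op]_{\KK}\otimes_{\balg^\op}\Delta_\balg \ ,
\]
and likewise $f*\in\HE(\balg,\alg)$ is obtained by the identical formula with $\Delta_\alg$, $\Delta_\balg$ replaced by the chosen $\HE$ fundamental classes $\Xi_\alg$, $\Xi_\balg$ and $[f^\op]_{\KK}$ replaced by $[f^\op]_{\HE}$. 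The definitions of $\Todd(\alg)$ and $\Todd(\alg)^{-1}$ from Definition~\ref{defn:Todd} read
\[
\Todd(\alg) = \Xi_\alg^\vee \otimes_{\alg^\op}\ch(\Delta_\alg) \ , \qquad
\Todd(\alg)^{-1} = (-1)^{d_\alg}\,\ch(\Delta_\alg^\vee)\otimes_{\alg^\op}\Xi_\alg \ ,
\]
and similarly for $\balg$.

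First I would apply $\ch$ to the formula for $f!$. By multiplicativity of the Chern character (Theorem~\ref{ChernHLthm}(1)) and the fact that $\ch([f^\op]_{\KK}) = [f^\op]_{\HE}$ (Theorem~\ref{ChernHLthm}(3)), this gives
\[
\ch(f!) = (-1)^{d_\alg}\,\ch(\Delta_\alg^\vee)\otimes_{\alg^\op}[f^\op]_{\HE}\otimes_{\balg^\op}\ch(\Delta_\balg) \ .
\]
Next I would expand the right-hand side of \eqref{eqn:GRR-strong}. Substituting the expressions for $\Todd(\balg)$, $f*$ and $\Todd(\alg)^{-1}$ yields
\[
(-1)^{d_\balg}\,\bigl(\Xi_\balg^\vee\otimes_{\balg^\op}\ch(\Delta_\balg)\bigr)\otimes_\balg
(-1)^{d_\alg}\bigl(\Xi_\balg^\vee\otimes_{\balg^\op}[f^\op]_{\HE}\otimes_{\alg^\op}\Xi_\alg\bigr)\otimes_\alg
(-1)^{d_\alg}\bigl(\ch(\Delta_\alg^\vee)\otimes_{\alg^\op}\Xi_\alg\bigr)
\]
up to the sign bookkeeping; here I am being slightly cavalier with which copy of $\Xi_\balg$ versus $\Xi_\balg^\vee$ appears, and the careful placement is exactly the point of the diagram calculus. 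The strategy is then to use the defining relation of the $\HE$ fundamental class, namely $(-1)^{d_\balg}\,\Xi_\balg^\vee\otimes_\balg\Xi_\balg = 1_{\balg^\op}$ and $\Xi_\balg^\vee\otimes_{\balg^\op}\Xi_\balg = 1_\balg$ (and likewise for $\alg$), to cancel the four ``interior'' fundamental-class factors, collapsing the product down to $(-1)^{d_\alg}\,\ch(\Delta_\alg^\vee)\otimes_{\alg^\op}[f^\op]_{\HE}\otimes_{\balg^\op}\ch(\Delta_\balg)$, which is precisely $\ch(f!)$. I would carry out this cancellation diagrammatically: draw the concatenated diagram with input node $\balg$ and output node $\alg$, identify the wavy-line-joined pairs coming from $\Xi_\balg\!-\!\Xi_\balg^\vee$ and $\Xi_\alg^\vee\!-\!\Xi_\alg$, and observe that each such pair contracts to an identity strand by the Poincar\'e relations, leaving exactly the diagram for $\ch(f!)$.

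The bookkeeping of the sign $(-1)^{d_\balg}$ and of the $d_\alg$'s, together with the fact that all products live in $\HE_0$ where the $\pm$ signs from permutation of terminals disappear, is what makes the final identity come out cleanly; the main obstacle is not any deep input but rather ensuring that the fundamental classes are inserted in the correct order — one is never allowed to commute $\Delta^\vee$ past $\Delta$ (as stressed in the proof of the Proposition on constructing PD algebras) — so that the cancellations $(-1)^{d}\Xi^\vee\otimes\Xi = 1$ can actually be applied. I would handle this by being scrupulous about attaching the Todd factors to the correct side of $f*$ and by invoking associativity (Theorem~\ref{associative}) to rebracket before each cancellation. A careful version of this computation, together with the parallel statement for a PD pair rather than a PD algebra, is carried out in \cite{BMRS}; here the diagram calculus reduces it to the visual manipulation just described.
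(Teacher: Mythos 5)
Your proposal is correct and takes essentially the same route as the paper's proof: expand both sides via the definitions of $f!$, $f*$ and the Todd classes, apply multiplicativity and functoriality of $\ch$, and then cancel the interior cyclic fundamental classes by rebracketing with associativity and the relations $\Xi^\vee\otimes_{\alg^\op}\Xi=1_\alg$, $\Xi^\vee\otimes_\alg\Xi=(-1)^{d}\,1_{\alg^\op}$ --- the paper merely packages these cancellations as the two identities $\bigl(\Xi_\balg^\vee\otimes_{\balg^\op}\ch(\Delta_\balg)\bigr)\otimes_\balg\Xi_\balg=(-1)^{d_\balg}\,\ch(\Delta_\balg)$ and $\Xi_\alg^\vee\otimes_\alg\bigl(\ch(\Delta_\alg^\vee\,)\otimes_{\alg^\op}\Xi_\alg\bigr)=(-1)^{d_\alg}\,\ch(\Delta_\alg^\vee\,)$, deferring the diagrammatic verification to~\cite{BMRS} just as you do. (Only your displayed expansion of $f*$ has its labels garbled; it should read $\Xi_\alg^\vee\otimes_{\alg^\op}[f^\op]_{\HE}\otimes_{\balg^\op}\Xi_\balg$, as you state correctly in words, and with that fixed the sign bookkeeping closes up exactly as you describe.)
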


\begin{proof}
We write out both sides of eq.~\eqref{eqn:GRR-strong} using the
various definitions, and simplify using associativity of the
Kasparov product and the functorial properties of the bivariant
Chern character (see~Theorem~\ref{ChernHLthm}). After some algebra,
one finds that the theorem then follows if 
\bea
\bigl( \Xi_\balg^\vee\otimes_{\balg^\op}\ch(\Delta_\balg)\bigr)
\otimes_{\balg} \Xi_\balg 
&=& (-1)^{d_\balg}~ \ch(\Delta_\balg) \ , \nonumber \\[4pt]
\Xi_\alg^\vee \otimes_{\alg}  \bigl(\ch(\Delta_\alg^\vee\,)
\otimes_{\alg^\op}\Xi_\alg\bigr)
&=& (-1)^{d_\alg}~\ch(\Delta^\vee_\alg) \ . \nonumber
\eea
Both of these equalities follow easily from the diagram
calculus. See~\cite[Theorem~7.10]{BMRS} for further details.
\end{proof}

\begin{cor}
 \ ${\sf ch}\big(f_!(\xi)\big) \otimes_{\alg}  {\sf Todd}(\alg)= 
(-1)^{d_\balg}\,f_* \bigl({\sf ch}(\xi) \otimes_{\balg} 
{\sf Todd}(\balg) \bigr)$  \ for all $\xi\in \K_\bullet(\balg)$.
\label{Gysincor}\end{cor}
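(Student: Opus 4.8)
The plan is to obtain this statement as a direct consequence of the noncommutative Grothendieck-Riemann-Roch formula of Theorem~\ref{thm:GRR}, by evaluating both bivariant cyclic cohomology classes on a $\K$-theory element and then using the formal properties of the composition product together with the multiplicativity of the bivariant Chern character. Throughout I write $f_!(\xi)=\xi\otimes_\balg f!$ for $\xi\in\K_\bullet(\balg)=\KK_\bullet(\bbc,\balg)$, and $f_*(\eta)=\eta\otimes_\balg f*$ for the corresponding homological Gysin map defined on $\HE_\bullet(\bbc,\balg)$.

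First I would apply the Chern character to $f_!(\xi)$ and invoke its multiplicativity (Theorem~\ref{ChernHLthm}) to write $\ch\bigl(f_!(\xi)\bigr)=\ch(\xi)\otimes_\balg\ch(f!)$. Next I would substitute the Grothendieck-Riemann-Roch formula $\ch(f!)=(-1)^{d_\balg}\,\Todd(\balg)\otimes_\balg f*\otimes_\alg\Todd(\alg)^{-1}$ from Theorem~\ref{thm:GRR} and reorganise using bilinearity and associativity of the composition product (both of which hold in $\HE$, cf.~\S\ref{LBCC}) to obtain
\[
\ch\bigl(f_!(\xi)\bigr)=(-1)^{d_\balg}\,\bigl(\ch(\xi)\otimes_\balg\Todd(\balg)\bigr)\otimes_\balg f*\otimes_\alg\Todd(\alg)^{-1}\ .
\]

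Finally I would compose both sides on the right, over $\alg$, with the invertible class $\Todd(\alg)\in\HE_0(\alg,\alg)$. Using $\Todd(\alg)^{-1}\otimes_\alg\Todd(\alg)=1_\alg$ (the invertibility of the Todd class recorded immediately after Definition~\ref{defn:Todd}) and associativity once more, the last two factors cancel and there remains
\[
\ch\bigl(f_!(\xi)\bigr)\otimes_\alg\Todd(\alg)=(-1)^{d_\balg}\,\bigl(\ch(\xi)\otimes_\balg\Todd(\balg)\bigr)\otimes_\balg f*=(-1)^{d_\balg}\,f_*\bigl(\ch(\xi)\otimes_\balg\Todd(\balg)\bigr)\ ,
\]
which is precisely the claimed identity. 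There is no real obstacle in this argument: the whole of the substantive content sits in Theorem~\ref{thm:GRR}, and the only point requiring care is the bookkeeping of the algebra over which each $\otimes$ is taken --- in particular that $\Todd(\balg)$ must remain attached to $\ch(\xi)$ to the left of $f*$ and $\Todd(\alg)^{\pm1}$ to its right, since one is not permitted to permute the factors of a composition product. If desired, each of the above rearrangements can be made completely transparent via the diagram calculus of~\S\ref{sect:KKdiagrams}, by concatenating the relevant nodes in whatever order is convenient.
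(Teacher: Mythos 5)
Your proposal is correct and follows exactly the route the paper intends: the corollary is stated as an immediate consequence of Theorem~\ref{thm:GRR}, obtained by writing $f_!(\xi)=\xi\otimes_\balg f!$, applying multiplicativity of the bivariant Chern character, substituting the Grothendieck-Riemann-Roch formula, and cancelling $\Todd(\alg)^{-1}\otimes_\alg\Todd(\alg)=1_\alg$ after composing with the invertible Todd class. Your bookkeeping of the algebras over which the products are taken is accurate, so there is nothing to add.
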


\begin{example}

Let $\pi:X\to Z$ be a smooth fibration over a closed smooth manifold
$Z$ whose fibres $X/Z$ are compact, closed spin$^c$ manifolds of even
dimension. Let $g^{X/Z}$ be a metric on the vertical tangent
sub-bundle $T(X/Z)$ of $TX$. Let $S_{X/Z}$ be the spinor bundle
associated to $(T(X/Z),g^{X/Z})$. Let $T^{\rm H}X$ be a horizontal
vector sub-bundle of $TX$ such that $TX=T^{\rm H}X\oplus T(X/Z)$. Then
$(T^{\rm H}X,g^{X/Z})$ determines a canonical euclidean connection
$\nabla^{X/Z}$~\cite[Theorem~1.9]{bismut}. Clifford multiplication of
$T^*(X/Z)$ on $S_{X/Z}$ and the connection $\nabla^{X/Z}$ define a
fibrewise Dirac operator $\Dirac_z$ acting on the Hilbert space
$\bun_z=L^2(X/Z,S_{X/Z})$ along the fibre $\pi^{-1}(z)\cong X/Z$ for
$z\in Z$. Then $\{\Dirac_z\}_{z\in Z}$ is a smooth family of
elliptic operators parametrized by $Z$ acting on an
infinite-dimensional, $\bZ_2$-graded Hilbert bundle $\bun\to Z$ whose
fibre at $z\in Z$ is $\bun_z$. The corresponding Kasparov bimodule
defines the longitudinal Dirac element
$\pi!\in\KK(C_0(X),C_0(Z))$~\cite{CS}. Given a complex vector bundle
$E\to X$, the Atiyah-Singer index theorem asserts that
$\pi_!(E)=E\otimes_{C_0(X)}\pi!$ is equal to the analytic index of the
family of Dirac operators on $X/Z$ coupled to
$E$. Corollary~\ref{Gysincor} then expresses the Chern character of
the index in $\K$-theory in terms of the Chern character of $E$ as
$$
\ch\big(\pi_!(E)\big)=\pi_*\big(\Todd(X/Z)\smile\ch(E)\big) \ .
$$
This quantity is used to compute global worldsheet anomalies in string
theory~\cite{FW}.
\end{example}

\section{D-brane charge in noncommutative spaces}\label{sect:dbrane}

In this section we will apply our formalism to the description of
D-brane charges in very general noncommutative settings. After
recalling the classical situation, we will describe the physical
context in which the $\KK$-theory of $C^*$-algebras classifies states
of D-branes in superstring theory. We will then derive the
noncommutative formula for Ramond-Ramond charges and present some
noncommutative examples.

\subsection{The Minasian-Moore formula}

We begin by recalling the classical formula. Let $X$ be a compact
spin$^c$ manifold of dimension $d$. Then Poincar\'{e} duality in
ordinary cohomology of $X$ is the statement that the pairing given by
the cup product
\beq
(x,y)_\H = \big\langle x\smile y\,,\, [X]\big\rangle
\label{cohpairingclass}\eeq
for $x\in\H^\bullet(X,\zed)$, $y\in\H^{d-\bullet}(X,\zed)$ is
non-degenerate. On the other hand, in $\K$-theory the analytic index
provides another pairing given by
\[ 
(E,F)_\K = \text{index}(\Dirac_{E\otimes F})
\]
where $E,F$ are classes in $\K^0(X)$ represented by vector
bundles over $X$, and $\Dirac$ is the Dirac operator
associated with the $\spin^c$ structure on $X$.

The classical Chern character gives a ring isomorphism
$$
\ch \,:\, \K^j(X) \otimes \bbq ~\longrightarrow~ \H^j(X, \bbq)
$$
where $j=0, 1$ and on the right-hand side we mean the periodised
cohomology of $X$ with rational  coefficients. This isomorphism is not
compatible with the two pairings. However, the Atiyah-Singer index
theorem provides the formula
\[
\text{index}(\Dirac_{E\otimes F}) = \big\langle \Todd (X) \smile
\ch(E\otimes F)\,,\, [X] \big\rangle
\]
from which it follows that the modified Chern character $\Ch$ defined
by the generalized Mukai vector $\Ch(E) = \sqrt{\Todd(X)} \smile
\ch(E)$ is an isometry with respect to the index pairing on
$\K$-theory and the topological pairing on cohomology. This simple
observation is the starting point of~\cite{BMRS} and plays an
important role in this section.

\begin{definition}
  An ({\it untwisted}) {\it D-brane} in $X$ is a triple $(W,E,f)$,
  where $f:W\hookrightarrow X$ is a closed, embedded spin$^c$
  submanifold and $E\in\K^0(W)$. The submanifold $W$ is called the
  {\it worldvolume} and the class $E$ the {\it Chan-Paton bundle} of the
  D-brane.
\label{Dbranedef}\end{definition}

From Example~\ref{Gysinemb} it follows that any D-brane $(W,E,f)$ in
$X$ determines a canonically defined $\KK$-theory class
$f!\in\KK(C(W),C(X))$. In fact, every D-brane naturally determines a
Fredholm module over the algebra $C(X)$ and we can think of D-branes
$(W,E,f)$ as providing $\K$-homology classes on spacetime $X$,
Poincar\'e dual to $\K$-theory classes~$f_!(E)$. In this context,
Baum-Douglas equivalence~\cite{BD} (or stable homotopy equivalence) is
sometimes refered to as ``gauge equivalence'' between D-branes.

\begin{definition}
The {\it Ramond-Ramond charge} of a D-brane
  $(W,E,f)$ in $X$ is the modified Chern characteristic class 
  $${\sf Q}(W,E)~:=~{\sf
  Ch}\big(f_!(E)\big)~ =~ \ch\big(f_!(E)\big) \smile 
\sqrt{\Todd(X)}\in\H^\bullet(X,\rat) \ .
  $$
\end{definition}

This is the Minasian-Moore formula~\cite{MM}. In the underlying
boundary superconformal field theory, the charge vector ${\sf Q}(W,E)$
is the zero-mode part of the boundary state of the D-brane in the
Ramond-Ramond sector. In the setting of the D-brane field theory on
the worldvolume $W$, ${\sf Q}(W,E)=f_*({\sf D}_{\rm WZ}(W,E))$ is the
image of the \emph{Wess-Zumino class} ${\sf D}_{\rm
  WZ}(W,E)\in\H^\bullet(W,\rat)$ (for vanishing $B$-field) under the
Gysin map in cohomology. From the Grothendieck-Riemann-Roch formula
(\ref{GRRformclass}) and naturality of the Todd class, one has
\beq
{\sf D}_{\rm WZ}(W,E)=\ch(E)\smile\sqrt{\Todd(W)/\,\Todd(N_XW)} \ .
\label{WZclassical}\eeq
This formula characterizes the Ramond-Ramond charge as the anomaly
inflow on the D-brane worldvolume, determined by the index theorem on
$W$.

\subsection{Algebraic characterization of D-branes}

We will now describe a physical framework for $\KK$-theory which will
naturally apply to the construction of D-brane charges in more general
situations. (See~\cite{Asakawa:2001vm,Periwal:2000eb} for related but
somewhat differently motivated characterizations.) Let us fix a closed
string background specified by a compact spin$^c$ manifold $X$. In the
absence of background supergravity form fields, this is
equivalent to fixing the commutative $C^*$-algebra $\alg=C(X)$. An
open string in $X$ may be regarded as an embedding of a compact
Riemann surface $\Sigma\hookrightarrow X$, called the
\emph{worldsheet}, with boundary $\partial\Sigma$. A classical D-brane
then corresponds to a choice of submanifold $W\subset X$ such that the
open string fields define relative maps
$$(\Sigma,\partial\Sigma)~\longrightarrow~(X,W) \ . $$

Alternatively, we may regard an open string as an oriented embedding
of the interval $I=[0,1]$ into $X$, with the boundaries $t=0$ and
$t=1$ refering to the {\it endpoints} of the strings. The worldsheet
is then $\Sigma=\real\times I$. In the associated boundary
superconformal field theory, we require that the Cauchy problem for
the Euler-Lagrange equations on $\Sigma$ have a unique solution
locally. This requires imposing suitable boundary conditions on the
open string fields. A classical D-brane simply represents such a
choice of boundary conditions. To fully specify the boundary
conditions one must also specify a hermitean vector bundle $E$ over
$W$ with connection. The requirement that the boundary conditions
preserve superconformal invariance constrains the submanifold $W$. For
example, in the absence of $H$-flux the worldvolume $W$ must be
spin$^c$~\cite{FW} and we recover Definition~\ref{Dbranedef} above of
a D-brane.

We would now like to extend this description to {\it quantum}
D-branes, those which define consistent boundary conditions after
quantization of the boundary superconformal field
theory. Unfortunately, there is no known satisfactory way to define a
fully quantum boundary condition in these generic instances. In the
following we will describe a conjectural $C^*$-algebraic framework for
quantum D-branes in the context of open string field theory.

Let $\mathfrak{B}$ be the set of boundary conditions for open strings
in $X$. For $a,b\in\B$, we call an open string with $a$ boundary
conditions at its $t=0$ end and $b$ boundary conditions at its $t=1$
end an \emph{$a$-$b$ open string}. One can glue two open strings
together by the usual concatenation of paths $I\hookrightarrow X$,
provided that the boundary condition at the initial endpoint of one
string matches the boundary condition at the final endpoint of the
other string. After quantization, this can be used to define a
noncommutative algebra of observables $\dalg_a$ consisting of $a$-$a$
open string fields~\cite{SW,WOverview}, thus generating a set of
D-brane algebras $\dalg_a$, $a\in\B$. These algebras are required to
carry an action of the worldsheet superconformal algebra by
automorphisms of $\dalg_a$. By concatenation, for any pair of boundary
conditions $a,b\in\B$ the $a$-$b$ open string fields form a
$\dalg_a$-$\dalg_b$ bimodule $\bun_{a,b}$. Similarly, the $b$-$a$ open
string fields generate a $\dalg_b$-$\dalg_a$ bimodule $\bun_{b,a}$. In
particular, $\bun_{a,a}=\dalg_a$ is the trivial $\dalg_a$-bimodule
given by the natural actions of the algebra $\dalg_a$ on itself via
multiplication from the left and from the right.

We would now like to associate to this data a $\complex$-linear
category that we may wish to call the ``category of D-branes''. The
objects of this category are the elements of $\mathfrak{B}$, and the
set of morphisms from $a$ to $b$ is the bimodule $\bun_{a,b}$ (or more
precisely the corresponding state space of the boundary superconformal
field theory). Given some other boundary condition $c\in\B$, we must
then require that there is an associative bilinear composition map
\beq
\bun_{a,b}\times\bun_{b,c}~\longrightarrow~\bun_{a,c} \ .
\label{vertexmap}\eeq
The naive guess for this map is the natural string vertex combining an
$a$-$b$ open string field and a $b$-$c$ open string field to generate
an $a$-$c$ open string field. However, the problem is that this map
need not be well-defined. Recall that elements of the open string
bimodule $\bun_{a,b}$ are the vertex operators
$V_{a,b}:I\to\End(\hil_{a,b})$ with $\hil_{a,b}$ a separable Hilbert
space. The product of $V_{a,b}$ and $V_{b,c}$ is encoded in a singular
operator product expansion which for $t>t'$ takes the formal symbolic
form
\beq
V_{a,b}(t)\cdot V_{b,c}(t'\,)=\sum_{j=1}^N\,\frac1{(t-t'\,)^{h_j}}~
W_{a,b,c|j}(t,t'\,)
\label{OPE}\eeq
for some positive integer $N$, where
$W_{a,b,c|j}:I\times I\to\End(\hil_{a,c})$, the real numbers
$h_j\in[0,\infty)$ are known as conformal dimensions, and the quantity
$(t-t'\,)^{-h_j}$ is understood in terms of its formal power series
expansion
$$
\frac1{(t-t'\,)^{h_j}}=\sum_{n=0}^\infty\,\frac{h_j\,(h_j+1)\cdots(h_j+n-1)}
{n!}~(t'\,)^n\,t^{-n-h_j}
$$
for $t>t'$. When the conformal dimensions are non-zero, the leading
singularities of the operator product expansion do not give an
associative algebra in the standard sense.

One resolution to this problem was pointed out by Seiberg and
Witten~\cite{SW}. They consider a particular limit of the
boundary conformal field theory on $X$ whereby the dimensions $h_j$
vanish. Quantization of the point particle at the endpoint of an open
string with boundary condition $a\in\B$ gives a separable Hilbert
space $\hil_a$ which is acted upon by open string tachyon vertex
operators $V_a(t)$, $t\in[0,1]$. In the Seiberg-Witten limit, these
operators live in a separable $C^*$-algebra $\dalg_a$ representing a
noncommutative spacetime. (This algebra is a deformation of
$\alg=C(X)$ when there is a constant $B$-field present on $X$.) The
product $\dalg_a\otimes\dalg_b$ generates the full algebra of operators
on the string ground states, acting irreducibly on the quantum
mechanical Hilbert space $\hil_{a,b}=\hil_a\otimes\hil_b^\vee$. In
this case, the mapping (\ref{vertexmap}) is well-defined and can be
computed from eq.~(\ref{OPE}), which takes an $a$-$b$ vertex operator
$V_{a,b}(t)$ and a $b$-$c$ vertex operator $V_{b,c}(t'\,)$ to the
$a$-$c$ vertex operator $V_{a,c}(t'\,)$ given by the product
$$
V_{a,c}(t'\,)=\lim_{t\to t'}\,V_{a,b}(t)\cdot V_{b,c}(t'\,) \ .
$$
Furthermore, by associativity of the operator product expansion in the
limit, $$(V_{a,b}\,V_{b})\,V_{b,c}=V_{a,b}\,(V_{b}\,V_{b,c})$$ for
any $b$-$b$ vertex operator $V_{b}$, the assignment extends to a map
\beq
\bun_{a,b}\otimes_{\dalg_b}\bun_{b,c}~\longrightarrow~\bun_{a,c} \ .
\label{vertexmapassoc}\eeq

It follows that the bimodule $\bun_{a,b}$ in this instance establishes
a (strong) Morita equivalence between the $C^*$-algebras $\dalg_a$ and
$\dalg_b$, with dual $\bun_{a,b}^\vee\cong\bun_{b,a}$. (The opposite
algebra $\dalg_{a}^\op$ is generated by reversing the orientations of
the $a$-$a$ open strings.) Furthermore, we can define elements
$$\alpha_{a,b}~\in~\KK(\dalg_a,\dalg_b) \qquad\mbox{and}\qquad
\alpha_{b,a}~\in~\KK(\dalg_b,\dalg_a)$$ by the equivalence classes of
the Kasparov bimodules $(\bun_{a,b},0)$ and $(\bun_{b,a},0)$,
respectively. Since one can canonically identify the
$\dalg_a$-bimodule $$\bun_{a,b}\otimes_{\dalg_b}\bun_{b,a}$$ with
$\dalg_a$ and the $\dalg_b$-bimodule
$$\bun_{b,a}\otimes_{\dalg_a}\bun_{a,b}$$ with $\dalg_b$, one has
$$\alpha_{a,b}\otimes_{\dalg_b}\alpha_{b,a}~=~1_{\dalg_a}
\qquad\mbox{and}\qquad
\alpha_{b,a}\otimes_{\dalg_a}\alpha_{a,b}~=~1_{\dalg_b}\ . $$ It follows
that the D-brane algebras $\dalg_a$ and $\dalg_b$ in this instance are
$\KK$-equivalent. From the above arguments it also follows that the
$\KK$-equivalence
$\alpha_{a,c}=\alpha_{a,b}\otimes_{\dalg_b}\alpha_{b,c}$ is associated
to the Morita equivalence between $\dalg_a$ and $\dalg_c$. This
involutive sort of $\KK$-equivalence is the categorical statement of
T-duality of the noncommutative spacetimes represented by these
algebras~\cite{BMRS,SW}.

In the general case, the operator product expansion (\ref{OPE}) is not
associative, and it need not even lead to a well-defined map
(\ref{vertexmap}). In particular, $\bun_{a,b}$ need not be a Morita
equivalence bimodule. There are special instances when the desired
categorical construction goes through without the need of taking any
limits. When the worldsheet superconformal field theory on
$\Sigma=\real\times I$ is a two-dimensional topological field theory,
then the set of D-branes has the structure of a category called the
category of ``topological
D-branes''~\cite{Aspinwall:2004jr,Moore:2006dw}. To extend the
construction to more generic situations, we will assume that generally
there is an appropriate ``completion'' of $\bun_{a,b}$ to a Kasparov bimodule
$(\bun_{a,b},F_{a,b})$, defining an element in $\KK(\dalg_a,\dalg_b)$,
with the mapping (\ref{vertexmapassoc}) given by the Kasparov
composition product. The canonical duality above is generically
lost, as the $\KK$-theory classes of $(\bun_{a,b},F_{a,b})$ and
$(\bun_{b,a},F_{b,a})$ need not even be inverses of each other. When they
are, the D-brane algebras $\dalg_a$, $\dalg_b$ are
$\KK$-equivalent. If the additional application of this
$\KK$-equivalence is of order two up to Morita equivalence, then the
D-brane algebras are said to be \emph{T-dual}, in the spirit
of~\cite{BMRS}. In~\S\ref{CorrTduality} we will encounter a number of
examples of T-dual algebras in this sense. The correspondence picture
there will relate these dualities to generalizations of the (smooth)
Fourier-Mukai transform and show that the proper physical
interpretation of $\KK$-theory groups is that of generalized morphisms
between open string algebras (boundary conditions) which provide
dualities among the corresponding noncommutative spacetimes. It is not
clear whether or not this fits into Witten's noncommutative
$*$-algebras~\cite{WittenSFT}, as in our formulation the manifest
background independence of open string field theory, represented by
stable isomorphisms
$\dalg_a\otimes\mathcal{K}\cong\dalg_b\otimes\mathcal{K}$ for
$a,b\in\B$, is generically lost.

In this formalism, it is more natural to think of the D-branes
themselves as algebras $\dalg_a$, rather than as Fredholm modules over
the spacetime algebra $\alg$. String theory then requires some extra
structure. We assume that for each $a\in\B$ there are canonical
elements $$\lambda_a~\in~\KK(\dalg_a,\alg) \qquad\mbox{and}\qquad
\xi_a~\in~\KK(\alg,\dalg_a) \ , $$ which determine maps between the
$\K$-theory and $\K$-homology of $\alg$ and of $\dalg_a$ via the
composition product as explained in~\S\ref{CompProd}. Passing
this structure over to $\HE$-theory, this is simply the construction
of a boundary state for the D-brane in local cyclic homology
$\HE_\bullet(\alg)$ of the closed string background, now including all
excited oscillator states of
the strings and not just their zero-modes. The boundary state is not
arbitrary but satisfies conditions of modular invariance in the
boundary conformal field theory, dictated through the Cardy
conditions. Presumably this is the essence of the generalized
structure of the morphisms between D-brane algebras in bivariant
theories, although in this crude target space picture we have not been
able to make this more precise in the general case. One instance where
the Cardy condition can be written down is the case when a pair of
D-brane algebras $\dalg_a,\dalg_b$ are $\KK$-equivalent, implemented
by a class $$\alpha_{ab}~\in~ \KK(\dalg_a,\dalg_b)$$ with inverse
$$\alpha_{ab}^{-1}~\in~\KK(\dalg_b,\dalg_a) \ . $$ Then the Cardy
condition in the present context can be stated as the requirement that
the map $\KK(\dalg_a,\dalg_a)\to\KK(\dalg_b,\dalg_b)$ induced by the
elements $\lambda_b\otimes_\alg\xi_a$ and
$\lambda_a\otimes_\alg\xi_b$ is given by the ``similarity
transformation''
$$
\big(\lambda_b\otimes_\alg\xi_a\big)\otimes_{\dalg_a}\psi
\otimes_{\dalg_a}\big(\lambda_a\otimes_\alg\xi_b\big)=
\alpha_{ab}^{-1}\otimes_{\dalg_a}\psi\otimes_{\dalg_a}\alpha_{ab}
$$
for all $\psi\in\KK(\dalg_a,\dalg_a)$.

Below we will assume that this picture extends to more general
situations in which spacetime is described by a noncommutative
separable $C^*$-algebra $\alg$. Our original definition of D-brane can
be easily cast into this framework.

\begin{example}

Consider a D-brane $(W,E,f)$ in $X$ in the sense of
Definition~\ref{Dbranedef}, and let $$\dalg~=~C(W)
\qquad\mbox{and}\qquad \alg~=~C(X) \ . $$ In the Ramond sector, the
ground states of $W$-$W$ open strings, with both ends on $W$, generate
the $\dalg$-bimodule $\bun_{W,W}=L^2(W,S(W)\otimes S(N_XW))$, where
$S(W)$ is the bundle of spinors on $W$ and $S(N_XW)$ is the bundle of
spinors on the normal bundle $N_XW$ to $W$ in $X$. Using the Dirac
operator associated to $S(W)\otimes S(N_XW)$ and pushing forward with
the group homomorphism induced by $f$, we obtain a $\K$-homology class
in $\K^\bullet(\alg)=\K_\bullet(X)$. However, we can also consider a
``spacetime filling'' D-brane $(X,\id_X^\complex,\Id_X)$, with
$\id_X^\complex$ the trivial complex line bundle over $X$, and open
strings with one end on $W$ and the other end on $X$. Then the Hilbert
space of ground state Ramond sector $X$-$W$ strings is the
$\alg$-$\dalg$ bimodule $\bun_{X,W}=L^2(W,S(W)\otimes E)$, giving a
class $\lambda_W\in\KK(\dalg,\alg)$ in the manner explained
above. When $E\cong\id_W^\complex$, this is the element $f!$
constructed in Example~\ref{Gysinemb}, while the element
$\xi_W\in\KK(\alg,\dalg)$ is the class $[f^*]_{\KK}$ of the
induced pullback morphism $$f^*\,:\,\alg~\longrightarrow~\dalg \ . $$
The equivalence between these two descriptions of the D-brane is
essentially the process of tachyon condensation on spacetime filling
branes to stable states of lower-dimensional D-branes, as explained
in~\cite{OS99,ReisSz,W}, represented by the $\KK$-theory class
$f!$. Generic classes in $\KK(\dalg,\alg)$ can be interpreted in terms
of ``generalized brane decays'' as described in~\cite{RSV}.

\label{spinorex}\end{example}

\subsection{The isometric pairing formula} 

Just as in the commutative case, Poincar\'{e} duality leads to a
nondegenerate pairing in $\K$-theory (modulo torsion). Since
Poincar\'{e} duality was formulated in terms of classes in bivariant
$\KK$-theory, there is a corresponding pairing in $\K$-homology as
well. Moreover, once we have the correct notion of Poincar\'{e}
duality in bivariant cyclic homology, we are able to define a pairing
on ordinary cyclic homology and cohomology. Below we assume that the
algebra $\alg$ is a PD algebra, but our results all generalize
straightforwardly to more general algebras in the class
$\underline{\mathfrak{D}}$ introduced in~\S\ref{Toddclass}, with
the opposite algebras $\alg^\op$ replaced everywhere by $\tilde\alg$.

Let $\alpha\in\K_i(\alg)$, $\beta\in\K_{-d-i}(\alg)$. Then there
is a pairing $(-, -)_\K : \K_i(\alg) \times
\K_{-d-i}(\alg) \rightarrow \bbz$ given by
\beq\label{pairing}
\begin{aligned}
(\alpha,\beta)_\K~&=~(\alpha\otimes\beta^\op)
\otimes_{\alg\otimes \alg^\op}\Delta\\ & \\
& = \quad \both{\bbc}{\alg}{\alg^\op}{\alpha\otimes \beta^\op}{\alg}
{\alg^\op}{\bbc}{\Delta}
~\in~\KK_0(\bbC, \bbC)=\zed \ .
\end{aligned}
\eeq
If the fundamental class $\Delta$ is symmetric then this defines a
symmetric form. This formula generalizes the index pairing in
$\K$-theory. If we take $\alg=\alg^\op =C(X)$ for $X$ a compact
spin$^c$ manifold and let $\Delta=\Dirac\otimes\Dirac$ be
the Dirac class, then using the definition of Kasparov's product one
computes for any $\alpha, \beta\in \K^0(X)$,
$
(\alpha,\beta)_\K =  \Dirac_\alpha \otimes_{C(X)} \beta =
\Ind (\Dirac_{\alpha\otimes\beta})\,.
$

If $\alg$ also satisfies Poincar\'{e} duality in bivariant local
cyclic homology, then there is a nondegenerate pairing
$
(-,-)_\H:\HE_i(\alg)\otimes_\complex\HE_{d-i}(\alg)\rightarrow
\complex
$
given by the explicit formula
\beq
(x,y)_\H=(x\otimes y^\op)\otimes_{\alg\otimes\alg^\op}\Xi
\label{HPpairing2}
\eeq
for $x\in\HE_i(\alg)$ and $y\in\HE_{d-i}(\alg)$. It is important to
note that there is another way to define a pairing on cyclic
homology. It arises when we replace $\Xi$ in eq.~(\ref{HPpairing2}) by
the image $\ch(\Delta)$ under the Chern character of the fundamental
class $\Delta$ that provides Poincar\'{e} duality in $\K$-theory. As
in the commutative case, the Todd class links the two pairings. We
then arrive at an analog of the classical isometry result that we
started this section with.

\begin{theorem}
Suppose that the noncommutative spacetime $\alg$ satisfies 
the universal coefficient theorem for local bivariant cyclic homology,
and that $\HE_\bullet(\alg)$ is a finite-dimensional vector space. If
$\alg$ has symmetric \textup{(}even-dimensional\textup{)} fundamental
classes both in $\K$-theory and in cyclic cohomology, then the
modified Chern character
$$
\ch \otimes_\alg \sqrt{{\sf Todd}(\alg)}\,:\,\K_\bullet(\alg) ~
\longrightarrow ~\HE_ {\bullet}(\alg)
$$
is an isometry with respect to the inner products \eqref{pairing} and
\eqref{HPpairing2}.
\label{isomtheorem}\end{theorem}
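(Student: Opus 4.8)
The plan is to unwind both pairings using the formal properties of the bivariant Chern character (Theorem~\ref{ChernHLthm}) and associativity of the exterior and composition products (Theorem~\ref{associative}, which holds equally in $\HE$), reducing the isometry claim to a single identity between the Todd class and the two fundamental classes. Write $T=\sqrt{\Todd(\alg)}\in\HE_0(\alg,\alg)$: such a square root exists because, under the universal coefficient hypothesis and~\eqref{HLHomiso}, $\HE_0(\alg,\alg)\cong\KK_0(\alg,\alg)\otimes_\bbz\complex$ is a finite-dimensional $\complex$-algebra, in which every invertible element (such as $\Todd(\alg)$) admits a square root; and write $\Ch(\alpha)=\ch(\alpha)\otimes_\alg T$ for the modified Chern character. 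Since $(-)^\op$ is a product-preserving involution and $\ch$ is compatible with $(-)^\op$ and with exterior products, one has $\Ch(\beta)^\op=\ch(\beta)^\op\otimes_{\alg^\op}T^\op$, and the diagram calculus (equivalently the interchange law contained in Theorem~\ref{associative}) gives
\[
\Ch(\alpha)\otimes\Ch(\beta)^\op=\bigl(\ch(\alpha)\otimes\ch(\beta)^\op\bigr)\otimes_{\alg\otimes\alg^\op}\bigl(T\otimes T^\op\bigr).
\]
Plugging this into the definition~\eqref{HPpairing2} of $(-,-)_\H$ and regrouping by associativity, the pairing $\bigl(\Ch(\alpha),\Ch(\beta)\bigr)_\H$ becomes $\bigl(\ch(\alpha)\otimes\ch(\beta)^\op\bigr)\otimes_{\alg\otimes\alg^\op}\bigl[(T\otimes T^\op)\otimes_{\alg\otimes\alg^\op}\Xi\bigr]$.

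The core of the proof is then to establish $(T\otimes T^\op)\otimes_{\alg\otimes\alg^\op}\Xi=\ch(\Delta)$ in $\HE^d(\alg\otimes\alg^\op)$. This is where I would use the \emph{symmetry} of the cyclic fundamental class: the $\HE$-analogue of Definition~\ref{thm:symmetry}, applied with $x=T$ and $y=1_\alg$, yields $(1_\alg\otimes T^\op)\otimes_{\alg\otimes\alg^\op}\Xi=(T\otimes 1_{\alg^\op})\otimes_{\alg\otimes\alg^\op}\Xi$, i.e.\ the ``half-Todd'' class may be slid from the $\alg^\op$-leg of $\Xi$ onto its $\alg$-leg. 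Since $T\otimes T^\op$ is the composite of the dilations $T\otimes 1_{\alg^\op}$ and $1_\alg\otimes T^\op$, associativity together with this symmetry identity collapses $(T\otimes T^\op)\otimes_{\alg\otimes\alg^\op}\Xi$ to $\bigl((T\otimes_\alg T)\otimes 1_{\alg^\op}\bigr)\otimes_{\alg\otimes\alg^\op}\Xi=(\Todd(\alg)\otimes 1_{\alg^\op})\otimes_{\alg\otimes\alg^\op}\Xi$. Now I invoke the identity already isolated in the proof of the Grothendieck--Riemann--Roch theorem (Theorem~\ref{thm:GRR}), namely $(\Todd(\alg)\otimes 1_{\alg^\op})\otimes_{\alg\otimes\alg^\op}\Xi=(-1)^{d}\,\ch(\Delta)$, which equals $\ch(\Delta)$ because the fundamental classes are even-dimensional. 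Feeding this back, and then using multiplicativity of $\ch$ together with $\ch(\alpha\otimes\beta^\op)=\ch(\alpha)\otimes\ch(\beta)^\op$, gives
\[
\bigl(\Ch(\alpha),\Ch(\beta)\bigr)_\H=\ch\bigl((\alpha\otimes\beta^\op)\otimes_{\alg\otimes\alg^\op}\Delta\bigr)=\ch\bigl((\alpha,\beta)_\K\bigr),
\]
and since $\ch\colon\KK_0(\bbc,\bbc)=\bbz\hookrightarrow\complex=\HE_0(\bbc,\bbc)$ is the canonical inclusion, the right-hand side is $(\alpha,\beta)_\K$, which is the asserted isometry. The finite-dimensionality and UCT hypotheses then upgrade this: via~\eqref{HLHomiso}, $\ch$ and hence $\Ch$ is the complexification map $\K_\bullet(\alg)\to\K_\bullet(\alg)\otimes_\bbz\complex\cong\HE_\bullet(\alg)$, so $\Ch$ induces an isometric \emph{isomorphism} after tensoring with $\complex$.

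The step I expect to be the main obstacle is the symmetry manoeuvre, and more generally keeping the bookkeeping of opposite algebras, $(-)^\op$-twists, and the two a priori distinct fundamental classes $\Delta$ and $\Xi$ consistent across the diagram manipulations. Conceptually this is the essential point: without symmetry of $\Xi$ the Todd contribution would sit entirely on one tensor factor and could not be reorganised into a ``square root on each leg'' form, so the modified Chern character would fail to be an isometry — which is exactly why $\sqrt{\Todd(\alg)}$ and not $\Todd(\alg)$ occurs, in parallel with the classical Mukai/Minasian--Moore computation. The symmetry of the $\K$-homology class $\Delta$ plays the complementary role of making $(-,-)_\K$ a genuinely symmetric bilinear form, so that ``isometry'' is meaningful on the $\K$-theory side. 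The remaining ingredients — compatibility of the bivariant Chern character with $(-)^\op$ and with exterior products — belong to the formal package of Theorems~\ref{ChernHLthm} and~\ref{associative} and are transparent in the diagram calculus, so I would treat them as routine.
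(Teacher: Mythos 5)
Your proof is correct and follows essentially the same route as the paper's: you split off the two $\sqrt{\Todd(\alg)}$ factors via compatibility of $\ch$ with exterior products, use symmetry of $\Xi$ to slide one onto the other leg so that they combine into $\Todd(\alg)$, reduce everything to the identity $\Todd(\alg)\otimes_\alg\Xi=\ch(\Delta)$ (which the paper re-derives on the spot from the definition of the Todd class and the C-PD relation --- the same computation as the GRR lemma you cite), and conclude by multiplicativity of $\ch$ and its injectivity on $\KK_0(\bbc,\bbc)$. Your additional remarks on the existence of the square root and on the complexification are harmless supplements not present in, and not needed by, the paper's argument.
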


\begin{proof}
We have to prove the formula
\beq
\big(p\,,\,q\big)_\K=\big(\ch(p) \otimes_\alg
\sqrt{{\sf Todd}(\alg)} \,,\,\ch(q)\otimes_\alg \sqrt{{\sf
    Todd}(\alg)}~\big)_\H \ .
\label{chNCisometry1}\eeq
The left-hand side of eq.~(\ref{chNCisometry1}) is given by
$
(p,q)_\K= (p\otimes q^\op)\otimes_{\alg\otimes \alg^\op} \Delta
$.
Let us denote $$\Theta=\big(\ch(p)\otimes_\alg \sqrt{\sf
  Todd(\alg)}~\big)~\otimes~\big(\ch(q)^\op\otimes_\alg \sqrt{{\sf
    Todd}(\alg)}\,^\op\big) \ . $$ Then the right-hand side of
eq.~(\ref{chNCisometry1}) is given by
\begin{equation}
\begin{split}
\big(\ch(p)\otimes_\alg \sqrt{{\sf Todd}(\alg)}\,,\,&
\ch(q)\otimes_\alg \sqrt{{\sf Todd}(\alg)}~\big)_\H ~ = \quad
\both{\bbc}{\alg}{\alg^\op}{\Theta}{\alg}{\alg^\op}{\bbc}{\Xi}\\ & \\
& = \quad
\xy
(-10,0)*{\bbc}; 
(-10,0)*\xycircle(3.3,3.3){-}="a"; 
(20,10)*{\otimes_\alg}; 
(20,10)*\xycircle(3.3,3.3){.}="b"; 
(20,-10)*{\otimes_{\alg^\op}}; 
(20,-10)*\xycircle(4,4){.}="c"; 
(14,0)*{}="d";
"a"; "d" **\dir{-}?(.6)*\dir{>};
"d"; "b" **\dir{-}?(.6)*\dir{>};
"d"; "c" **\dir{-}?(.6)*\dir{>};
(2,8)*{\ch(p)\otimes \ch(q)^\op}; 
(45,10)*{\otimes_\alg}; 
(45,10)*\xycircle(3.3,3.3){.}="x"; 
(45,-10)*{\otimes_{\alg^\op}}; 
(45,-10)*\xycircle(4,4){.}="y"; 
(65,0)*{\bbc}; 
(65,0)*\xycircle(3.3,3.3){-}="z"; 
(51,0)*{}="t";
 "x";"t" **\dir{-}?(.6)*\dir{>};
 "t"; "z"**\dir{-}?(.6)*\dir{>};
 "y"; "t" **\dir{-}?(.6)*\dir{>};
(58,8)*{\Xi}; 
(23.9,10)*{}="bb";
(42,10)*{}="dd"; 
(24,-10)*{}="cc";
(41,-10)*{}="yy"; 
"bb" ; "dd" **\dir{-}?(.6)*\dir{>}+(-3,7)*{\sqrt{{\sf Todd}(\alg)}};
"cc" ; "yy" **\dir{-}?(.6)*\dir{>}+(-3,7)*{\sqrt{{\sf Todd}(\alg)}\,^\op};
\endxy  
\\ & \\
& = \quad \xy
(-10,0)*{\bbc}; 
(-10,0)*\xycircle(3.3,3.3){-}="a"; 
(20,10)*{\otimes_\alg}; 
(20,10)*\xycircle(3.3,3.3){.}="b"; 
(20,-10)*{\otimes_{\alg^\op}}; 
(20,-10)*\xycircle(4,4){.}="c"; 
(14,0)*{}="d";
"a"; "d" **\dir{-}?(.6)*\dir{>};
"d"; "b" **\dir{-}?(.6)*\dir{>};
"d"; "c" **\dir{-}?(.6)*\dir{>};
(2,8)*{\ch(p)\otimes \ch(q)^\op}; 
(45,10)*{\otimes_\alg}; 
(45,10)*\xycircle(3.3,3.3){.}="x1";
(70,10)*{\otimes_\alg}; 
(70,10)*\xycircle(3.3,3.3){.}="x"; 
(70,-10)*{\otimes_{\alg^\op}}; 
(70,-10)*\xycircle(4,4){.}="y"; 
(90,0)*{\bbc}; 
(90,0)*\xycircle(3.3,3.3){-}="z"; 
(76,0)*{}="t";
 "x";"t" **\dir{-}?(.6)*\dir{>};
 "t"; "z"**\dir{-}?(.6)*\dir{>};
 "y"; "t" **\dir{-}?(.6)*\dir{>};
(83,8)*{\Xi}; 
(23.9,10)*{}="bb";
(42,10)*{}="dd"; 
(24,-10)*{}="cc";
(41,-10)*{}="yy"; 
"b" ; "x1" **\dir{-}?(.6)*\dir{>}+(-3,7)*{\sqrt{{\sf Todd}(\alg)}};
"cc" ; "y" **\dir{-}?(.6)*\dir{>}+(-3,7)*{\sqrt{{\sf Todd}(\alg)}\,^\op};
"x1"; "x"**\dir{-}?(.6)*\dir{>}+(-3,7)*{1_\alg};
\endxy 
\\ & \\
& = \quad \xy
(-10,0)*{\bbc}; 
(-10,0)*\xycircle(3.3,3.3){-}="a"; 
(20,10)*{\otimes_\alg}; 
(20,10)*\xycircle(3.3,3.3){.}="b"; 
(20,-10)*{\otimes_{\alg^\op}}; 
(20,-10)*\xycircle(4,4){.}="c"; 
(14,0)*{}="d";
"a"; "d" **\dir{-}?(.6)*\dir{>};
"d"; "b" **\dir{-}?(.6)*\dir{>};
"d"; "c" **\dir{-}?(.6)*\dir{>};
(2,8)*{\ch(p)\otimes \ch(q)^\op}; 
(45,10)*{\otimes_\alg}; 
(45,10)*\xycircle(3.3,3.3){.}="x1";
(70,10)*{\otimes_\alg}; 
(70,10)*\xycircle(3.3,3.3){.}="x"; 
(70,-10)*{\otimes_{\alg^\op}}; 
(70,-10)*\xycircle(4,4){.}="y"; 
(90,0)*{\bbc}; 
(90,0)*\xycircle(3.3,3.3){-}="z"; 
(76,0)*{}="t";
 "x";"t" **\dir{-}?(.6)*\dir{>};
 "t"; "z"**\dir{-}?(.6)*\dir{>};
 "y"; "t" **\dir{-}?(.6)*\dir{>};
(83,8)*{\Xi}; 
(23.9,10)*{}="bb";
(42,10)*{}="dd"; 
(24,-10)*{}="cc";
(41,-10)*{}="yy"; 
"b" ; "x1" **\dir{-}?(.6)*\dir{>}+(-3,7)*{\sqrt{{\sf Todd}(\alg)}};
"cc" ; "y" **\dir{-}?(.6)*\dir{>}+(-3,7)*{1_{\alg^\op}};
"x1"; "x"**\dir{-}?(.6)*\dir{>}+(-3,7)*{\sqrt{{\sf Todd}(\alg)}};
\endxy 
\\ & \\
& = \quad
\xy
(-10,0)*{\bbc}; 
(-10,0)*\xycircle(3.3,3.3){-}="a"; 
(20,10)*{\otimes_\alg}; 
(20,10)*\xycircle(3.3,3.3){.}="b"; 
(20,-10)*{\otimes_{\alg^\op}}; 
(20,-10)*\xycircle(4,4){.}="c"; 
(14,0)*{}="d";
"a"; "d" **\dir{-}?(.6)*\dir{>};
"d"; "b" **\dir{-}?(.6)*\dir{>};
"d"; "c" **\dir{-}?(.6)*\dir{>};
(2,8)*{\ch(p)\otimes \ch(q)^\op}; 
(45,10)*{\otimes_\alg}; 
(45,10)*\xycircle(3.3,3.3){.}="x"; 
(45,-10)*{\otimes_{\alg^\op}}; 
(45,-10)*\xycircle(4,4){.}="y"; 
(65,0)*{\bbc}; 
(65,0)*\xycircle(3.3,3.3){-}="z"; 
(51,0)*{}="t";
 "x";"t" **\dir{-}?(.6)*\dir{>};
 "t"; "z"**\dir{-}?(.6)*\dir{>};
 "y"; "t" **\dir{-}?(.6)*\dir{>};
(58,8)*{\Xi}; 
(23.9,10)*{}="bb";
(42,10)*{}="dd"; 
(24,-10)*{}="cc";
(41,-10)*{}="yy"; 
"bb" ; "dd" **\dir{-}?(.6)*\dir{>}+(-3,7)*{{\sf Todd}(\alg)};
"cc" ; "yy" **\dir{-}?(.6)*\dir{>}+(-3,7)*{1_{\alg^\op}};
\endxy  \\ & \\
& = ~ \big(\ch(p)\otimes
\ch(q)^\op\big)\otimes_{\alg\otimes\alg^\op}\big({\sf
  Todd}(\alg)\otimes_\alg \Xi\big) \ .
\end{split}
\label{LHSdiagcomp}\end{equation}
The step from the third to the fourth line of eq.~(\ref{LHSdiagcomp})
uses symmetry of the fundamental class~$\Xi$.

We now use the definition of the Todd class to compute 
\begin{equation}
\begin{split}
{\sf Todd}(\alg)\otimes_\alg \Xi~ & =~ \big(\Xi^\vee \otimes_{\alg^\op}
\ch(\Delta)\big)\otimes_\alg \Xi\\ & \\
& = \quad
\xy
(0,0)*{\bbc}; 
(0,0)*\xycircle(3.3,3.3){-}="a"; 
(20,10)*{\alg}; 
(20,10)*\xycircle(3.3,3.3){-}="b"; 
(20,-10)*{\alg^\op}; 
(20,-10)*\xycircle(3.3,3.3){-}="c"; 
(14,0)*{}="d";
"a"; "d" **\dir{-}?(.6)*\dir{>};
"d"; "b" **\dir{-}?(.6)*\dir{>};
"d"; "c" **\dir{-}?(.6)*\dir{>};
(7,8)*{\Xi^\vee}; 
(35,10)*{\alg}; 
(35,10)*\xycircle(3.3,3.3){-}="x"; 
(35,-10)*{\alg^\op}; 
(35,-10)*\xycircle(3.3,3.3){-}="y"; 
(55,0)*{\bbc}; 
(55,0)*\xycircle(3.3,3.3){-}="z"; 
(41,0)*{}="t";
 "x";"t" **\dir{-}?(.6)*\dir{>};
 "t"; "z"**\dir{-}?(.6)*\dir{>};
 "y"; "t" **\dir{-}?(.6)*\dir{>};
(48,8)*{\ch(\Delta)}; 
(22.7,-10)*{}="cc";
(32.5,-10)*{}="yy"; 
"cc" ; "yy" **\dir{~};
(75,10)*{\alg}; 
(75,10)*\xycircle(3.3,3.3){-}="a1"; 
(75,-10)*{\alg^\op}; 
(75,-10)*\xycircle(3.3,3.3){-}="b1"; 
(95,0)*{\bbc}; 
(95,0)*\xycircle(3.3,3.3){-}="c1"; 
(81,0)*{}="d1";
 "a1";"d1" **\dir{-}?(.6)*\dir{>};
 "d1"; "c1"**\dir{-}?(.6)*\dir{>};
 "b1"; "d1" **\dir{-}?(.6)*\dir{>};
(88,8)*{\Xi}; 
"b"; "a1" **\crv{(42,23)};
\endxy \\ & \\
& = ~ \big(\Xi^\vee\otimes_\alg \Xi\big)\otimes_{\alg^\op} \ch(\Delta) 
~=~ \ch(\Delta) \ .
\end{split}
\label{Toddcomp}\end{equation}
Inserting eq.~(\ref{Toddcomp}) into eq.~(\ref{LHSdiagcomp}) we arrive
at
\[
\begin{split}
\big(\ch(p)\otimes_\alg \sqrt{{\sf Todd}(\alg)}\,,\,
\ch(q)\otimes_\alg \sqrt{{\sf Todd}(\alg)}~\big)_\H & =
 \big(\ch(p)\otimes\ch(q^\op)\big)\otimes_{\alg\otimes\alg^\op}\ch(\Delta) \\[4pt]
& = \ch\big((p\otimes q^\op)\otimes_{\alg\otimes\alg^\op}\Delta\big) \\[4pt]
& = \ch\big((p,q)_\K\big) \ .
\end{split}
\]
The result now follows from the fact that the Chern character  $\bbz=
\KK_0(\bbc, \bbc) \stackrel{\ch}{\longrightarrow} \HE_0(\bbc,\bbc) =
\bbc$ is injective. 
\end{proof}

This result has the following physical consequence, in which the
$\K$-orientation requirement provides the natural generalization of
the Freed-Witten anomaly cancellation condition~\cite{FW} to our
noncommutative settings. 
\begin{cor}
Let $\alg,\dalg$ be noncommutative D-branes such that $\alg$ satisfies
the hypotheses of Theorem~\ref{isomtheorem}, and with given
$\K$-oriented morphism $f\colon \alg \to \dalg$ and Chan-Paton bundle
$\xi \in \K_\bullet(\dalg)$. Then there is a noncommutative analogue
of the Minasian-Moore formula for the Ramond-Ramond charge, valued in
$\HE_\bullet(\alg)$ and given by
$$
{\sf Q}(\dalg,\xi) = \ch\big(f_!(\xi)\big)  \otimes_\alg \sqrt{{\sf
    Todd}(\alg)} \ .
$$
\label{NCchargecor}\end{cor}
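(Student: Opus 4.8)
The plan is to read this statement as \emph{defining} the noncommutative Ramond--Ramond charge vector and then to justify that name: I would check that the right-hand side is a well-defined element of $\HE_\bullet(\alg)$, show that it carries the characteristic property of the Minasian--Moore vector (compatibility of the index pairing with the cyclic pairing), and verify that it reduces to the classical formula~\cite{MM} when $\alg$ and $\dalg$ are commutative.

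First I would check well-definedness. The hypotheses of Theorem~\ref{isomtheorem} force $\alg$ to be a PD algebra, and $\dalg$ is a D-brane algebra to which the same framework applies, so by \S\ref{Koriented} the $\K$-oriented morphism $f\colon\alg\to\dalg$ determines a class $f!\in\KK_d(\dalg,\alg)$ pointing in the opposite direction and a Gysin homomorphism $f_!=(-)\otimes_\dalg f!\colon\K_\bullet(\dalg)\to\K_{\bullet+d}(\alg)$. Thus $f_!(\xi)\in\K_{\bullet+d}(\alg)$, and by Theorem~\ref{ChernHLthm} its bivariant Chern character $\ch(f_!(\xi))$ lies in $\HE_{\bullet+d}(\alg)\cong\HE_\bullet(\alg)$. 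Since $\Todd(\alg)$ is an invertible element of the ring $\HE_0(\alg,\alg)$ (Definition~\ref{defn:Todd}) admitting the square root $\sqrt{\Todd(\alg)}\in\HE_0(\alg,\alg)$ used in Theorem~\ref{isomtheorem}, the composition product $\ch(f_!(\xi))\otimes_\alg\sqrt{\Todd(\alg)}$ is a well-defined element of $\HE_\bullet(\alg)$, as claimed.

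The heart of the matter is then simply Theorem~\ref{isomtheorem}, which I would invoke directly. Writing $\Ch=\ch\otimes_\alg\sqrt{\Todd(\alg)}\colon\K_\bullet(\alg)\to\HE_\bullet(\alg)$ for the modified Chern character, that theorem says $\Ch$ is an isometry from the index pairing~\eqref{pairing} on $\K$-theory to the cyclic pairing~\eqref{HPpairing2} on $\HE_\bullet(\alg)$. Hence, setting ${\sf Q}(\dalg,\xi)=\Ch(f_!(\xi))$, the cyclic pairing $\big({\sf Q}(\dalg,\xi)\,,\,{\sf Q}(\dalg,\xi')\big)_\H$ of the charge vectors of two branes reproduces the $\K$-theoretic index pairing $\big(f_!(\xi)\,,\,f_!(\xi')\big)_\K$ of their pushed-forward Chan--Paton data. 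In the commutative case this is exactly the property that pins down the Minasian--Moore vector: the Dirac-index pairing of D-brane charges must be computed by the topological intersection form applied to the $\Ch$-classes, and $\Ch$ is the natural twist of $\ch$ by a characteristic class with this property. So there is no further computation to do; the only non-formal ingredient is Theorem~\ref{isomtheorem}, which we may assume.

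Finally I would record the commutative specialisation as a consistency check. For $\alg=C(X)$ and $\dalg=C(W)$ with $X$ a compact spin$^c$ manifold and $f$ the pullback of a spin$^c$ embedding $W\hookrightarrow X$, Example~\ref{Gysinemb} identifies $f_!$ with the classical Gysin map~\eqref{GysinK}, and the example following Definition~\ref{defn:Todd} identifies $\Todd(\alg)\in\HE_0(\alg,\alg)\cong\End\big(\H^\bullet(X,\bQ)\big)$ with cup product by the ordinary Todd class $\Todd(X)$; the composition product $\otimes_\alg$ then becomes the cup product, and the formula becomes
$$
{\sf Q}(\dalg,\xi)=\ch\big(f_!(\xi)\big)\otimes_\alg\sqrt{\Todd(\alg)}
=\ch\big(f_!(\xi)\big)\smile\sqrt{\Todd(X)}\ ,
$$
the formula of~\cite{MM}. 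Combining with the noncommutative Grothendieck--Riemann--Roch statement (Corollary~\ref{Gysincor}) and eq.~\eqref{WZclassical} further exhibits ${\sf Q}(\dalg,\xi)$ as $f_*$ of the Wess--Zumino class on the worldvolume, as in the commutative discussion. The main obstacle is expository rather than mathematical: one must make sure the square root $\sqrt{\Todd(\alg)}$ is the same element appearing in Theorem~\ref{isomtheorem} (in the commutative case it exists because $\Todd(X)=1+\cdots$ is unipotent in each cohomological degree, and the hypotheses of Theorem~\ref{isomtheorem} are precisely what is needed for that argument to survive), and that the $\bbZ_2$-gradings line up so that ${\sf Q}(\dalg,\xi)$ genuinely lands in $\HE_\bullet(\alg)$ and not a shift thereof.
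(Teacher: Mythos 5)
Your proposal is correct and follows the same route as the paper, which offers no separate proof of this corollary: it is treated as immediate from Theorem~\ref{isomtheorem} together with the Gysin construction of \S\ref{Koriented} (the given $\K$-orientation supplies $f!\in\KK(\dalg,\alg)$, hence $f_!(\xi)\in\K_\bullet(\alg)$, and the isometric modified Chern character of Theorem~\ref{isomtheorem} then produces the charge vector in $\HE_\bullet(\alg)$, exactly mirroring the classical Minasian--Moore formula). Your well-definedness check and commutative specialisation are consistent with the paper's Example following Definition~\ref{defn:Todd} and with Example~\ref{Gysinemb}, so nothing further is needed.
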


\begin{remark}
If $\dalg$ is also a PD algebra, then by using the
Grothendieck-Riemann-Roch formulas of~\S\ref{NCGRR} we may write
the noncommutative charge vector in the form
$$
{\sf Q}(\dalg,\xi) = \ch(\xi)  \otimes_\dalg
\,\Todd(\dalg)\otimes_\dalg (f*)
\otimes_\alg\sqrt{{\sf Todd}(\alg)}\,^{-1} \ .
$$
If the $\HE$-theory Gysin class $f*\in\HE(\dalg,\alg)$ induced by the
morphism $f:\alg\to\dalg$ obeys $$(f*)\otimes_\alg\sqrt{{\sf
    Todd}(\alg)}\,^{-1}=\Lambda\otimes_\dalg(f*)$$ for some element
$\Lambda\in\HE(\dalg,\dalg)$, then one has a noncommutative version of
the Wess-Zumino class (\ref{WZclassical}) valued in
$\HE_\bullet(\dalg)$, intrinsic to the D-brane $(\dalg,\xi)$ itself,
and given by
$$
{\sf D}_{\rm WZ}(\dalg,\xi)=\ch(\xi)  \otimes_\dalg
\,\Todd(\dalg)\otimes_\dalg\Lambda \ .
$$
This formula enables one to derive a criterion under which the
noncommutative charge vector is covariant under T-duality
transformations. Suppose that the D-branes $(\dalg,\xi)$ are
$(\dalg',\xi'\,)$ are $\KK$-equivalent, where the equivalence is
implemented by inverse classes $\alpha\in\KK(\dalg,\dalg'\,)$ and
$\alpha^{-1}\in\KK(\dalg',\dalg)$ with
$\xi'=\xi\otimes_\dalg\alpha$. Then the Todd classes are related
by~\cite[Theorem~7.4]{BMRS}
\beq
\Todd(\dalg'\,)=\ch(\alpha)^{-1}\otimes_\dalg\,\Todd(\dalg)\otimes_\dalg
\,\ch(\alpha) \ .
\label{ToddKK}\eeq
If
$$
\Lambda'=\ch(\alpha)^{-1}\otimes_\dalg\Lambda\otimes_\dalg
\,\ch(\alpha)
$$
then the Wess-Zumino classes are related covariantly as
$$
{\sf D}_{\rm WZ}(\dalg',\xi'\,)={\sf D}_{\rm WZ}(\dalg,\xi)\otimes_\dalg
\,\ch(\alpha) \ .
$$
Suppose that in an analogous way $(\dalg',\xi'\,)$ is $\KK$-equivalent
to $(\dalg'',\xi''\,)$, where the equivalence is implemented by
inverse classes $\alpha'\in\KK(\dalg',\dalg''\,)$ and
$\alpha'{}^{-1}\in\KK(\dalg'',\dalg'\,)$ with
$\xi''=\xi'\otimes_{\dalg'}\alpha'$, such that $\dalg$, $\dalg''$ are
Morita equivalent with associated $\KK$-equivalence
$\alpha\otimes_{\dalg'}\alpha'$. Then by multiplicativity of the Chern
character one has
$$
{\sf D}_{\rm WZ}(\dalg'',\xi''\,)={\sf D}_{\rm WZ}(\dalg,\xi)\otimes_\dalg
\,\ch(\alpha\otimes_{\dalg'}\alpha'\,)
$$
and so the Wess-Zumino classes are related by the image of this
$\KK$-equivalence in
$$\HE_\bullet(\dalg''\,)\cong\HE_\bullet(\dalg)\ . $$ These formulas all
express the T-duality invariance of the noncommutative D-brane
charge formula. The stated conditions above presumably reflect the
need for addition of Myers terms to the Wess-Zumino class, through
appropriate modification of the morphism $f:\alg\to\dalg$. 
\end{remark}

\subsection{Twisted D-branes\label{TwistedD}}

A nice application of this formalism is to the example of D-branes in
a compact, oriented even-dimensional manifold $X$ with background
$H$-flux. Let $$\alg_H~:=~ {\sf CT}(X, H)~=~C_0(X,\bun_H)$$ be the
$C^*$-algebra of sections of a locally trivial bundle of compact
operators $\bun_H\to X$ with Dixmier-Douady invariant $[H] \in \H^3(X,
\bbZ)$, whose image in real cohomology is represented by a closed
three-form $H$ on $X$. The opposite algebra is
$\alg_H^\op=\CT(X,-H)$. Recall that $$ {\sf CT}(X, H) =
C_0(P_H,\cK)^{{PU}}$$ consists of ${PU}$-equivariant $\cK$-valued
continuous maps from a principal ${PU}$-bundle $P_H\to X$ with
Dixmier-Douady class $[H]$, where the projective unitary group
$PU=PU(\hil)$ of a fixed separable Hilbert space $\hil$ acts by the
adjoint action on the algebra of compact operators $\cK=\cK(\hil)$. Let
$\cL^1_{P_H} = P_H\times_{{PU}} \cL^1$ be the algebra bundle of trace
class operators $\cL^1=\cL^1(\hil)$, where $PU$ acts by the adjoint
action on $\cL^1$.  Let $\alg_H^\infty:={\sf CT}^\infty (X, H) =
C^\infty(X, \cL^1_{P_H})$ denote the locally convex $*$-algebra of
smooth sections, which is a smooth subalgebra of $\alg_H$. Then the
inclusion map $\iota : {\sf CT}^\infty (X, H)\hookrightarrow {\sf CT}
(X, H)$ induces an isomorphism on $\K$-theory~\cite{MS2} $$\iota^!
\,:\, \K_\bullet\big({\sf CT}^\infty
(X,H)\big)~\stackrel{\approx}{\longrightarrow}~\K_\bullet\big({\sf CT}
(X,H)\big) \ . $$

Upon choosing a bundle gerbe connection~\cite{Murray}, there are
natural isomorphisms~\cite{BCMMS}
$$
\K_\bullet \big( {\sf CT}^\infty(X, H)\big) \cong \K^\bullet(X, H)
$$
where the right-hand side is the twisted $\K$-theory of
$X$ expressed in terms of projective Hilbert bundles with fixed
reduction of structure group to unitaries of the form $1 + \text{trace
  class operators}$. The same data also defines an
isomorphism~\cite{MS2}
$$
\HP_\bullet \big(  {\sf CT}^\infty(X, H)\big) \cong \H^\bullet (X, H)
$$
where the right-hand side denotes the periodised twisted de Rham
cohomology $$\H^\bullet(X, H) = \H^\bullet\big(\Omega^\#(X)\,,\, \dd -
H\wedge \big) \ . $$ 
The canonical Connes-Chern character
$$
{\ch} \,:\,  \K_\bullet \big( {\sf CT}^\infty(X, H)\big) ~
\longrightarrow~ \HP_\bullet \big( {\sf CT}^\infty(X, H)\big)
$$
can be expressed in terms of differential forms on
$X$~\cite{BCMMS,MS,MS2}, leading to the twisted Chern character
$$
{\ch}_H \,:\, \K^\bullet(X, H) ~\longrightarrow~ \H^\bullet (X, H) \ .
$$
This can be viewed as the Chern-Weil representative
of the Connes-Chern character on twisted $\K$-theory.

The algebra $\alg_H$ is not generally a PD algebra, but it
is a member of the class $\underline{\mathfrak{D}}$ and a dual algebra
$\tilde\alg_H$ may be constructed as follows~\cite{BMRS,Tu}. Let $\Cl(X)$
be the Clifford algebra bundle of the cotangent bundle of $X$ with
respect to a chosen riemannian metric. Its Dixmier-Douady invariant
is the third integral Stiefel-Whitney class
$\W_3(X)\in\H^3(X,\zed)$, which is the topological obstruction
to the existence of a spin$^c$ structure on $X$. The Dirac operator 
class $[\Dirac]$, constructed in the usual fashion from the riemannian
metric and the Levi-Civita connection on $TX$,
naturally lives in $\KK(C_0(X,\Cl(X)),\complex)$ and
gives a fundamental class $\Dirac\otimes\Dirac$. Kasparov product with
$\Dirac\otimes\Dirac$ establishes Poincar\'e duality
between the algebras $C(X)$ and $C_0(X,\Cl(X))$. Set
$\tilde\alg_H:=\CT(X,\W_3(X)-H)\cong
C_0(X,\bun_{-H}\otimes\Cl(X))$. Let $p_i:X\times X\to X$, $i=1,2$ be
the projection onto the $i$-th factor, and $\delta:X\to X\times X$
the diagonal map. Then $$\alg_H\otimes\tilde\alg_H\cong\CT\big(X\times
X\,,\,p_1^*(H)+p_2^*(\W_3(X)-H)\big)$$ with
$\delta^*(\alg_H\otimes\tilde\alg_H)\cong\CT(X,\W_3(X))\otimes
\mathcal{K}$, which is Morita equivalent to $C_0(X,\Cl(X))$. 
It follows that a fundamental class for the pair
$(\alg_H,\tilde\alg_H)$ is
$$
\Delta=\big[\delta^*\big]_{\KK}\otimes_{\CT(X,\W_3(X))}\Dirac \ .
$$
When $X$ is spin$^c$, one has $\W_3(X)=0$ and the algebra
$C_0(X,\Cl(X))$ is Morita equivalent to $C(X)$~\cite{Plymen}. Moreover,
in this case $\tilde\alg_H=\alg_H^\op$ and the restriction of
$\alg_H\otimes\alg_H^\op$ to the diagonal is stably isomorphic to
$C(X)$. 

Since $\H^\bullet(X,\W_3(X))\cong\H^\bullet(X,\real)$, a fundamental
cyclic cohomology class $\Xi$ for $\alg_H$ is provided as usual by
the homology orientation cycle $[X]$. In the spin case, the Todd class ${\sf
Todd}(\alg_H)$ is given by the Atiyah-Hirzebruch class
$\widehat{A}(X)$ of the manifold $X$ and the {modified Chern
character} reduces to
$$
\Ch_H(\xi) = \ch_H(\xi) \wedge  \sqrt{\widehat{A}(X)}
$$
for $\xi\in\K^\bullet(X,H)$. In general, the modified
Chern character gives an isometry between the natural
bilinear pairings on twisted $\K$-theory and cohomology which are
defined as follows. The Grothendieck group of the category of
$\Cl(X)$-modules is isomorphic to the twisted $\K$-theory
$\K^0(X,\W_3(X))\cong\K_0(C_0(X,\Cl(X)))$. Given any $\Cl(X)$-module $\cC$
over $X$ and choosing a Clifford connection on $\cC$, there is a Dirac
operator $\Dirac_\cC$ acting on $C_0(X,\cC)$. The index map
$[\cC]\mapsto \Ind(\Dirac_\cC)$ defines a
homomorphism $\K^0(X,\W_3(X))\to\zed$. This map can be
computed via the standard Atiyah-Singer
index theorem~\cite[Theorem~4.3]{BerVer}
$$
\Ind(\Dirac_\cC)=\int_X\,\ch(\cC)\wedge\widehat{A}(X) \ .
$$
The tensor product of twisted
bundles then defines a bilinear Poincar\'e pairing
$$
\K^\bullet(X,H)\otimes\K^\bullet\big(X\,,\,\W_3(X)-H\big)~
\longrightarrow~\K^0\big(X\,,\,\W_3(X)\big)~\xrightarrow{\Ind}~\zed \
.
$$
Using the Poincar\'e pairing
(\ref{cohpairingclass}) with the cup product on twisted cohomology
$$\H^\bullet(X,H)\otimes\H^\bullet\big(X\,,\,\W_3(X)-H\big)~
\longrightarrow~\H^\even(X,\real) \ , $$
the isometric pairing formula follows.

Let us now consider a D-brane $(W, E, f)$ in $X$, where $f:
W\hookrightarrow X$ is the inclusion of the oriented submanifold $W$,
and assume for simplicity that $X$ is spin (as is the case when $X$ is
the spacetime of Type~II superstring theory). In this case, the
constructions of~\S\ref{Koriented} determine a canonical element
$f!\in\KK_d(\CT(W,f^*H+W_3(N_XW)),\CT(X,H))$ where $d=\dim(X)-\dim(W)
\mod 2$ (see~\cite{CareyWang} for details). Since $X$ is spin and $W$
is oriented, one has $\W_3(N_XW)=\W_3(W)$. For the D-brane algebra we
thus take
$$
\dalg=\CT\big(W\,,\,f^*H+\W_3(W)\big)
$$
and the Chan-Paton bundle $E$ is generically an infinite dimensional
twisted bundle \cite{BCMMS}
whose class lives in the twisted $\K$-theory
$\K^0(W,f^*H+\W_3(W))$. There are two special cases of interest. When
the brane worldvolume is spin$^c$, {\it i.e.}, $\W_3(W)=0$, the
algebra $\dalg$ is just the restriction of $\alg_H$ to the submanifold
$W$. The Chan-Paton bundle on the brane in this case is 
an infinite dimensional twisted 
bundle $E\in\K^0(W,f^*H)$. When $H$ is a torsion element in
$\H^3(X,\zed)$, the algebra $\dalg$ is an Azumaya algebra, and the
bundle $E$ is of finite rank and represents
't~Hooft flux corresponding to a finite number of D-branes wrapping
$W$~\cite{Kapustin,WOverview}. (The torsion condition guarantees
anomaly cancellation on spacetime-filling brane-antibrane pairs.) But
generically it is of infinite rank, corresponding to an infinite number
of wrapped branes. Alternatively, we can use honest Chan-Paton vector
bundles $E\in\K^0(W)$ and (stably) commutative algebras on the
D-branes by replacing the spin$^c$ condition with the Freed-Witten
anomaly cancellation condition~\cite{FW}
$$
f^*H+\W_3(W)=0
$$
in $\H^3(W,\zed)$, which guarantees the absence of global worldsheet
anomalies for Type~II superstrings. In any case, we obtain the twisted
D-brane charge vector
$$
{\sf Q}_H(W, E) = \ch_H\big(f_!(E)\big) \wedge \sqrt{\widehat{A}(X)}
$$
in $\H^\bullet(X,H)$. Only Freed-Witten anomaly-free D-branes have the
same Wess-Zumino class (\ref{WZclassical}) valued in ordinary
cohomology $\H^\bullet(W,\rat)$ as in the untwisted case.

\subsection{Noncommutative D2-branes}

D2-branes can wrap two-dimensional manifolds, which in the presence of
a constant $B$-field are described by noncommutative Riemann surfaces.
Let $\Gamma_g$ be the fundamental group of a compact, oriented
Riemann surface $\Sigma_g$ of genus $g\ge 1$ with the presentation
$$
\Gamma_g = \Bigl\{\mbox{$U_j, V_j\,,\, j=1, \ldots, g~
\Big|~\prod\limits_{j=1}^g\,[U_j, V_j] = 1$}\Bigr\} \ .
$$
Since $\H^2(\G_g, { U}(1)) \cong
\real/\zed$, for each $\theta\in [0, 1)$
there is a unique multiplier $\sigma_\theta$ (a two-cocycle with
values in $U(1)$) on $\G_g$ up to isomorphism, representing the
holonomy of a non-trivial $B$-field on $\Sigma_g$. Let
$\alg_\theta:=C^*_r(\Gamma_g, \sigma_\theta)$ be the $C^*$-algebra
generated by the unitaries $U_j$ and $V_j$ with the relation
$$
\prod_{j=1}^g \,[U_j, V_j] = \exp (2\pi \ii\theta) \ .
$$
In~\cite{CHMM,CM} the $\K$-theory of the algebra $\alg_\theta$ is
computed as follows:
\begin{itemize}
\item 
$\K_0(\alg_\theta) \cong\K^0(\Sigma_g)= \mathbb{Z}^2$, and if $\theta\not\in
\rat$ the algebras $\twga$ are distinguished by the image of the
canonical trace $\tr:C_r^*(\Gamma_g,\sigma_\theta)\to\complex$,
induced by evaluation at the identity element of $\Gamma_g$, on
$\K$-theory as
$$
\tr^!(n\,e_0 + m\,e_1) = n + m\,\theta \ ,
$$
with the $\K$-theory generators chosen as $e_0 = [1]$, the class of the
identity element, and $e_1$ such that $\tr^!(e_1) = \theta$; and
\item  $\K_1(\alg_\theta) \cong \K^1(\Sigma_g)=\mathbb{Z}^{2g}$, with $U_j$
  and $V_j$ forming a basis for $\K_1(\alg_\theta)$.
\end{itemize}

The algebra $\alg_\theta$ is a PD algebra, and the inverse of the
fundamental class of $\alg_\theta$ is the Bott class
\beq 
\Delta^\vee= e^{\phantom{\op}}_0\otimes e^\op_1 - e^{\phantom{\op}}_1
\otimes e^\op_0 + \sum_{j=1}^g\,\left(U_j^{\phantom{\op}}
\otimes V^\op_j - V_j^{\phantom{\op}}\otimes U^\op_j\right) \ .
\nonumber \eeq
Let
$$
\mu_\theta \,:\, \K^\bullet(\Sigma_g) ~ \stackrel{\approx}
{\longrightarrow}~ \K_{\bullet}\big(C^*_r(\Gamma_g, \sigma_\theta)\big)
$$
be the twisted version of the Kasparov isomorphism~\cite{CHMM} for the
Lie group $SO_0(2,1)\supset\Gamma_g$ with $\Gamma_g\setminus
SO_0(2,1)/SO(2)\cong\Sigma_g$, and let $\nu_\theta$ be its analog in
periodic cyclic homology. Then there is a commutative diagram
\beq
\begin{CD}
\K^\bullet(\Sigma_g)  @>\mu_\theta>>
\K_{\bullet}(\alg_\theta)   \\         
      @V{\ch}VV          @VV{\ch_{\Gamma_g}} V     \\
\H^\bullet  (\Sigma_g,\zed)    @>>\nu_\theta>  \HE_{\bullet}
(\alg_\theta)  
\end{CD}
\label{Sigmadiag}\eeq
whose maps are all isomorphisms. Using the diagram (\ref{Sigmadiag})
one shows that the Todd class $\Todd(\alg_\theta)$ is
determined by $\nu_\theta(\Todd(\Sigma_g))$.

The modified {Chern character} in this case is $\Ch_\theta:
\K_{\bullet}(\alg_\theta) \to \HE_{\bullet}
(\alg_\theta)$, where
$$
\Ch_\theta(\zeta) =\nu_\theta\Big( \ch\big(\mu_\theta^{-1}
(\zeta)\big) \smile \sqrt{{\Todd}(\Sigma_g)}~\Big)
$$
for $\zeta\in\K_{\bullet}(\alg_\theta)$ gives an {isometry}
between the natural bilinear pairings in $\K$-theory and in
cohomology, lifted from those of $\Sigma_g$ via the diagram
(\ref{Sigmadiag}). This is the charge vector for spacetime-filling
noncommutative D2-branes on the Riemann surface $\Sigma_g$. More
generally, the {charge vector} for a D-brane $(\dalg, \xi)$, with
$\K$-oriented morphism $f: \alg_\theta\to \dalg$ and Chan-Paton
bundle $\xi \in \K_\bullet(\dalg)$, is given by
$$
\quad{\sf Q}_\theta(\dalg, \xi) = \nu_\theta\Big(
\ch\big(\mu_\theta^{-1}\circ
f_!(\xi)\big) \smile \sqrt{{\Todd}(\Sigma_g)}~\Big) \ .
$$

This example also furnishes an explicit computation of the Todd class
for the noncommutative manifold $C^*(\Gamma_g)$. Since the Baum-Connes
conjecture holds for $\Gamma_g$ and $\Gamma_g$ has the
Dirac-dual~Dirac property, it follows that the algebras $C(\Sigma_g)$
and $C^*(\Gamma_g)$ are $\KK$-equivalent. Explicitly,
$(C(\Sigma_g),C^*(\Gamma_g))$ is a PD pair whose natural inverse
fundamental class $\Delta_{\Gamma_g}^\vee=[F\Gamma_g]$ is the class of
the universal flat $C^*(\Gamma_g)$-bundle in
$\KK(\complex,C(\Sigma_g)\otimes C^*(\Gamma_g))$. The corresponding
twisting of the class of the Dirac operator $\dirac$ on $\Sigma_g$
defines the Baum-Connes assembly map
$$
\alpha=[F\Gamma_g]\otimes_{C(\Sigma_g)}\dirac
$$
in $\K$-theory, which is an invertible element in
$\KK(C(\Sigma_g),C^*(\Gamma_g))$. Let $\alpha^{-1}$ be the dual
Dirac class in $\KK(C^*(\Gamma_g),C(\Sigma_g))$. Then the fundamental
$\K$-homology class $\Delta_{\Gamma_g}\in\KK(C(\Sigma_g)\otimes
C^*(\Gamma_g),\complex)$ is constructed as
$$
\Delta_{\Gamma_g}=\alpha^{-1}\otimes_{C(\Sigma_g)}(\dirac\otimes\dirac)
\ .
$$
The natural choices of fundamental classes in cyclic cohomology are
constructed as follows. Consider the usual fundamental homology class
of $\Sigma_g$, viewed as an element
$[\Sigma_g]\in\HE(C(\Sigma_g),\complex)$, and the canonical trace
$\tr$ on $C^*(\Gamma_g)$, viewed as an element of
$\HE(C^*(\Gamma_g),\complex)$. Then
$$
\Xi_{\Gamma_g}=[\Sigma_g]\otimes[\tr]_\HE \ .
$$

A short calculation then shows that, with these natural choices, the
Todd class of the algebra $C^*(\Gamma_g)$ is given by
\beq
\Todd\big(C^*(\Gamma_g)\big)=\ch(\alpha)^{-1}\otimes_{C(\Sigma_g)}\,
\Todd\big(C(\Sigma_g)\big)\otimes_{C(\Sigma_g)}\,\ch(\alpha) \ ,
\label{ToddKKSigma}\eeq
where
$$\Todd\big(C(\Sigma_g)\big)=
\big([\Sigma_g]\otimes[\Sigma_g]\big)\otimes_{C(\Sigma_g)}
\big(\,\Todd(\Sigma_g)\otimes\Todd(\Sigma_g)\big)~\in~
\HE\big(C(\Sigma_g)\,,\,C(\Sigma_g)\big)$$ is given by cup product
with the cohomology class $\Todd(\Sigma_g)$ dual to
$\ch(\dirac)$. This is a special case of~\cite[Corollary~7.5]{BMRS}
(see eq.~(\ref{ToddKK})). There is a canonical $\K$-oriented morphism
$c:\complex\to C^*(\Gamma_g)$, and the composition product
\[
[c]_\HE \otimes_{C^*(\Gamma_g)} \Todd\big(C^*(\Gamma_g)\big) 
\otimes_{C^*(\Gamma_g)} [\tr]_\HE ~\in~ \HE(\bC,\bC) =\bC
\]
is the Todd \emph{genus} of $C^*(\Gamma_g)$~\cite[Remark~7.13]{BMRS},
which works out simply to $1-g$, the Todd genus of the Riemann surface
$\Sigma_g$.

\begin{remark}

A similar argument computes the Todd genus for the reduced
$C^*$-algebra $C_r^*(\Gamma)$, where $\Gamma$ is any discrete group
satisfying the Baum-Connes conjecture, having the Dirac-dual Dirac
property, and whose classifying space $B\Gamma$ is a spin$^c$
manifold. Then the Todd genus of $C_r^*(\Gamma)$ is equal to the Todd
genus of $B\Gamma$. We will revisit these examples in the next section
in conjunction with noncommutative correspondences.

\end{remark}

\section{Correspondences and open string
  T-duality\label{CorrTduality}}

There are several different but equivalent definitions of
$\KK$-theory, each adapted to different applications.  As we discuss
below, the one that is best suited for T-duality in open string theory
is due to Baum, Connes and Skandalis~\cite{CS} in the case of
manifolds. In this final section we will propose a new description of
elements of $\KK$-theory based on the constructions
of~\S\ref{NCRRTheorem}. This description generalizes the
Baum-Connes-Skandalis description of elements of $\KK$-theory to the
setting of generic separable $C^*$-algebras, and it gives a more
precise meaning to the point of view that Kasparov bimodules
generalize $*$-homomorphisms of $C^*$-algebras. In this sense it is
very closely related to Cuntz's picture of $\KK$-theory. In the
special case of the Kasparov groups $\KK(C_0(X), C_0(Y)\otimes \alg)$,
where $\alg$ is a unital separable $C^*$-algebra, the correspondence
description of~\cite{CS} has been extended by Block and
Weinberger~\cite{BW}.

\subsection{Geometric correspondences and $\KK$-theory for
  manifolds\label{GeomCorr}}

We will first recall the description of~\cite{CS} for the $\KK$-theory
groups $\KK(X,Y):=\KK(C_0(X),C_0(Y))$ of smooth manifolds $X,Y$. In
this picture, elements of $\KK_d(X, Y)$ are represented by
{\em geometric correspondences}
\beq
\xymatrix @=4pc { & (Z , E) \ar[dl]_{f_X} 
\ar[dr]^{g^Y} & \\
X  &   &    Y }
\label{commcorrdef}\eeq
where $E$ is a complex vector bundle over the smooth manifold $Z$, and
$f_X, g^Y$ are continuous maps from $Z$ to $X$ and $Y$, respectively,
such that $f_X$ is proper, $g^Y$ is smooth and $\K$-oriented, and $d=
\dim(Z)- \dim(Y) \mod2$. Elements of $\KK_d(X, Y)$ define
homomorphisms in $ {\rm Hom}(\K^\bullet( X), \K^{\bullet+d}(Y))$. To
each correspondence (\ref{commcorrdef}) there corresponds a morphism
of $\K$-theory groups given by the assignment
\beq
\big(Z\,,\,E\,,\,f_X\,,\,g^Y\big)~
\longmapsto~ g^Y{}_!\big(f_X{}^*(-) \otimes E\big) \ ,
 \label{corrassign}\eeq
implemented by a $\KK$-theory element $[f_X]_{\KK}\otimes_{C_0(Z)} [[E]]
\otimes_{C_0(Z)}(g^Y)!$. Here $[f_X]_{\KK}\in \KK(X,Z)$ is defined by
the pullback morphism $f_X{}^*\colon C_0(X) \to C_0(Z)$, $[[E]]\in
\KK(Z,Z)$ is the \emph{bivariant} $\K$-theory class defined by the
vector bundle $E$~\cite[\S24.5]{Black}, and $(g^Y)!$ is defined by the
$\K$-orientation as in \S\ref{Koriented}. This is reminiscent of the
smooth analog of the Fourier-Mukai transform, and as such is
well-suited to describe T-duality, as we discuss below. It is a
geometric presentation of the analytic index for  families of elliptic
operators on $X$ parametrized by $Y$.
 
Two correspondences 
$$
\xymatrix @=4pc { & (Z_j , E_j) \ar[dl]_{f_j} 
\ar[dr]^{g_j} & \\
X  &   &    Y }
$$
with $j=1, 2$ are said to be \emph{cobordant} if there is a
``correspondence with boundary''
$$
\xymatrix @=4pc { & (Z , E) \ar[dl]_{f} 
\ar[dr]^{g} & \\
X  &   &    Y }
$$
such that
\begin{enumerate}
 \item $\partial Z = Z_1\sqcup Z_2$;
\item  $E|_{Z_j} = E_j$, $j=1,2$; and
\item  $f|_{Z_j} = f_j$, $j=1,2$ and $g|_{Z_j} =
  (-1)^{j+1}\,g_j$, $j=1,2$, where the minus sign indicates the same map
  with the opposite $\K$-orientation.
 \end{enumerate}
We denote the collection of all cobordism classes of
correspondences between $X$ and $Y$ by $\Omega(X,Y)$. Then the
assignment (\ref{corrassign}) descends to a well-defined surjection
\beq
\Omega_{d\text{ mod }2}(X,Y)~\longrightarrow~\KK_d(X,Y) \ ,
\label{cobordsurject}\eeq
and hence every element of $\KK_d(X,Y)$ can be represented by a
cobordism class of geometric correspondences between $X$ and $Y$. The
image under the map (\ref{cobordsurject}) of the cobordism class
represented by a correspondence (\ref{commcorrdef}) is denoted
$[Z,E]$, with the summation rules $$[Z,E_1\oplus
E_2]~=~[Z,E_1]+[Z,E_2] \qquad \mbox{and} \qquad [Z_1\sqcup
Z_2,E_1\sqcup E_2]~=~[Z_1,E_1]+[Z_2,E_2]\ . $$ Analytically, these are
``cobordism classes'' of families of elliptic operators on $X$ that
are parametrized by $Y$.
 
The special case $\KK(X, \pt)$, with $X$ compact,
is the $\K$-homology of $X$ which can be represented
analytically by ``cobordism'' classes of generalized elliptic operators
on $X$. Since a $\K$-oriented map $Z\to\pt$ is the same thing as a
spin$^c$ structure on $Z$, a correspondence in this case is simply a spin$^c$
manifold $Z$ equipped with a complex vector bundle $E$ and a proper
map to $X$. Since $X$ is compact, properness just
means $Z$ is compact. Thus a correspondence in this case is just
a geometric $\K$-cycle $(Z,E,f_X)$ over $X$~\cite{BD}. The
kernel of the map (\ref{cobordsurject}) is then the subgroup generated by the
equivalence relation on geometric $\K$-cycles given by Baum-Douglas vector
bundle modification \cite{BD}. On the other hand, $\KK(\pt, Y)$ is just the
$\K$-theory of $Y$ represented via an ABS-type construction using a
spin$^c$ structure on the bundle $TZ\oplus(g^Y)^*(TY)$ rather than on
$TZ$. Both of these limiting cases naturally define the charge of a
D-brane $(Z,E)$, supported on $Z$ with Chan-Paton bundle $E$, in
spacetime $X$ or $Y$.

In the general case, the kernel of the map (\ref{cobordsurject}) 
was described in~\cite{BB}. Presumably it could also
be described using Jakob's approach~\cite{Jakob1,Jakob2} by replacing
actual vector bundles $E\to Z$ with $\K$-theory classes,
\emph{i.e.}, by considering instead the \emph{virtual correspondences}
\beq
\xymatrix @=4pc { & (Z , \xi) \ar[dl]_{f_X} 
\ar[dr]^{g^Y} & \\
X  &   &    Y }
\label{commcorrvirt}\eeq
where now $\xi\in\K^\bullet(Z)$. Given a virtual correspondence
(\ref{commcorrvirt}), let $F\to Z$ be a smooth, real spin$^c$
vector bundle of even rank. Let $\id^\bbR_Z$ denote the trivial real
line bundle over $Z$. Let $\pi:S\to Z$ be the unit sphere bundle of
$F\oplus\id^\bbR_Z$. It is also a spin$^c$ bundle with
even-dimensional fibres which has a nowhere zero
section $s:Z\to S$ defined by $z\mapsto(0_z,1)$. Both $\pi$ and $s$
are proper maps endowed with a canonical $\K$-orientation. Then the
\emph{vector bundle modification} of (\ref{commcorrvirt}) is defined
to be the virtual correspondence
$$
\xymatrix @=4pc { & \big(S \,,\, s_!(\xi)\big) \ar[dl]_{f_X\circ\pi} 
\ar[dr]^{g^Y\circ\pi} & \\
X  &   &    Y \ . }
$$
The map (\ref{cobordsurject}) is then made into an isomorphism by
quotienting by the subgroup of cobordism classes of virtual
correspondences under the equivalence relation generated by this
generalized notion of vector bundle modification. This stabilization
makes the assignment (\ref{corrassign}) into a well-defined natural
transformation of bivariant theories which is an
equivalence when $X,Y$ are compact~\cite{BB}. Note that the grading
on $\KK_d(X,Y)$ in this case is given on homogeneous correspondences
with $\xi\in\K^j(Z)$ by $d=\dim(Z)-\dim(Y)+j \mod 2$.

The composition product of $\KK$-theory, which is notoriously
difficult to define within the analytic framework, is particularly
easy to describe in this setting. It is a morphism
$$
 \KK(X, M) \times \KK(M, Y) ~\longrightarrow~ \KK(X,Y)
$$
described as the composition of correspondences, with some 
caveats that we explain below. Given two correspondences
$$
\xymatrix @=4pc { & (Z_1 , E_1) \ar[dl]_{f_X} 
\ar[dr]^{g^M} & & (Z_2,E_2) \ar[dl]_{f_M} \ar[dr]^{g^Y}& \\
X  &   &    M & & Y }
$$
we define their {\em composition} to be the correspondence
(\ref{commcorrdef}) with the fibred product $Z= Z_1 \times_M Z_2$, the
bundle $E = E_1\boxtimes E_2$, and the compositions $f_X:Z\to Z_1\to
X$, $g^Y:Z\to Z_2\to Y$. The {\em caveat} is that in order
for the space $Z$ to be a manifold, $f_M$ should be smooth and
the maps $g^M$ and $f_M$ have to
be transverse, \emph{i.e.}, for all $(z_1,z_2) \in Z$ one has
$$
\dd f_M(T_{z_2} Z_2) + \dd g^M(T_{z_1} Z_1) = T_{f_M(z_2)} M \ .
$$
Such choices can always be made by homotopy invariance of the pullback
and Gysin maps, along with standard transversality theorems. The
notation for the composition product is $$[Z_1, E_1] \otimes_M [Z_2,
E_2] = [Z,E]\ . $$ It is manifestly associative. The unit element of
the ring $\KK(X,X)$ is denoted $1_X=[\Id_X]_{\KK}$.
 
\begin{definition}
Two manifolds $X, Y$ are said to be \emph{$\KK$-equivalent} if there are
elements $$\alpha ~\in~ \KK(X, Y) \qquad \mbox{and} \qquad \beta ~\in~
\KK(Y, X)$$ such that $\alpha \otimes_Y \beta = 1_X \in \KK(X,
X)$ and $\beta \otimes_X\alpha = 1_Y \in \KK(Y, Y)$.
\end{definition}  
   
\begin{example} \label{FMtransf}
The smooth analog of the  {\em Fourier-Mukai transform} is a
correspondence
$$
\xymatrix @=4pc { & \big(M \times {\mathbb T}^n \times
  \widehat{\mathbb T}^n  \,,\, \mathcal P\big) \ar[dl]_{p_1} 
\ar[dr]^{p_2} & \\
M \times {\mathbb T}^n &   &    M \times  \widehat{\mathbb T}^n }
$$
where ${\mathbb T}^n$ is an $n$-dimensional torus, $ \widehat
{\mathbb T}^n$ is the dual torus, and $\mathcal P$ is the Poincar\'e
line bundle defined initially on $ {\mathbb T}^n \times \widehat
{\mathbb T}^n$ and then pulled back via the projection map to the
product manifold $M\times  {\mathbb T}^n \times \widehat {\mathbb
  T}^n$. It defines an element  $\alpha$ in $\KK_n(M \times {\mathbb
  T}^n, M \times  \widehat{\mathbb T}^n)$ which is a
$\KK$-equivalence. The element $\alpha$ is interpreted analytically as
the families Dirac operator, and its inverse $\beta$ is the ``dual
Dirac'' or Bott element. Viewed in this way, it is a refinement of the
usual smooth Fourier-Mukai transform. In open string theory, this is
the statement that (topological) T-duality is naturally an invertible
element of the $\KK$-theory of the dual pair of spacetimes, and is the
starting point for a general axiomatic description of T-duality for
$C^*$-algebras~\cite{BMRS}.
\end{example} 

\subsection{Noncommutative correspondences and $\KK$-theory for
  $\mbf{C^*}$-algebras\label{NCCorr}} 

We will now give a new description of $\KK$-theory via ``noncommutative
correspondences'', which are well suited for describing the
composition product. Let $\alg, \balg$ be separable $C^*$-algebras. We
will represent elements of $\KK(\alg, \balg)$ by \emph{noncommutative
  correspondences}
\beq
\xymatrix @=4pc { & (\calg , \xi) 
& \\
\alg\ar[ur]^{f^\alg}  &   &    \balg\ar[ul]_{g_\balg} }
\label{algcorrdef}\eeq
where $\xi \in \KK(\calg, \calg)$ with $\calg$ a separable
$C^*$-algebra, and $f^\alg, g_\balg$ are homomorphisms 
$\alg\to\calg$ and $\balg\to\calg$, respectively, such that $g_\balg$
is $\K$-oriented. To see that this data defines an element in
$\KK(\alg, \balg)$, note that $[f^\alg]_{\KK} \in \KK(\alg, \calg)$ and the
Gysin element corresponding to $g_\balg$ as defined
in~\S\ref{Koriented} is $g_\balg! \in \KK(\calg, \balg)$. Thus the
composition product gives an assignment
\beq
\big(\calg\,,\,\xi\,,\,f^\alg\,,\,g_\balg\big)~\longmapsto~
\big[f^\alg\big]_{\KK}\otimes_\calg \xi \otimes_\calg g_\balg! \in
\KK(\alg, \balg) \ .
\label{algcorrassign}\eeq
The associated element in $ {\rm Hom}(\K_\bullet( \alg), 
\K_{\bullet}(\balg))$ is $g_\balg{}^!(f^\alg{}_*(-) \otimes_\calg
\xi)$. Note the analogy with eq.~\eqref{corrassign}.
In the commutative case $\calg=C_0(Z)$ of eq.~(\ref{commcorrdef}),
we represent $\KK_0(\calg,\calg)$ as ${\rm End}(\K^\bullet(Z))$ and
take $\xi$ to be tensor product with the vector bundle $E$.
\begin{proposition}
\label{prop:KKfromcorr}
Given separable {\Ca}s $\alg$ and $\balg$, any class in
$\KK_d(\alg,\balg)$ comes from a noncommutative correspondence as
above, in fact from one with trivial $\xi=1_\calg$.
\end{proposition}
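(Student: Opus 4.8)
The plan is to reduce to the even case $d=0$, and there to use Cuntz's picture of $\KK$-theory to turn $\kappa$ into a genuine pair of $*$-homomorphisms, from which a correspondence with $\xi=1_\calg$ is then assembled by an application of Example~\ref{ex:quasihomo}. So let $\kappa\in\KK_0(\alg,\balg)$. By the quasihomomorphism description of $\KK$-theory (see \cite{Higson} and \cite[\S17]{Black}), $\kappa$ is represented by a pair of $*$-homomorphisms $\phi_\pm\colon\alg\to\mathcal{M}(\balg\otimes\mathcal{K})$ with $\phi_+(a)-\phi_-(a)\in\balg\otimes\mathcal{K}$ for all $a\in\alg$; after adding to both $\phi_\pm$ one and the same absorbing homomorphism, which does not change the class, one may moreover assume that the induced maps $\bar\phi_\pm$ into the corona algebra $\mathcal{Q}(\balg\otimes\mathcal{K})=\mathcal{M}(\balg\otimes\mathcal{K})/(\balg\otimes\mathcal{K})$ are injective. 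Set
\[
\calg~:=~C^*\big(\phi_+(\alg)\cup(\balg\otimes\mathcal{K})\big)~\subseteq~\mathcal{M}(\balg\otimes\mathcal{K}) \ ,
\]
a separable $C^*$-algebra in which $\balg\otimes\mathcal{K}$ is an ideal and which also contains $\phi_-(\alg)$, since $\phi_-(a)=\phi_+(a)-(\phi_+(a)-\phi_-(a))$. Because $\bar\phi_+=\bar\phi_-$, the quotient $\calg/(\balg\otimes\mathcal{K})$ is identified with $\alg$ via $\bar\phi_-$, giving a \emph{split} short exact sequence $0\to\balg\otimes\mathcal{K}\xrightarrow{j}\calg\xrightarrow{q}\alg\to0$ with splitting $\phi_-$. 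By Example~\ref{ex:quasihomo} the inclusion $j$ is then $\K$-oriented, with Gysin class $j!\in\KK_0(\calg,\balg\otimes\mathcal{K})\cong\KK_0(\calg,\balg)$.

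Next I would take the correspondence $(\calg,1_\calg,f^\alg,g_\balg)$ with $f^\alg:=\phi_+\colon\alg\to\calg$ and $g_\balg\colon\balg\to\balg\otimes\mathcal{K}\xrightarrow{j}\calg$ the stabilisation followed by $j$; the latter is $\K$-oriented, being a $\KK$-equivalence composed with a $\K$-oriented morphism, and $g_\balg!=j!$ under the stability isomorphism. To check that this correspondence represents $\kappa$, I would use the decomposition of the unit coming from the splitting, $1_\calg=j!\otimes_{\balg\otimes\mathcal{K}}[j]_{\KK}+[q]_{\KK}\otimes_\alg[\phi_-]_{\KK}$ in $\KK_0(\calg,\calg)$ (this is exactly the content of the element $\pi_s$ of Example~\ref{ex:quasihomo}; cf.\ \cite[\S17.8]{Black}). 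Composing on the left with $[\phi_+]_{\KK}$ and using associativity of the Kasparov product together with $q\circ\phi_+=\mathrm{id}_\alg$ (valid since $\phi_+\equiv\phi_-$ modulo the ideal and $q$ is inverse to $\bar\phi_-$), one gets $\big([\phi_+]_{\KK}\otimes_\calg j!\big)\otimes_{\balg\otimes\mathcal{K}}[j]_{\KK}=[\phi_+]_{\KK}-[\phi_-]_{\KK}$; but $[\phi_+]_{\KK}-[\phi_-]_{\KK}=j_*(\kappa)$ is the class of the quasihomomorphism, and $j_*$ is injective because the sequence splits, so $[f^\alg]_{\KK}\otimes_\calg g_\balg!=[\phi_+]_{\KK}\otimes_\calg j!=\kappa$, which is precisely the element assigned to $(\calg,1_\calg,\phi_+,g_\balg)$. (If the sign convention for the quasihomomorphism class is the opposite one, interchange the roles of $\phi_+$ and $\phi_-$.)

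For odd $d$ I would invoke the standard suspension device: regard $\kappa\in\KK_1(\alg,\balg)$ as lying in the direct summand $\KK_1(\alg,\balg)\subset\KK_0\big(\alg,\balg\otimes C(S^1)\big)$, apply the even case to obtain a correspondence $(\calg,1_\calg,f^\alg,g')$ from $\alg$ to $\balg\otimes C(S^1)$ realising it, and then replace $g'$ by $g_\balg:=g'\circ(\mathrm{id}_\balg\otimes c^*)\colon\balg\to\balg\otimes C(S^1)\to\calg$, where $c^*\colon\bbc\to C(S^1)$ is the unital inclusion. The map $c^*$ is $\K$-oriented of degree $1$, coming from the $\spin^c$ circle (equivalently, from the Dirac operator on $S^1$), so $g_\balg$ is $\K$-oriented with $g_\balg!\in\KK_1(\calg,\balg)$, and the class assigned to $(\calg,1_\calg,f^\alg,g_\balg)$ is $\kappa$ composed with the ``integration over $S^1$'' map $\KK_0\big(\alg,\balg\otimes C(S^1)\big)\to\KK_1(\alg,\balg)$, which is the projection onto the summand in which $\kappa$ sits; hence one recovers $\kappa$.

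The step I expect to be the main obstacle is the verification in the second paragraph: one has to pin down the exact form of the ``resolution of the unit'' attached to the element $j!$ of Example~\ref{ex:quasihomo}, match the sign conventions for the class of a quasihomomorphism, and confirm that the various Kasparov products compose as claimed. The diagram calculus of \S\ref{sect:KKdiagrams} should render this essentially mechanical, but it is the only genuinely non-formal point. A secondary, purely technical point is arranging $\bar\phi_\pm$ to be injective so that $\calg/(\balg\otimes\mathcal{K})\cong\alg$ splittably and Example~\ref{ex:quasihomo} applies directly; this is routine Kasparov absorption.
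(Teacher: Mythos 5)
Your argument is correct, and its core is the same as the paper's: reduce to $d=0$ and realize the class by a correspondence $(\calg,1_\calg,\phi_+,j\circ\iota)$ built from a split short exact sequence $0\to\balg\otimes\mathcal{K}\xrightarrow{j}\calg\to\alg\to0$, with the $\K$-orientation of $j$ supplied by Example~\ref{ex:quasihomo}. The difference is one of packaging. The paper simply cites Cuntz's theorem \cite[Corollary~17.8.4]{Black}, which asserts outright that any $x\in\KK_0(\alg,\balg)$ has the form $f^*(j!)$ for such a split extension; you instead reprove that statement from the quasihomomorphism picture, constructing $\calg=C^*\bigl(\phi_+(\alg)\cup(\balg\otimes\mathcal{K})\bigr)$, splitting by $\phi_-$, and verifying $[\phi_+]_{\KK}\otimes_\calg j!=\kappa$ via the decomposition $1_\calg=j!\otimes_{\balg\otimes\mathcal{K}}[j]_{\KK}+[q]_{\KK}\otimes_\alg[\phi_-]_{\KK}$ together with the standard identity $j_*(\kappa)=[\phi_+]_{\KK}-[\phi_-]_{\KK}$ and injectivity of $j_*$ from split exactness; all of these ingredients are indeed standard (Blackadar \S 17.6--17.8), so there is no gap, only more work than strictly needed. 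Your treatment of odd degree also differs: the paper disposes of it with the terse remark that one can compose with the $\K$-oriented map $C_0(\bbr)\to\bbc$, whereas you route through $\balg\otimes C(S^1)$ and precompose the right leg with the unital inclusion $\bbc\to C(S^1)$, using the Dirac class of $S^1$ for the $\K$-orientation. This is arguably the more careful implementation inside the correspondence framework, since a correspondence needs an honest $*$-homomorphism for its right leg and there is no nonzero $*$-homomorphism $\bbc\to C_0(\bbr)$ (nor a nonzero projection in $C_0(\bbr)\otimes\mathcal{K}$) through which to factor it; your only cost is keeping track of the sign/orientation convention, which you flag.
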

\begin{proof}
This is basically a restatement of a theorem of Cuntz, found
in~\cite[Corollary~17.8.4]{Black}. It suffices to take $d=0$,
since one can compose with the $\K$-oriented map $C_0(\bbr)\to\bbc$
which switches parity of degree. 
The theorem just cited asserts that given $x\in \KK(\alg,\balg)$, we
can write it in the form $x = f^*(j!)$ where $f\colon \alg
\to \calg$, $j$ is the inclusion of
the ideal in a split short exact sequence
\[
\xymatrix{
0~\ar[r] & ~\balg\otimes  \mathcal{K}~
\ar[r]^(.6)j & ~\calg~ \ar[r]_q & ~\dalg~ \ar@/_/[l]_s \ar[r] & ~0 \ ,
}
\]
and $j!$ is defined as in Example \ref{ex:quasihomo}. Then $x$ is
defined by the correspondence
\[
\xymatrix{& \calg & \\
\alg\ar[ur]^f  &   &    \balg\ar[ul]_{j\circ\iota} \ , }
\]
with $\iota\colon \balg \to \balg\otimes \mathcal K$ the usual 
stabilisation map.
\end{proof}
\begin{remark}
Proposition~\ref{prop:KKfromcorr} cannot be used as a
way of giving a new definition of $\KK$-theory groups, as we assumed
we already knew what a $\K$-oriented map is in the course of proving it.
We could get around this by starting only with $\KK$ classes associated
to $*$-homomorphisms and to split exact sequences. The effect would
then be to define $\KK$ as in~\S\ref{sec:KKaxioms} as the universal
quotient of the additive stable homotopy category of separable {\Ca}s
satisfying split exactness~\cite[Theorem~22.2.1]{Black}. (See
also~\cite{Nest} for another variant of this using triangulated
categories.)
\end{remark}

In this setting, we would like to describe the composition
product in $\KK$-theory. It is a morphism
\[
\KK(\alg, \dalg) \times \KK(\dalg, \balg) ~\longrightarrow~
\KK(\alg,\balg)
\]
which should be described as the composition of correspondences,
subject to some conditions as in the case of manifolds. Given the
correspondences
\[
\xymatrix{& \calg_1 &  & \calg_2 & \\
\alg\ar[ur]^{f_1}  &   &    \dalg\ar[ul]_{g_1} \ar[ur]^{f_2} & &
\balg \ar[ul]_{g_2} \ , }
\]
we would like to define their \emph{composition} as the correspondence
\[
\xymatrix{ 
&& \calg  && \\
& \calg_1 \ar@{.>}[ur]&& \calg_2 \ar@{.>}[ul]&\\
\alg\ar[ur]_{f_1}  \ar@{.>}@/^2pc/[uurr]^{f_\alg}&   & \dalg \ar[ul]^{g_1}
\ar[ur]_{f_2}  & & \balg\ar[ul]^{g_2} \ar@{.>}@/_2pc/[uull]_{g_\balg}}
\]
where $\calg$ is something like a suitable pushout or colimit
of the diagram
\beq
\xymatrix{\calg_1 & & \calg_2 \ . \\ & \dalg \ar[ul]^{g_1}
\ar[ur]_{f_2} & }
\label{pushoutdiag}\eeq
  
Part of the problem is that the notion of pushout for arbitrary
diagrams (\ref{pushoutdiag}) in the category of {\Ca}s is the
amalgamated {\Ca}ic free product $\calg_1 *_\dalg \calg_2$, the
universal {\Ca} generated by $*$-representations of $\calg_1$ and
$\calg_2$ on the same Hilbert space such that representations agree on
$g_1(d)$ and $f_2(d)$ for $d\in\dalg$. It is not obvious that the
natural map $\calg_2 \to \calg_1 *_\dalg \calg_2$ should be
$\K$-oriented just because $g_1\colon \dalg \to \calg_1$ is
$\K$-oriented, nor is it obvious that the composition $g_\balg$ of the
maps $g_2$ and $\calg_2\to \calg$ should be $\K$-oriented. Thus the
commutative case of~\S\ref{GeomCorr} above provides a guide. In that
case, we needed to use transversality to define the composition of
geometric correspondences, and the correct algebra
to take for $\calg$ was an amalgamated \emph{tensor} product
$\calg_1 \otimes_\dalg \calg_2$ which corresponds to the fibred
product of spaces.

A partial answer to this problem is discussed in~\cite[\S8.5]{CMR},
which addresses the problem of composing quasihomomorphisms.
This corresponds to the case where $f_2$ is the identity above, and
$g_1$ and $g_2$ are split injections of ideals. Some other cases of
interest are as follows. If $g_1$ is the identity, \emph{i.e.},
the first correspondence is simply a $*$-homomorphism ${f_1} :\alg
\rightarrow\dalg$, then we can compose $f_1$ and $f_2$ to get a
composition correspondence
\[
\xymatrix{ & \dalg & \\
\alg\ar[ur]^{f_2\circ f_1}  &   & \balg\ar[ul]_{g_2} \ . }
\]
 
\begin{example} 
Let $\alpha$ be an action of $\bbR$ on a {\Ca} $\alg$.
Connes' ``Thom isomorphism''~\cite{Connes-Thom,FS} (see
also~\cite[Theorem~19.3.6]{Black} and~\cite[Chapter~10]{CMR}) defines
a $\KK$-equivalence between the suspension of $\alg$, $\alg\otimes
C_0(\bbR)$, and the crossed product $\alg\rtimes_\alpha \bbR$. This
may be realised by noncommutative correspondences associated to two
Kasparov elements. One of them, in
$\KK_1(\alg,\alg\rtimes_\alpha\bbR)$, is described
in~\cite[\S19.3]{Black}. The inverse element, basically described
in~\cite[Chapter~10]{CMR}, is the Kasparov element in
$\KK_1(\alg\rtimes_\alpha\bbR,\alg)$ associated to the extension
\[
0~\longrightarrow~ \alg\otimes\mathcal{K} ~\longrightarrow~
\mathcal{W} ~\longrightarrow~ \alg\rtimes_\alpha \bbR
~\longrightarrow~ 0 \ ,
\]
where $\mathcal W$ is Rieffel's ``Weiner-Hopf extension
algebra''~\cite{Rief}. Combined with Takai duality, this provides a
primitive example of T-duality in the context of a noncommutative
orbifold spacetime.
\end{example} 

\begin{example}
Other noncommutative correspondences come from continuous trace  
$C^*$-algebras, discussed in~\S\ref{TwistedD} in the context of
D-branes in a background $H$-flux. They have been studied
in~\cite{RR}, and in the context of T-duality in~\cite{BEM,MR} and
other papers in the series by the same authors. We recall here the
simplest of these examples. Let
$$
\begin{CD}
\bbT @>>> \,  E \\
&& @VV \pi V \\
&& M \end{CD}
$$
be a principal circle bundle and $H$ a closed, integral three-form on
$E$. Then there is a continuous trace $C^*$-algebra $\CT(E, H)$ with
spectrum  equal to $E$ and Dixmier-Douady invariant equal to $[H] \in
\H^3(E, \bbZ)$.

Define $\widehat{E}$ to be the {T-dual} of $E$, which is another
oriented $\bbT$-bundle over $M$
\begin{equation}
\begin{CD}
\widehat{\bbT} @>>> \widehat{E} \\
&& @VV\widehat{\pi} V     \\
&& M \end{CD}
\end{equation}
with first Chern class $c_1(\,\widehat{E}\,) = \pi_* [H] $, where
$\pi_*  : \H^k(E,\bbZ) \to \H^{k-1}(M,\bbZ)$ denotes the pushforward
maps on cohomology.  The Gysin sequence for $E$ enables one to define
a T-dual $H$-flux $[\,\widehat{H}\,]\in H^3(\,\widehat{E},\bbZ)$
satisfying
$
c_1(E) = \widehat{\pi}_*[\, \widehat{H}\,] \,,
$
and such that $[H] = [\,\widehat{H}\,]$ in $\H^3(E\times_M \widehat{E},
\bbZ)$ where $E\times_M \widehat{E}$ is the fibred product.
Then there is a noncommutative correspondence
    $$
\xymatrix @=4pc { & \big(\CT(E\times_M \widehat{E}\,) \,,\, \xi\big) 
& \\
\CT(E, H)\ar[ur]^{f}  &   &    \CT\big(\widehat{E}\,,\, 
\widehat{H}\,\big)\ar[ul]_{g} }
$$
where $\xi$ is an analogue of the Poincar\'e line bundle, which  
determines an invertible element $$\alpha ~\in~ \KK_1\big(\CT(E, H)\,,\,
\CT(\,\widehat{E}, \widehat{H}\,)\big) \ , $$ and hence a
$\KK$-equivalence~\cite[\S6.2]{BMRS}.
\label{CTEx}\end{example}

Thus we see that bivariant K-theory is a convenient setting for
T-duality. But it also yields a potentially more refined version of
T-duality, which was originally presented as an isomorphism between
the K-theory groups of the spacetime $\alg$ and its T-dual $\balg$. In
Example~\ref{CTEx} above, $\alg= \CT(E, H)$ and $\balg =
\CT(\,\widehat{E}, \widehat{H}\,)$. On the other hand, when viewed as
a noncommutative correspondence, we have observed above and
in~\cite[\S6.2]{BMRS} that T-duality gives a $\KK$-equivalence $\alpha
\in \KK_1(\alg, \balg)$. We have seen that such a $\KK$-equivalence
determines  an isomorphism of K-theory groups in
${\sf{Isom}}(\K_\bullet(\alg), \K_{\bullet+1}(\balg))$. However, $
\alpha$ may contain more refined information, as the universal
coefficient theorem of Rosenberg and Schochet~\cite{RS87} states
that there is a split short exact sequence of abelian groups given by
\beq
\qquad 0 ~\to~{\Ext_\bbZ}\bigl(\K_{\bullet+1}(\alg)\,,\,
\K_\bullet(\balg)\bigr) ~\to~ \KK_\bullet\bigl(\alg\,,\,
\balg\bigr)~\to ~ {\Hom_\bbZ}\bigl(\K_\bullet(\alg)\,,\,
\K_\bullet(\balg) \bigr)~\to~ 0 \ .
\label{UCTKK}\eeq
This holds for a large class of $C^*$-algebras $\alg, \balg$ that
includes $\CT(E, H)$ and its T-duals. In the exact sequence
(\ref{UCTKK}), $\KK_\bullet(\alg, \balg)$ denotes the $\bbZ_2$-graded
group defined as the direct sum $\KK_\bullet(\alg, \balg)=\KK_0(\alg,
\balg) \oplus \KK_1(\alg, \balg)$, and $\K_\bullet(\alg)$ similarly
denotes $\bbZ_2$-graded groups. The first map (on the left) in the
exact sequence has degree 1 and the second map has degree 0. The
refined information when T-duality is viewed as a $\KK$-equivalence
$\alpha$ is contained in the group ${\Ext_\bbZ}(\K_{\bullet+1}(\alg),
\K_\bullet(\balg))$.

\begin{example}\label{BCC}
Let $X$ be a compact space, let $\alg$ be a unital {\Ca},
and let $\cV$ be a finitely generated projective module over
$C(X,\alg)\cong C(X)\otimes \alg$. By a slight modification of
the Swan-Serre theorem (noticed by Mishchenko and Fomenko), $\cV$
can be identified with the module of sections of an $\alg$-vector
bundle over $X$. (This is defined like an ordinary vector bundle,
except that $\alg$ plays the role of the scalars and the transition
functions of the bundle are required to be $\alg$-linear.) The
projective module $\cV$ gives a class
$[\cV]\in \K_0(C(X)\otimes \alg)=\KK(\bC, C(X)\otimes \alg)$.
If in addition $X$ is a closed spin$^c$ manifold, then $C(X)$
is a strong PD algebra, and so with a chosen fundamental class
$\Delta\in\KK(C(X)\otimes C(X),\complex)$ the composition product
$$[\cV]\otimes_{C(X)}\Delta~\in~ \KK\big(C(X)\,,\, \alg\big)$$ is
defined. This element can be viewed as coming from a very special kind
of noncommutative correspondence, essentially like that of
eq.~\eqref{commcorrdef} but with the ordinary bundle $E$ replaced by
the module $\cV$. We can also view this noncommutative correspondence
as
\beq
\label{eq:BCcorr}
\xymatrix @=4pc { & \big(C(X)\otimes \alg \,,\, \cV\big)
& \\
C(X) \ar[ur]^{\Id_{C(X)}\otimes 1}  &   &    \,\alg \ ,
\ar[ul]_{1\otimes\Id_\alg }}
\eeq
where the $\K$-orientation of the map on the right comes from the
spin$^c$ structure on $X$. Because the algebra $C(X)$ is commutative,
we can let $C(X)$ act on both sides on $\cV$ and thus view
$\cV$ not just as a class $[\cV]\in \KK(\bC,C(X)\otimes \alg)$
but also as a class $[[\cV]]\in \KK(C(X),C(X)\otimes \alg)$, as 
required for a noncommutative correspondence.

This situation arises frequently in connection with the Baum-Connes
conjecture. Let $\Gamma$ be a torsion-free discrete group such that
there is a model for its classifying space $B\Gamma$ which is a
compact spin$^c$ manifold. Then there is a canonical flat bundle
$\Upsilon :=  E\Gamma \times_\Gamma C^*_r (\Gamma)$ over $B\Gamma$
such that the space of all continuous sections $\cV:=
C_0(B\Gamma,\Upsilon)$ is a finitely generated projective
$C(B\Gamma)\otimes C^*_r(\Gamma)$-module. Then
\beq\label{BCcorr1}
\xymatrix @=4pc { & \big(C(B\Gamma) \otimes C^*_r(\Gamma) \,,\,
  [[\cV]]\big)
& \\
C(B\Gamma)\ar[ur]  &   &   C^*_r(\Gamma)  \ar[ul] }
\eeq
is a noncommutative correspondence of the sort (\ref{eq:BCcorr}). If
$\Gamma$ has the (strong) Dirac-dual Dirac property, then the
noncommutative correspondence (\ref{BCcorr1}) determines a $ \KK$-equivalence
between the algebras $C(B\Gamma)$ and $ C^*_r(\Gamma)$, \emph{i.e.},
the Baum-Connes conjecture is true. Groups $\Gamma$ that satisfy these
hypotheses include cocompact torsion-free discrete subgroups of
connected Lie groups whose noncompact semisimple part is $SO(n,1)$ or
$SU(n,1)$.
\end{example}

\begin{remark}
If algebras $\alg$ and $\balg$ are $\KK$-equivalent, then by the
diagram calculus for $\KK$-theory we can produce, by inflation, a
$\KK$-equivalence between $\calg\otimes\alg$ and $\calg\otimes\balg$
for any other algebra $\calg$. Starting from Examples~\ref{FMtransf}
and~\ref{BCC} above, one can thus deduce the parametrized
Fourier-Mukai transform and the parametrized Baum-Connes conjecture as
further examples of noncommutative correspondences.
\end{remark}

\end{document}